\newtheorem{theorem}{Theorem}[section]
\newtheorem{cor}[theorem]{Corollary}
\newtheorem{lemma}[theorem]{Lemma}
\newtheorem{prop}[theorem]{Proposition}
\theoremstyle{definition} 
\newtheorem{definition}[theorem]{Definition}
\newtheorem{example}[theorem]{Example}
\theoremstyle{remark} 
\newtheorem{remark}[theorem]{Remark}
\numberwithin{equation}{section}
\DeclareMathOperator{\tr}{Tr}
\let\originalleft\left
\let\originalright\right
\renewcommand{\left}{\mathopen{}\mathclose\bgroup\originalleft}
\renewcommand{\right}{\aftergroup\egroup\originalright}
\newcommand{\bb}[1]{\mathbb{#1}}
\newcommand{\cl}[1]{\mathcal{#1}}
\newcommand{\inner}[2]{\left\langle {#1},{#2} \right\rangle}
\newcommand{\norm}[1]{\left\| #1 \right\|}
\def\@email#1#2{%
 \endgroup
 \patchcmd{\titleblock@produce}
  {\frontmatter@RRAPformat}
  {\frontmatter@RRAPformat{\produce@RRAP{*#1\href{mailto:#2}{#2}}}\frontmatter@RRAPformat}
  {}{}
}%
\begin{document}


\title{Transitive Nonlocal Games}

\author{Prem Nigam Kar}
\affiliation{Section for Algorithms, Logic and Graphs, DTU Compute, Technical University of Denmark}

\author{Jitendra Prakash}
\affiliation{Department of Mathematics, University of New Orleans}

\author{David Roberson}
\affiliation{Section for Algorithms, Logic and Graphs, DTU Compute, Technical University of Denmark}
\affiliation{Center for Mathematics of Quantum Theory (QMATH), University of Copenhagen}

\date{\today}

\begin{abstract}
We study a class of nonlocal games, called transitive games, for which the set of perfect strategies forms a semigroup.
We establish several interesting correspondences of bisynchronous transitive games with the theory of compact quantum groups. In particular, we associate a quantum permutation group with each bisynchronous transitive game and vice versa. We prove that the existence of a $C^*$-strategy, the existence of a quantum commuting strategy, and the existence of a classical strategy are all equivalent for bisynchronous transitive games. We then use some of these correspondences to establish necessary and sufficient conditions for some classes of correlations, that arise as perfect strategies of transitive games, to be nonlocal.
\end{abstract}

\maketitle


\section{Introduction}\label{sec1}
Nonlocal games have been studied under several different names such as Bell scenarios and quantum pseudo-telepathy games across various disciplines. Their structure and formulation is simple, yet their applications are profound. In \cite{qiso1}, the authors introduced and studied the graph isomorphism game. The $(G,G)$-isomorphism game is also known as the $G$-automorphism game. The authors of \cite{qiso1} also proved that the set of all perfect quantum strategies for the graph automorphism game is closed under composition. Hence, these quantum strategies imitate the composability of graph automorphisms, which are precisely the perfect deterministic strategies for the $(G,G)$-isomorphism game. In \cite{lupini, soltan}, the authors also established several connections of nonlocal games with the theory of compact quantum groups and semigroups.

In \cite{wang}, Wang introduced quantum symmetry groups of finite spaces. However, this formulation was in terms of compact quantum groups. Hence, these quantum symmetry groups are not actual sets of transformations or functions like classical groups. In \cite{musto}, the authors introduced a notion of quantum bijections of finite classical and quantum sets. Quantum bijections of finite classical sets correspond to the perfect quantum strategies of the $K_n$-automorphism game as discussed in \cref{sec:transitive-games}. Since all perfect quantum strategies for bisynchronous games can be shown to be a subset of the set of perfect quantum strategies of this game, it makes sense to study nonlocal games with composable strategies, as they can be viewed as quantum analogues of classical permutation groups. 

Hence, we study \emph{transitive games}, which are nonlocal games with composable perfect strategies, and explore connections of bisynchronous transitive games with quantum permutation groups. These games were first studied in \cite{soltan}, and some connections of these games with compact quantum semigroups were established. However, this paper did not venture out to explore the perfect strategies for these games, which we do extensively in this paper. 

We prove that if a bisynchronous transitive game $\mathscr{G}$ has a perfect $C^*$-algebra strategy, then $C^*(\mathscr{G})$ is a quantum permutation group. Moreover, given any quantum permutation group $\mathbb{G}$ acting on $\{1,2,...,n\}$ we prove that there is a bisynchronous transitive game $\mathscr G$ such that $\mathbb{G}$ induces the same orbitals as $C^*(\mathscr{G})$. Moreover, $C^*(\mathscr{G})$ is the largest quantum permutation group inducing the same orbitals as $\mathbb{G}$ and every other quantum permutation group inducing the same orbitals as $\mathbb{G}$ can be obtained as a quotient of $C^*(\mathscr G)$. Hence, these transitive games can be treated as tools to define quantum permutation groups. Using the existence of the Haar state on a compact quantum group, we shall prove that the existence of a perfect $C^*$-strategy implies the existence of a quantum commuting strategy for bisynchronous transitive games, thus working towards  \cite[Problem 5.4]{paulsen}.
We shall then move on to studying coherent and transitive correlations, which are correlations that arise as perfect strategies of transitive games. We shall give necessary and sufficient conditions for these correlations to be nonlocal.

\subsection{Outline}
The article is organised as follows. In \cref{sec:prelims}, we begin with a brief introduction to nonlocal games, and compact quantum groups and semigroups. In \cref{sec:transitive-games}, we formally introduce transitive games and prove some preliminary results. In \cref{sec:trans-bisyn-games} we explore $C^*$-algebraic strategies for transitive games, where we establish several connections of these transitive games with quantum permutation groups, and finally in \cref{sec:coherent-transitive} we use these connections to give necessary and sufficient conditions for some families of correlations to be nonlocal.

\section{Preliminaries}\label{sec:prelims}
For an $n\in \bb N$, we let $[n]:=\{1,\dots,n\}$. For a nonempty finite set $W$, let $S_W$ be the \emph{symmetric group} on $W$. The identity element of the group $S_W$ is the identity function $\mathrm{id}_W:W\to W$.

The Hilbert spaces in this paper are over the complex field, and their inner products are linear in the first component and conjugate-linear in the second component. The algebra of all bounded linear operators from a Hilbert space $\cl H$ to itself is denoted by $\cl B(\cl H)$. A finite sequence $(P_i)_{i=1}^n$ of operators in $\cl B(\cl H)$ is called a \emph{projection-valued measure} (abbreviated \emph{PVM}) if each $P_i$ is a projection, that is, $P_i = P_i^* = P_i^2$, and their sum is the identity operator $I_{\cl H}$. A \emph{quantum state} is simply a unit vector in a Hilbert space.

All graphs considered in this paper are simple and without loops, unless stated otherwise. Let $G = (V(G), E(G))$ be a graph, where $V(G)$ and $E(G)$ denote the set of vertices and edges of $G$, respectively. For $x,y\in V(G)$, we write $x\sim y$ if $(x,y)\in E(G)$; $x\nsim y$ if $(x,y)\notin E(G)$; and $x\nsimeq y$ if $x$ and $y$ are distinct and nonadjacent.

\subsection{Nonlocal games and their strategies}\label{subsec:nonlocal-games}
A \emph{nonlocal game} $\mathscr G$ is a 6-tuple $\mathscr G = (X, Y, A, B, \pi, V)$, where $X, Y, A, B$ are nonempty finite sets,  $\pi: X\times Y\to [0,1]$ is a probability distribution, and $V: A \times B \times X \times Y \to \{0, 1\}$ is a function called the \emph{predicate}. Such a game $\mathscr G$ involves two players, Alice and Bob, and a referee. The sets $X$ and $Y$ are question sets for Alice and Bob, respectively, and the sets $A$ and $B$ are their corresponding answer sets. The probability vector $\pi$ determines the probability with which the referee sends each pair of questions $(x,y)\in X\times Y$ to the players. The function $V$ determines whether a pair of answers $(a,b)$ wins the game for Alice and Bob when given the questions $(x,y)$: they win if and only if $V(a,b,x,y) \equiv V(a,b|x,y) = 1$. The key property of nonlocal games is that the players cannot communicate once a round of the game begins.

Alice and Bob may employ different kinds of strategies based on the resources they have. The simplest kind of strategies are deterministic. A \emph{deterministic strategy} for Alice and Bob for the game $\mathscr G$ is a pair of functions $(f_A,f_B)$ where $f_A: X \to A$ and $f_B: Y \to B$. Upon receiving the questions $x$ and $y$, Alice and Bob return $a = f_A(x)$ and $b = f_B(y)$, respectively, to the referee. We say that the deterministic strategy $(f_A,f_B)$ is \emph{perfect} (or \emph{winning}) if $V(f_A(x),f_B(y)|x,y) = 1$ for all $(x,y)\in X\times Y$. Players may also use \emph{probabilistic} strategies utilizing \emph{local randomness} or \emph{shared randomness}. Such strategies are essentially convex combinations of deterministic strategies and, therefore, cannot yield better success rates. Deterministic strategies and their convex combinations are grouped together as \emph{local strategies} or \emph{classical strategies}. However, there may be better probabilistic strategies if the players utilise quantum resources. Let us first precisely define what we mean by a probabilistic strategy.

A \emph{probabilistic strategy} or a \emph{correlation} for the game $\mathscr G$ can be thought of as a collection of probability distributions $\{p(a,b|x,y): (a,b)\in A \times B\}$ for each $(x,y)\in X\times Y$. Here, $p(a,b|x,y)$ is the probability that Alice and Bob return answers $a\in A$ and $b\in B$, given that they received questions $x\in X$ and $y\in Y$, respectively. Such a correlation $\left(p(a,b|x,y)\right)_{a\in A, b\in B, x\in X, y\in Y}$ has some obvious properties: $p(a,b|x,y) \geq 0$ for all $a, b, x, y$, and $\sum_{a,b}p(a,b|x,y) = 1$ for all $x,y$. Moreover, a correlation must satisfy the following \emph{nonsignalling conditions}, stemming from the fact that the players cannot communicate during the game: \begin{align*}
 p_A(a|x) &:=  \sum_{b}p(a,b|x,y) = \sum_{b} p(a,b|x,y^{\prime}), \quad \text{ for all } a, x, y, y^{\prime}, \text{ and, } \\
 p_{B}(b|y) &:= \sum_{a}p(a,b|x,y) = \sum_{a} p(a,b|x^{\prime},y), \quad \text{ for all } b, y, x, x^{\prime}.
\end{align*} This general class of correlations, denoted by $C_{ns}(|X|,|Y|,|A|,|B|)$ is called the set of \emph{nonsignalling} correlations. (Here $|X|$ means the size of the set $X$.) For brevity's sake, we shall simply denote it by $C_{ns}$, and the same convention goes for other subsets of $C_{ns}$ that we will introduce next. 

Correlations arising via local strategies are called \emph{local} correlations, and the set of local correlations is denoted by $C_{loc}$.

In a \emph{quantum strategy} for the game $\mathscr G$, Alice and Bob share a quantum state $\psi \in \cl H_A\otimes \cl H_B$, where $\cl H_A$ and $\cl H_B$ are finite-dimensional Hilbert spaces and perform \emph{quantum measurements} on the shared quantum state. Such quantum measurements are given by PVMs: $(P_{x,a})_{a\in A}\subseteq \cl B(\cl H_A)$ corresponding to each question $x\in X$ for Alice, and $(Q_{y,b})_{b\in B} \subseteq \cl B(\cl H_B)$ corresponding to each question $y \in Y$ for Bob. Then, the quantum strategy $(\psi\in \cl H_A\otimes \cl H_B, \{P_{x,a}:x \in X,a \in A\}, \{Q_{y,b}:y \in Y,b \in B\})$ defines the following correlation: \begin{equation}\label{eqdef:q-corr}
p(a,b|x,y) = \inner{\left(P_{x,a}\otimes Q_{y,b}\right)\psi}{\psi}.
\end{equation} Correlations arising through quantum strategies are termed \emph{quantum} correlations, and the set of quantum correlations is denoted by $C_q$.

There is also a more general "quantum" strategy called the \emph{quantum commuting strategy} described as follows. Alice and Bob share a quantum state $\psi\in \cl H$, where $\cl H$ is a (possibly, infinite-dimensional) Hilbert space, quantum measurements for Alice given by PVMs $(P_{x,a})_{a\in A}\subseteq \cl B(\cl H)$ for each $x\in X$, quantum measurements for Bob given by PVMs $(Q_{y,b})_{b\in B}\subseteq \cl B(\cl H)$ for each $y\in Y$, with the condition that $P_{x,a}Q_{y,b} = Q_{y,b}P_{x,a}$ for all $a,b,x,y$. The corresponding correlation is: \begin{equation}\label{eqdef:qc-corr}
p(a,b|x,y) = \inner{P_{x,a}Q_{y,b}\psi}{\psi}.
\end{equation} Correlations arising through quantum commuting strategies are termed \emph{quantum commuting} correlations, and the set of quantum commuting correlations is denoted by $C_{qc}$. In general, quantum commuting strategies yield a bigger class of correlations than the quantum strategies \cite{Slofstra19,Slofstra20,DPP}. In fact, there exist quantum commuting strategies which \emph{cannot} even be approximated to an arbitrary degree by quantum strategies, that is, $\overline{C_q} \subsetneq C_{qc}$ \cite{Ji:2020apg}.

A correlation $(p(a,b|x,y))$ is called \emph{perfect} (or \emph{winning}) for the game $\mathscr G$ if $V(a,b|x,y) = 0$ implies $p(a,b|x,y) = 0$. In this paper we are only concerned with perfect correlations of a game $\mathscr G$, and therefore we will omit the probability distribution $\pi$ from the description of the game.

A game $\mathscr G = (X, Y, A, B, V)$ is called \emph{synchronous} if $X = Y =: I$, $A = B =: O$, and $V(a,b|x,x) = 0$ for all $x\in I$ and $a \neq b$. Additionally, $\mathscr G$ is called \emph{bisynchronous} if we have $V(a,a|x,y) = 0$ for all $a\in O$  and $x \neq y$. For brevity, we shall describe $\mathscr G$ simply as $\mathscr G = (I, O, V)$.

Graph-theoretic games make up most of the well-known examples of synchronous games. Consider the following two examples of synchronous games for graphs $G = (V(G),E(G))$ and $H = (V(H),E(H))$.

\emph{The $(G, H)$-homomorphism game.} Here, the question set is $I = V(G)$ and the answer set is $O = V(H)$. Alice and Bob need to convince the referee that they have a graph homomorphism from $G$ to $H$. The predicate $V$ is \begin{equation}
    V(a,b|x,y) = \begin{cases}
        0 &\text{ if } a \neq b \text{ and } x = y, \\
        0 &\text{ if } a \nsim b \text{ and } x \sim y, \\
        1 &\text{ otherwise.}
    \end{cases}
\end{equation} 

This game is synchronous, and it can be verified that it admits a perfect local strategy if and only if there is a graph homomorphism $\phi:G\to H$. A perfect quantum strategy (resp., quantum commuting strategy) for the $(G,H)$-homomorphism game is called a \emph{quantum homomorphism (resp., quantum commuting homomorphism) from $G$ to $H$}.

\emph{The $(G, H)$-isomorphism game.} Alice and Bob need to convince the referee that the graphs $G$ and $H$ are isomorphic. Thus, the predicate $V$ is $V(a,b|x,y) = 1$ if and only if $\mathrm{rel}(a,b) = \mathrm{rel}(x,y)$, where $\mathrm{rel}(x,y) \in \{=,\sim,\nsimeq\}$. This game is bisynchronous, and it can be verified that it admits a perfect local strategy if and only if $G$ is isomorphic to $H$. We say that $G$ is \emph{quantum isomorphic} (resp. \emph{quantum commuting isomorphic}) to $H$ if there exists a perfect quantum strategy (resp., quantum commuting strategy) for the $(G, H)$-isomorphism game. The $(G,G)$-isomorphism game will be called the $G$-automorphism game. The $G$-automorphism game always has a perfect deterministic strategy, e.g., the identity function.

Introduced in \cite{helton}, the \emph{$*$-algebra} $\mathcal{A}(\mathscr G)$ of a synchronous game $\mathscr G = (I, O, V)$ is defined to be the unital $*$-algebra generated by $\{u_{x,a}:x\in I, a \in O\}$ with the following relations: 
\begin{align}\label{alstatcon}
\begin{aligned}
u_{x,a} = u_{x,a}^* = u_{x,a}^2, &\quad\text{ for all } x\in I, a\in O, \\
\sum_{a\in O} u_{x,a} = 1, &\quad\text{ for all } x\in I, \\
u_{x,a}u_{y,b} = 0 &\quad\text{ if } V(a,b|x,y) = 0.
\end{aligned}\end{align} In general, $\mathcal{A}(\mathscr G)$ could be trivial. If $\mathcal{A}(\mathscr G)$ is nontrivial, the game $\mathscr G$ is said to have a \emph{perfect algebraic strategy}. Even if $\mathcal{A}(\mathscr G)$ is nontrivial, it may not be realizable as a $C^*$-algebra. (For a very brief review of $C^*$-algebras we refer the reader to \cref{subsec:CompactQuantumGroups}.) The game $\mathscr G$ has a \emph{perfect $C^*$-strategy} if there is a universal $C^*$-algebra with generators $\{u_{x,a}:x\in I, a \in O\}$ satisfying Relations~\ref{alstatcon}. We shall denote this universal $C^*$-algebra as $C^*(\mathscr G)$. The following theorem shows the relevance of the $*$-algebra $\cl A(\mathscr G)$ of a game $\mathscr G$.

\begin{theorem}[{\cite[Theorem 3.2]{helton}}]\label{t21}
Let $\mathscr G = (I, O, V)$ be a synchronous game. Then $\mathscr G$ has a perfect	\begin{enumerate}
\item[(a)] local strategy if and only if there exists a representation $\pi:\mathcal{A}(\mathscr G)\to \bb C$;
\item[(b)] quantum strategy if and only if there exists a representation $\pi:\mathcal{A}(\mathscr G) \to \cl B(\cl H)$ for some finite-dimensional Hilbert space $\cl H\neq 0$, in which case, the correlation $p$ induced by the perfect quantum strategy is given by $p(a,b|x,y) = \frac{1}{\dim{\mathcal{H}}}\tr(\pi(u_{x,a}u_{y,b}))$;
\item[(c)] quantum commuting strategy if and only if there exists a unital $*$-homomorphism $\pi: \mathcal{A}(\mathscr G) \to \cl B$ for some $C^*$-algebra $\cl B$ equipped with a faithful tracial state $\tau\in \cl B^*$, in which case, the correlation $p$ induced by the perfect quantum commuting strategy is given by $p(a,b|x,y) = \tau(\pi(u_{x,a}u_{y,b}))$, for all $a,b\in O, x,y\in I$.
\end{enumerate}
\end{theorem}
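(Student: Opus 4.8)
The plan is to treat this as the standard structural theorem relating synchronous games to tracial representations of the game algebra $\cl A(\mathscr G)$, proved by combining two ingredients: an elementary correspondence between one-dimensional representations and deterministic strategies for part (a), and the ``synchronous correlations are tracial'' phenomenon for parts (b) and (c). The unifying observation is that for a synchronous game the diagonal condition $V(a,b|x,x)=0$ for $a\neq b$ forces any perfect strategy to be governed by a \emph{trace}: the measurement operators become projections and the winning probabilities become trace pairings $\tau(u_{x,a}u_{y,b})$, after which the perfectness requirement $p(a,b|x,y)=0$ whenever $V(a,b|x,y)=0$ translates, via faithfulness of $\tau$, into the defining orthogonality relations of $\cl A(\mathscr G)$.

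For part (a) I would argue directly. A unital $*$-representation $\pi:\cl A(\mathscr G)\to\bb C$ sends each $u_{x,a}$ to a scalar that is a projection in $\bb C$, hence to $0$ or $1$, and the relation $\sum_a u_{x,a}=1$ forces, for each $x$, exactly one answer $a$ with $\pi(u_{x,a})=1$. Defining $f(x)$ to be that answer gives a function $f:I\to O$, and $\pi(u_{x,a})\pi(u_{y,b})=0$ whenever $V(a,b|x,y)=0$ says precisely that $(f,f)$ is a perfect deterministic strategy; conversely such an $f$ yields such a $\pi$. Here synchronicity is exactly what guarantees $f_A=f_B$, since $V(f_A(x),f_B(x)|x,x)=1$ forces $f_A(x)=f_B(x)$. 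Finally a perfect local correlation exists if and only if a perfect deterministic one does, since local correlations are convex combinations of deterministic ones and such a combination is perfect exactly when every strategy in its support is.

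For parts (b) and (c) the core is the structural lemma. Starting from a perfect quantum strategy $(\psi,\{P_{x,a}\},\{Q_{y,b}\})$ on $\cl H_A\otimes\cl H_B$, I would first reduce to the case that $\psi$ has full Schmidt rank (restricting to the supports of the reduced density operators), so that $\tau(T):=\inner{(T\otimes I)\psi}{\psi}$ is a \emph{faithful} state on the relevant matrix algebra. The synchronicity data, namely $p(a,b|x,x)=0$ for $a\neq b$ together with $\sum_a p(a,a|x,x)=1$, is then used to show two things: that $\tau$ is in fact a \emph{trace} on the algebra generated by the $P_{x,a}$, and that Bob's projections may be replaced by Alice's via the canonical conjugation coming from the Schmidt decomposition, giving $p(a,b|x,y)=\tau(P_{x,a}P_{y,b})$. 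With $p$ now a trace pairing, perfectness gives $\tau(P_{x,a}P_{y,b})=0$ whenever $V(a,b|x,y)=0$; a short computation with the trace property shows $\tau\big((P_{x,a}P_{y,b})^*(P_{x,a}P_{y,b})\big)=0$, and faithfulness then yields $P_{x,a}P_{y,b}=0$. Hence the projections satisfy Relations~\ref{alstatcon} and generate a finite-dimensional representation $\pi$ of $\cl A(\mathscr G)$. For the converse, given a nonzero finite-dimensional representation $\pi:\cl A(\mathscr G)\to\cl B(\cl H)$, I would run the construction backwards, using the normalized trace $\tfrac{1}{\dim\cl H}\tr$ and the maximally entangled vector on $\cl H\otimes\cl H$ to produce a perfect quantum strategy realizing $p(a,b|x,y)=\tfrac{1}{\dim\cl H}\tr(\pi(u_{x,a}u_{y,b}))$. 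Part (c) follows the same template with $\cl H$ possibly infinite-dimensional and a commuting pair $(P_{x,a},Q_{y,b})$ on a single space; a faithful tracial state on a $C^*$-algebra replaces the normalized matrix trace, and the GNS construction for that trace supplies the Hilbert space and cyclic tracial vector used to build the quantum commuting strategy.

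The step I expect to be the main obstacle is the structural lemma itself: extracting a genuine trace from the synchronicity condition and identifying Bob's measurement with the conjugate of Alice's. In finite dimensions this is a careful but manageable argument with the Schmidt decomposition and reduced density operators; the real delicacy is in the quantum commuting case of part (c), where one must work abstractly with the tracial state on a $C^*$-algebra, handle possible non-faithfulness by compressing to the support projection of the trace before invoking the orthogonality-from-faithfulness argument, and pass through the GNS representation to recover a bona fide commuting operator model. Securing faithfulness of the trace (or reducing to the faithful case) is exactly what legitimizes the implication from $p=0$ to the operator identity $u_{x,a}u_{y,b}=0$, and this is the crux on which the whole equivalence rests.
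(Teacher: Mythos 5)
This theorem is not proved in the paper itself---it is imported verbatim from \cite[Theorem 3.2]{helton}---and your proposal correctly reconstructs the standard argument of that source: the bijection between one-dimensional representations of $\mathcal{A}(\mathscr G)$ and perfect deterministic strategies for (a), the synchronous-implies-tracial structural lemma (Schmidt-rank reduction, the flip-over identity $(P_{x,a}\otimes I)\psi=(I\otimes Q_{x,a})\psi$, faithfulness plus the trace computation $\tau(P_{y,b}P_{x,a}P_{y,b})=\tau(P_{x,a}P_{y,b})$ forcing $P_{x,a}P_{y,b}=0$) for the forward directions of (b) and (c), and the maximally entangled state, respectively GNS, constructions for the converses. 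This matches the cited proof in both decomposition and key lemmas, so there is nothing substantive to flag.
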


We turn our focus to correlations associated with question set $I$ and answer set $O$. Let $t\in \{loc,q,qc,ns\}$. A correlation $(p(a,b|x,y))\in C_t$ is called \emph{synchronous} if it satisfies $p(a,b|x,x) = 0$ for all $x \in I$ and $a \neq b.$ A \emph{bisynchronous correlation} is a synchronous correlation that satisfies $p(a,a|x,y) = 0$ for all $x \neq y$ and $a \in O$. We denote the subset of all synchronous (resp., bisynchronous) correlations in $C_t$ by $C^s_t$ (resp., $C^{bs}_t$). It is easy to see that perfect correlations of synchronous (resp., bisynchronous) games are synchronous (resp., bisynchronous).

\subsection{Compact quantum semigroups and groups}\label{subsec:CompactQuantumGroups}
We collect some definitions and results about compact quantum groups, mostly from  \cite[Chapter 1]{neshveyev}. We begin with a brief overview of $C^*$-algebras. A \emph{$C^*$-algebra} $\cl A$ is a Banach $*$-algebra whose norm satisfies $\norm{a^*a} = \norm{a}^2$ for all $a \in \cl A$. In this paper, all algebras considered are unital, and the unit is denoted by $1$. A linear functional $\varphi\in \cl A^*$ is called a \emph{state} if $\varphi(1) = 1$ and $\varphi(a^*a) \geq 0$ for all $a\in \cl A$. Additionally, $\varphi$ is \emph{faithful} if $\varphi(a^*a) = 0$ implies $a = 0$ for all $a \in \cl A$; and $\varphi$ is \emph{tracial} if $\varphi(ab) = \varphi(ba)$ for all $a,b \in \cl A$. A \emph{representation} of $\cl A$ is a unital $*$-homomorphism $\pi:\cl A\to \mathcal{B(H)}$ for some Hilbert space $\mathcal{H}$. The \emph{Gelfand-Naimark theorem} states that every $C^*$-algebra $\cl A$ is $*$-isomorphic to a norm-closed $*$-subalgebra of $\mathcal{B(H)}$ for some Hilbert space $\cl H$ \cite[Corollary II.6.4.10]{Blackadar-OA}.

Let $\cl G = \{g_i\}_{i \in \Omega}$ be a nonempty finite set of generators and $\cl R$ be a nonempty finite set of relations on $\cl G$. A \emph{representation} of $(\cl G|\cl R)$ is a set $\{T_i\}_{i \in \Omega}\subseteq \cl B(\cl H)$ (for some Hilbert space $\cl H$) satisfying the relations in $\cl R$, which also defines a representation of the free $*$-algebra $\langle \cl G \rangle$ generated by $\cl G$. For $a\in \langle \cl G\rangle$, let $\norm{a} := \sup \left\lbrace \norm{\pi(a)}: \pi \text{ is a representation of } (\cl G|\cl R) \right\rbrace.$ If $\max_{i\in \Omega}\norm{g_i}<\infty$, then $\norm{.}$ defines a seminorm on $\langle \cl G \rangle$. Quotienting out seminorm-zero elements and then taking completion, yields a $C^*$-algebra denoted by C$^*(\cl G|\cl R)$ and called the \emph{universal $C^*$-algebra generated by $(\cl G|\cl R)$}. A suitable reference on universal $C^*$-algebras is \cite[Section II.8.3]{Blackadar-OA}.

In this paper, a tensor product of two $C^*$-algebras is always meant to be their \emph{minimal} tensor product; see \cite[II.9.1.3]{Blackadar-OA} for a definition. The minimal tensor product is associative: $(\cl A\otimes \cl B) \otimes \cl C \simeq \cl A \otimes (\cl B \otimes \cl C)$ for $C^*$-algebras $\cl A, \cl B$ and $\cl C$ \cite[II.9.2.6]{Blackadar-OA}. Let $M_n(\cl A)$ be the set of all $n\times n$ matrices with entries from $\cl A$. We shall freely use the identification $\bb M_n \otimes \cl A \simeq M_n(\cl A)$, via, $\sum_{i,j=1}^n E_{i,j}\otimes a_{i,j} \leftrightarrow [a_{i,j}]_{i,j=1}^n$, where $\{E_{i,j}:i,j\in [n]\}$ is the set of canonical matrix units of $\bb M_n$, the space of all $n\times n$ complex matrices.

\begin{definition}
A pair, $\mathbb{G} = (\cl A,\Delta)$, where $\cl A$ is a $C^*$-algebra and $\Delta:\cl A\to \cl A\otimes \cl A$ is a unital $*$-homomorphism (called \emph{comultiplication}), is called a \emph{compact quantum semigroup} if the following holds:
\begin{enumerate}
\item[(a)] (\emph{Coassociativity}) $(\Delta \otimes \mathrm{id}) \Delta = (\mathrm{id} \otimes \Delta) \Delta$.
\end{enumerate} If the following \emph{cancellation property} also holds, we refer to $\mathbb{G}$ as a \emph{compact quantum group}.
\begin{enumerate}
\item[(b)] the spaces $(1 \otimes \cl A)\Delta(\cl A) := \mathrm{span}\{(1\otimes a)\Delta(b):a,b\in \cl A \}$ and  $(\cl A \otimes 1)\Delta(\cl A)$ are dense in $\cl A \otimes \cl A$.
\end{enumerate} For a compact quantum group $\mathbb{G} = (\cl A, \Delta)$ we shall write $C(\mathbb{G}) := \cl A$.
\end{definition}

Let $G$ be a compact topological group and $C(G)$ be the commutative $C^*$-algebra of continuous complex-valued functions on $G$. Since $C(G) \otimes C(G) \simeq C(G \times G)$, we may define $\Delta: C(G) \to C(G)\otimes C(G)$ by $\Delta(f)(g, h) = f(gh)$. Then, $(C(G), \Delta)$ is a compact quantum group. Moreover, all compact quantum groups $\mathbb{G}$ where the $C^*$-algebra $C(\mathbb G)$ is commutative, are of this type. Similarly, if $G$ is a compact topological semigroup, $C(G)$ is a compact quantum semigroup in a canonical manner, and all compact quantum semigroups $\mathbb{G} = (C(\mathbb{G}), \Delta)$ where $C(\mathbb{G})$ is commutative, are obtained in this way.

The \emph{convolution} of functionals $\omega_1, \omega_2 \in C(\bb G)^*$ is $\omega_1 * \omega_2 = (\omega_1 \otimes \omega_2)\Delta$. The following theorem states that every compact quantum group $\bb G$ admits a unique Haar state.

\begin{theorem}[{\cite[Theorem 1.2.1]{neshveyev}}]
For a compact quantum group $\mathbb{G}$, there exists a unique state $h\in C(\mathbb{G})^*$ (the \emph{Haar state}) such that for all $\omega \in C(\mathbb{G})^*$, we have $\omega*h = h*\omega = \omega(1)h$.
\end{theorem}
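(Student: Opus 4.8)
The plan is to construct the Haar state by an averaging (mean-ergodic) argument and then to upgrade the resulting idempotent invariant state to a genuinely bi-invariant one using the cancellation property, which is exactly the hypothesis separating compact quantum groups from mere semigroups. First I would record two structural facts. Coassociativity makes the convolution $\omega_1 * \omega_2 = (\omega_1 \otimes \omega_2)\Delta$ associative, so the convolution powers $\omega^{*k}$ are well defined; and since $\Delta$ is a unital $*$-homomorphism, the convolution of two states is again a state. The state space $S(\cl A) \subseteq \cl A^*$ of $\cl A = C(\bb G)$ is convex and weak$^*$-compact by Banach--Alaoglu, and for a fixed state $\psi$ the maps $\omega \mapsto \omega * \psi$ and $\omega \mapsto \psi * \omega$ are weak$^*$-continuous, because $(\psi * \omega)(a) = \omega\big((\psi \otimes \mathrm{id})\Delta(a)\big)$ and $(\omega*\psi)(a) = \omega\big((\mathrm{id}\otimes \psi)\Delta(a)\big)$ are evaluations of $\omega$ at fixed elements of $\cl A$.

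For existence, I would fix any state $\varphi$ and form the Ces\`aro means $h_N = \tfrac1N \sum_{k=1}^N \varphi^{*k}$, each of which is a state. Extracting a weak$^*$-limit point $h$ of $(h_N)$ and using the telescoping identity $\varphi * h_N = h_N + \tfrac1N\big(\varphi^{*(N+1)} - \varphi\big)$ together with the weak$^*$-continuity above yields $\varphi * h = h * \varphi = h$. A short induction ($\varphi^{*k}*h = h$ for all $k$) then gives the idempotency $h * h = h$. Next I would introduce the right averaging map $E := (\mathrm{id}\otimes h)\Delta : \cl A \to \cl A$; coassociativity and $h\circ E = h$ show that $E$ is a unital completely positive idempotent, $E^2 = E$. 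The key reformulation is that full right invariance ``$\omega * h = \omega(1)h$ for all $\omega$'' is equivalent to the single algebraic statement $E(a) = h(a)\,1$ for all $a \in \cl A$, i.e. to $\ran E = \bb C\, 1$; and symmetrically for left invariance via $F := (h\otimes \mathrm{id})\Delta$.

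The heart of the proof --- and the step I expect to be the main obstacle --- is showing that the fixed-point space $\ran E = \{a : E(a) = a\}$ is exactly the scalars. This is precisely where the cancellation property enters: one uses the density of $(\cl A \otimes 1)\Delta(\cl A)$ and $(1 \otimes \cl A)\Delta(\cl A)$ in $\cl A \otimes \cl A$ to propagate the invariance of $h$ from the special vectors produced by the averaging to all of $\cl A$, forcing every $E$-fixed element to be a multiple of $1$. Without cancellation an idempotent invariant state still exists, but its averaging projection can have a large range, so this is exactly the juncture at which ``semigroup'' must be promoted to ``group.'' Concretely, I would approximate an arbitrary $1 \otimes a$ (resp.\ $a \otimes 1$) by sums $\sum_i (1\otimes b_i)\Delta(c_i)$, apply the slice maps $\mathrm{id}\otimes h$ and $h \otimes \mathrm{id}$, and combine the two resulting identities with $h * h = h$ to collapse $E(a) - h(a)\,1$ to zero.

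Finally, uniqueness is immediate once bi-invariance is in hand: if $h$ and $h'$ are both bi-invariant states, then applying right-invariance of $h$ to $\omega = h'$ gives $h' * h = h'(1)h = h$, while applying left-invariance of $h'$ to $\omega = h$ gives $h' * h = h(1)h' = h'$, whence $h = h'$. The passage from states to an arbitrary $\omega \in C(\bb G)^*$ of the identities $\omega * h = h * \omega = \omega(1)h$ then follows by writing a general functional as a linear combination of states and using linearity of convolution in each slot.
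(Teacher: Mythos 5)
The paper itself does not prove this theorem; it quotes it from the reference \cite[Theorem 1.2.1]{neshveyev}, so your proposal has to be measured against the standard (Van Daele) argument, whose outer skeleton you reproduce correctly: the Ces\`aro averaging with weak$^*$ compactness, the weak$^*$-continuity of convolution in each slot, the reformulation of full right invariance as $\ran E = \bb C\,1$ for $E = (\mathrm{id}\otimes h)\Delta$, and the two-line uniqueness argument are all sound. The fatal problem is the step you yourself flag as the heart. It is simply false that, for the Ces\`aro limit $h$ of a \emph{single, arbitrary} state $\varphi$, the fixed-point space of $E$ consists of the scalars, even in the presence of cancellation. Counterexample: let $\bb G$ be a nontrivial \emph{classical} compact group $G$ (so cancellation certainly holds) and take $\varphi = \delta_e$, evaluation at the identity, or more generally the Haar state of a proper closed subgroup $H \le G$. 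Then $\varphi * \varphi = \varphi$, every Ces\`aro mean of its convolution powers equals $\varphi$, so your construction returns $h = \varphi$; the map $E$ is the conditional expectation onto functions constant on left $H$-cosets (for $\varphi = \delta_e$ it is the identity map of $C(G)$), its range is far larger than $\bb C\,1$, and $h$ is not the Haar state. Hence no argument can close your step: invariance under the one state $\varphi$, idempotency $h*h=h$, and cancellation together do not force $\ran E = \bb C\,1$.

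The actual role of cancellation is different and more local. One first proves Van Daele's key lemma: if $h * \omega = h$ for states $h, \omega$, then $h * \rho = \rho(1)\,h$ for every positive functional $\rho$ dominated by a multiple of $\omega$ (the proof is a Cauchy--Schwarz estimate in which the density of $(\cl A \otimes 1)\Delta(\cl A)$ and $(1 \otimes \cl A)\Delta(\cl A)$ is what lets slice maps interact with products), together with the symmetric left-handed statement. This still yields invariance only under functionals dominated by $\omega$, so one must then quantify over all states: given finitely many states $\mu_1,\dots,\mu_n$, run the Ces\`aro construction on $\omega = \frac{1}{n}\sum_i \mu_i$; since each $\mu_i \le n\,\omega$, the lemma makes the resulting state bi-invariant under every $\mu_i$. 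Consequently the weak$^*$-compact sets $K_\mu$ of $\mu$-bi-invariant states have the finite intersection property, so their total intersection contains a state $h$, which is bi-invariant under all states and, by decomposing an arbitrary $\omega \in C(\bb G)^*$ into a linear combination of states, satisfies $\omega * h = h * \omega = \omega(1)h$. Your existence argument needs to be restructured along these lines; your Ces\`aro step and your uniqueness step can be kept verbatim.
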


The following proposition gives a nontrivial example of a compact quantum group.

\begin{prop}[{\cite[Proposition 1.1.4]{neshveyev}}]\label{prop:gen-cqg-example}
Let $\cl A$ be the universal $C^*$-algebra generated by $\{u_{i,j}:i,j\in [n]\}$ with relations defined by the matrices $[u_{i,j}]_{i,j=1}^n$ and $[u_{i,j}^*]_{i,j=1}^n$ being invertible. Let $\Delta: \cl A \to \cl A \otimes \cl A$ be the unital $*$-homomorphism satisfying \begin{equation}\label{eqdef:comult-map}
    \Delta(u_{i,j}) = \sum_{k=1}^n u_{i,k}\otimes u_{k,j}, \qquad i,j\in [n].
\end{equation} Then, $(\cl A, \Delta)$ is a compact quantum group.
\end{prop}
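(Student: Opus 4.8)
\emph{Overview.} The plan is to verify the three defining properties of a compact quantum group in turn: that $\Delta$ is a well-defined unital $*$-homomorphism, coassociativity, and the two density (cancellation) conditions. The first two are essentially forced by universality and a direct computation; the cancellation conditions are the substantive part. Throughout I write $u = [u_{ij}] \in M_n(\cl A)$ and $\bar u = [u_{ij}^*]$ for the two invertible matrices, and take the existence of the universal $C^*$-algebra $\cl A$ as given by the construction recalled in \cref{subsec:CompactQuantumGroups}.

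\emph{Existence of $\Delta$ and coassociativity.} To build $\Delta$ I would invoke the universal property of $\cl A$: a unital $*$-homomorphism out of $\cl A$ exists as soon as the proposed images of the generators satisfy the defining relations. So I set $v_{ij} := \sum_k u_{ik}\otimes u_{kj}$ and $V := [v_{ij}] \in M_n(\cl A\otimes\cl A)$, and I must check that $V$ and $[v_{ij}^*]$ are invertible. Using leg-numbering in $M_n \otimes \cl A \otimes \cl A \simeq M_n(\cl A\otimes\cl A)$, one has $V = u_{12}\,u_{13}$, where $u_{12} = u\otimes 1$ and $u_{13}$ is the image of $u$ under the unital injective $*$-homomorphism $M_n(\cl A)\to M_n(\cl A\otimes\cl A)$ placing $\cl A$ in the third leg. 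Both factors are ampliations of the invertible $u$, hence invertible, so $V$ is invertible; likewise $[v_{ij}^*] = \bar u_{12}\,\bar u_{13}$ is invertible since $\bar u$ is. Thus the relations hold and $\Delta$ exists by universality, automatically unital and $*$-preserving. For coassociativity, since $(\Delta\otimes\mathrm{id})\Delta$ and $(\mathrm{id}\otimes\Delta)\Delta$ are both unital $*$-homomorphisms on $\cl A$, which is generated by the $u_{ij}$, it suffices to compare them on generators; a direct computation shows both send $u_{ij}$ to $\sum_{k,l} u_{il}\otimes u_{lk}\otimes u_{kj}$, using associativity of the minimal tensor product.

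\emph{Cancellation.} This is the crux. I would pass to the dense unital $*$-subalgebra $\cl A_0 \subseteq \cl A$ generated by the entries of $u$ \emph{and} of $u^{-1}$ (and their adjoints); it is dense since it contains the entries of $u$, and it suffices to show that the algebraic spans $(\cl A_0 \otimes 1)\Delta(\cl A_0)$ and $(1\otimes \cl A_0)\Delta(\cl A_0)$ each equal the algebraic tensor product $\cl A_0 \odot \cl A_0$, which is dense in $\cl A\otimes\cl A$. As a warm-up, invertibility of $u$ produces the pure tensors $1\otimes u_{lj} = \sum_i (w_{li}\otimes 1)\Delta(u_{ij})$ with $w = u^{-1}$, while $a\otimes 1 = (a\otimes 1)\Delta(1)$; together with $1\otimes u_{lj}^* = \sum_i (z_{li}\otimes 1)\Delta(u_{ij}^*)$ from $\bar u^{-1} = [z_{li}]$, this already places every $a\otimes u_{lj}$ and $a\otimes u_{lj}^*$ inside $(\cl A_0\otimes 1)\Delta(\cl A_0)$. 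To promote this to all of $\cl A_0\odot\cl A_0$ one must move elements across $\Delta$, and the clean device is the Hopf $*$-algebra structure on $\cl A_0$: the counit $\varepsilon(u_{ij}) = \delta_{ij}$ and an antipode $S$ which the antipode axiom forces to satisfy $[S(u_{ij})] = u^{-1}$, i.e. $S(u_{ij}) = (u^{-1})_{ij}$. Granting that $S$ is a well-defined bijective anti-homomorphism of $\cl A_0$, the canonical (Galois) maps $a\otimes b \mapsto (a\otimes 1)\Delta(b)$ and $a\otimes b \mapsto (1\otimes a)\Delta(b)$ are bijective on $\cl A_0\odot\cl A_0$, with inverses written explicitly from $S$ (and $S^{-1}$) via the Sweedler expansion of $\Delta$; surjectivity of the two maps yields both density conditions.

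\emph{Main obstacle.} The genuine difficulty lies entirely in the cancellation step, and within it in the construction of the antipode in the merely-invertible (non-unitary) setting. Unlike the case where $u$ is unitary, here $S(u_{ij})$ need not equal $u_{ji}^*$, and one must check that $u_{ij}\mapsto (u^{-1})_{ij}$ extends to a well-defined algebra anti-homomorphism of $\cl A_0$ that is compatible with the $*$-structure and invertible as a map. This is precisely the point at which one uses that \emph{both} $u$ and $\bar u$ are invertible rather than $u$ alone: invertibility of $\bar u$ is what allows $S$ to be defined consistently on the generators $u_{ij}^*$ and to be bijective, after which the Galois-map argument proceeds formally.
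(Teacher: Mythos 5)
The paper itself does not prove this proposition; it quotes it from the cited reference (Neshveyev--Tuset, Proposition 1.1.4), so your attempt has to be measured against the proof given there. Your first two steps are fine: factoring $[v_{ij}] = u_{12}u_{13}$ and $[v_{ij}^*] = \bar{u}_{12}\bar{u}_{13}$ to get invertibility and hence existence of $\Delta$ by universality, and checking coassociativity on generators, are both correct. (These steps are only needed because the paper restates the result in universal-$C^*$-algebra form; in the reference, $\cl A$ and $\Delta$ are hypotheses, which also sidesteps the awkward fact that invertibility relations alone do not bound the norms of generators.) The genuine gap is in the cancellation step, and you have located it yourself without closing it: your argument rests entirely on ``granting that $S$ is a well-defined bijective anti-homomorphism of $\cl A_0$,'' and that is the whole difficulty. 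There is no visible route to such an $S$ here: $\cl A_0$ is the image of a free $*$-algebra modulo the null ideal of the universal $C^*$-seminorm, an ideal with no explicit description, so you cannot define $S$ on generators and ``check the relations''; and the universal property of $\cl A$ produces only unital $*$-homomorphisms, whereas the antipode is an anti-homomorphism that is not $*$-preserving (the antipode axiom forces $S(u_{ij}) = (u^{-1})_{ij}$ but $S(u_{ij}^*) = (\bar{u}^{-1})_{ij}$, and over a noncommutative algebra $\bar{u}^{-1} \neq \overline{u^{-1}}$ in general). Worse, in the standard theory the Hopf $*$-algebra structure on the dense subalgebra of matrix coefficients is a \emph{consequence} of $(\cl A,\Delta)$ being a compact quantum group, obtained via the Haar state and corepresentation theory; invoking it to prove the cancellation property is therefore circular as stated.

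The irony is that your warm-up computation already contains the entire proof used in the cited reference; what is missing is simply to iterate it rather than reach for the antipode. Let $X := \mathrm{span}\,(\cl A \otimes 1)\Delta(\cl A)$, and note that $X$ is a left $(\cl A\otimes 1)$-module and a right $\Delta(\cl A)$-module. Your identity $1\otimes u_{lj} = \sum_i \left( (u^{-1})_{li}\otimes 1 \right) \Delta(u_{ij})$ then upgrades to an induction: if $1\otimes a \in X$, then
\begin{equation*}
1\otimes a u_{lj} \;=\; \sum_i \left( (u^{-1})_{li}\otimes 1 \right) \bigl( (1\otimes a)\Delta(u_{ij}) \bigr) \in X,
\qquad
1\otimes a u_{lj}^* \;=\; \sum_i \left( (\bar{u}^{-1})_{li}\otimes 1 \right) \bigl( (1\otimes a)\Delta(u_{ij}^*) \bigr) \in X,
\end{equation*}
since $(1\otimes a)\Delta(u_{ij}) \in X\Delta(\cl A) \subseteq X$ and left multiplication by $\cl A \otimes 1$ preserves $X$. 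Starting from $1\otimes 1 = \Delta(1) \in X$, this gives $1 \otimes m \in X$ for every word $m$ in the $u_{ij}$ and $u_{ij}^*$, hence $a\otimes m \in X$ for all $a$; as such words span a dense $*$-subalgebra, $\overline{X} = \cl A \otimes \cl A$. The second condition follows symmetrically by working with $\mathrm{span}\,\Delta(\cl A)(1\otimes \cl A)$ (the adjoint of the space you need), using right multiplication by $1\otimes (u^{-1})_{jl}$ and $1\otimes(\bar{u}^{-1})_{jl}$. This confirms your diagnosis that invertibility of \emph{both} $u$ and $\bar{u}$ is what drives cancellation, but no counit, antipode, or Galois-map machinery is required --- and avoiding them is exactly what makes the argument non-circular.
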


As an application of \cref{prop:gen-cqg-example}, for a nonempty finite set $W$, we define the \emph{quantum permutation group} $S_W^+ = (\cl A, \Delta)$ as follows. Let $\cl A$ be the universal $C^*$-algebra generated by $\{u_{i,j}:i,j\in W\}$ and relations defined by the matrix $U = [u_{i,j}]_{i,j\in W}$ being a \emph{magic unitary}, that is, $u_{i,j} = u_{i,j}^* = u_{i,j}^2$ for each $i,j\in W$, and the matrix $U$ is a unitary $U^*U = UU^* = \mathrm{diag}(1,\dots,1)$. The comultiplication $\Delta$ is given by $\Delta(u_{i,j}) = \sum_{k\in W} u_{i,k}\otimes u_{k,j}$ for all $i,j\in W$. We usually denote $S_n^+ := S_{[n]}^+$.

For a compact quantum group $\mathbb{G}$, a matrix $U =  [u_{i,j}]_{i,j=1}^n \in M_n(C(\bb G))$ (for some $n\in \bb N$) is called a \emph{finite-dimensional unitary representation} (or concisely a \emph{representation}) if $U$ is unitary and Equation~\eqref{eqdef:comult-map} holds. Let $U$ and $V$ be representations of $\bb G$ on $\cl H_U$ and $\cl H_V$, respectively. An operator $T:\cl H_U \to \cl H_V$ is said to \emph{intertwine} $U$ and $V$ if $(T\otimes 1)U = V(T\otimes 1)$. The representations $U$ and $V$ are called \emph{unitarily equivalent} if the \emph{space of intertwiners}, $\mathrm{Mor}(U,V)$, contains a unitary element. A representation $U$ is called \emph{irreducible} if $\mathrm{Mor}(U,U) = \bb C$. We denote the set of representations of $\mathbb{G}$ by $\mathrm{Rep}(\mathbb{G})$, and $\mathrm{Irr}(\mathbb{G})$ will denote the set of irreducible ones. Every representation of $\mathbb{G}$ is a direct sum of irreducible ones \cite[Proposition 1.3.5 and Theorem 1.3.7]{neshveyev}. 

Let $U = \sum_{i,j=1}^n E_{i,j}\otimes u_{i,j}\in \bb M_n \otimes C(\bb G)$ be a representation of a compact quantum group $\bb G$. For each $\xi,\eta\in \bb C^n$, the elements \begin{equation*}
    (\omega_{\xi,\eta}\otimes \mathrm{id})U = \sum_{i,j=1}^n \inner{E_{i,j}\xi}{\eta} u_{i,j}
\end{equation*} are called the \emph{matrix coefficients} of $U$. The linear span $\bb C[\bb G]$ of matrix coefficients of all representations of $\bb G$, is a dense $*$-subalgebra of $C(\mathbb{G})$, which is invariant under comultiplication of $\mathbb{G}$. Hence, the pair $(\bb C[\bb G], \Delta)$ is a Hopf $*$-algebra \cite[Theorem 1.6.4]{neshveyev}. The definition of a Hopf $*$-algebra is given next. The linear maps $\epsilon$ and $S$ required in \cref{def:Hopf-*-algebra} exist and satisfy $(\mathrm{id}\otimes \epsilon) U = 1$ and $(\mathrm{id}\otimes S) U = U^{-1}$.

\begin{definition}\label{def:Hopf-*-algebra}
A pair, $(A,\Delta)$, where $A$ is a $*$-algebra and $\Delta: A \to A \otimes A$ is a unital $*$-homomorphism, is called a \emph{Hopf $*$-algebra} if: \begin{enumerate}
    \item[(a)] $(\Delta \otimes \mathrm{id}) \Delta = (\mathrm{id}\otimes \Delta) \Delta$,
    \item[(b)] there exists a linear map $\epsilon: A \to \bb C$ (called a \emph{counit}) such that $(\epsilon \otimes \mathrm{id}) \Delta(a) = (\mathrm{id} \otimes \epsilon) \Delta(a) = a,$ for all $a\in A$, and, 
    \item[(c)] there exists a linear map $S: A \to A$ (called an \emph{antipode}) such that $m(S\otimes \mathrm{id}) \Delta(a) = m(\mathrm{id} \otimes S) \Delta(a) = \epsilon(a)1,$ for all $a\in A$, and where $m:A\otimes A\to A$ is the multiplication map $a\otimes b \mapsto ab$.
\end{enumerate}
\end{definition}

From \cref{def:Hopf-*-algebra}, it follows that $\epsilon$ and $S$ are uniquely determined, $\epsilon$ is a unital $*$-homomorphism, $S$ is an anti-homomorphism, and $\epsilon S = \epsilon$.

For instance, the Hopf $*$-algebra $\cl O(S_W^+)$ of the compact quantum group $S_W^+$, is the universal $*$-algebra generated by $\{u_{i,j}:i,j\in W\}$ such that $u_{i,j} = u_{i,j}^* = u_{i,j}^2$ for all $i,j\in W$ and the matrix $[u_{i,j}]_{i,j\in W}$ is a magic unitary.

Let $X$ be a compact Hausdorff space and let $C(X)$ be the space of all continuous complex-valued functions on $X$. Through Gelfand transform, the correspondence $X \leftrightarrow C(X)$ is an equivalence between the category of compact Hausdorff spaces and continuous maps and the category of commutative unital $C^*$-algebras and unital $*$-homomorphisms \cite[Theorem II.2.2.6]{Blackadar-OA}. Motivated by this, the notion of a compact quantum space is given as a dual to the category of unital $C^*$-algebras. The action of a compact quantum group on a compact quantum space will be of interest to us:

\begin{definition}[{\cite[Section~2]{comer}}, {\cite[Definition 7]{soltan_quantum_2009}}]
A \emph{compact quantum space} $\bb X$ is identified with a unital $C^*$-algebra $C(\bb X)$. An \emph{action} of a compact quantum semigroup $\mathbb{G}$ on $\bb X$ is a unital $*$-homomorphism $\alpha : C(\bb X)\to C(\bb X) \otimes C(\mathbb{G})$ satisfying:

\begin{enumerate}
\item[(a)] $(\alpha \otimes \mathrm{id})\alpha = (\mathrm{id} \otimes \Delta) \alpha$
\end{enumerate}

If $\mathbb{G}$ is a compact quantum group, the \emph{action} of $\mathbb{G}$ on $\mathbb{X}$ is an action of $\mathbb{G}$ on $\mathbb{X}$ as a compact quantum semigroup satisfying the following additional property:
\begin{enumerate}
\item[(b)] $\mathrm{span}\{(1\otimes a)\alpha(f): a\in C(\bb G), f\in C(\bb X)\}$ is dense in $C(\bb X) \otimes C(\mathbb{G})$.
\end{enumerate}
\end{definition}

As an example, if $W$ is a nonempty set, the space $C(W)$ of all continuous complex-valued functions on $W$ may be expressed as the universal $C^*$-algebra generated by $\{e_i\}_{i\in W}$ and satisfying relations $e_i = e_i^* = e_i^2$ for all $i\in W$, $e_ie_j = 0$ if $i\neq j$, and $\sum_{i\in W} e_i = 1$. The quantum permutation group $S_W^+$ naturally acts on the space $C(W)$ via the left action: $\alpha(e_a) = \sum_{x\in W} e_x \otimes u_{x,a}$, for all $a\in W$.

\begin{definition}[\cite{banica05}]
Let $G = (V(G), E(G))$ be a graph with adjacency matrix $A_G$. The \emph{quantum automorphism group} of the graph $G$ is the compact quantum group $\mathrm{Qut}(G)$ where $C(\mathrm{Qut}(G))$ is the universal $C^*$-algebra generated by $\{u_{i,j}:i,j\in V(G)\}$ and with relations being that the matrix $U = [u_{i,j}]_{i,j \in V(G)}$ is a magic unitary and $A_G U = UA_G$, or equivalently that $U$ is a magic unitary and $u_{x,a}u_{y,b} = 0$ if. The comultiplication $\Delta$ is given by $\Delta(u_{i,j}) = \sum_{k\in V(G)}u_{i,k} \otimes u_{k,i}$ for all $i,j\in V(G)$.
\end{definition}

The quantum automorphism group $\mathrm{Qut}(G)$ of a graph $G$ is an example of a quantum permutation group acting on a set, which we define as follows. A \emph{quantum permutation group acting on a nonempty set $W$}, is compact quantum group $\bb G$ such that there exists a surjective unital $*$-homomorphism  $\pi$ from $C(S_W^+)$ onto $C(\bb G)$. In this case, the elements $\{\pi(u_{i,j}):i,j\in W\}$ are called the \emph{canonical generators} of $C(\bb G)$, and the matrix $[\pi(u_{i,j})]_{i,j\in W}$ is called the \emph{fundamental representation} of $\mathbb{G}$. The \emph{universal action} of $\bb G$ on $W$ is the action $u: C(W) \to C(W) \otimes C(\bb G)$ given by $e_i \to \sum_{j=1}^n e_j \otimes u_{j,i}$. Let $[v_{i,j}]_{i,j\in W}$ be the magic unitary generating $C(\bb G)$. In \cite{lupini}, the following equivalence relations were defined on $W$ and $W \times W$ respectively: \begin{enumerate}
	\item{$i \sim_1 j$ if $v_{i,j} \neq 0$;}
	\item{$(i,i^{\prime})\sim_2 (j,j^{\prime})$ if $v_{i,j}v_{i^{\prime},j^{\prime}}\neq0$}
\end{enumerate} The partitions of $W$ (resp., $W\times W$) with respect to $\sim_1$ (resp., $\sim_2$) are known as the \emph{orbits} (resp., \emph{orbitals}) of $\bb G$. The quantum orbitals of $\bb G$ form a coherent configuration \cite[Theorem 3.10]{lupini}, where a coherent configuration is defined as follows.

\begin{definition}\label{def:coherent-config}
A \emph{coherent configuration} on a finite nonempty set $W$ is a partition $\cl R = \{R_i: i\in \cl I\}$ of $W\times W$ where the classes $R_i$ satisfy the following properties: \begin{enumerate} 
\item[(a)] There is a subset $\cl J \subseteq \cl I$ such that $\{R_j:j\in \cl J\}$ is a partition of the diagonal $\{(x,x):x\in W\}$.
\item[(b)] For each $R_i$, its converse $R_i^{\mathrm{op}}:= \{(y,x):(x,y)\in R_i\}$ is again in $\cl R$.
\item[(c)] For all $i,j,k\in \cl I$ and $(x,z)\in R_k$, the number $p_{i,j}^k := |\{y\in W: (x,y)\in R_i \text{ and } (y,z)\in R_j\}|$ does not depend on $x$ and $z$.
\end{enumerate}
\end{definition}

Let $G$ be a permutation group acting on the set $W$. Then $W$ together with the partition $\cl R$ of $W\times W$ into orbits of $G$ (acting on $W\times W$ via $g(x, y) = (gx, gy))$ is a coherent configuration. A coherent configuration obtained in this way is called \emph{Schurian}.

\section{Transitive games}\label{sec:transitive-games}
We now introduce the class of nonlocal games of our interest. Any set $W$ appearing in this section is always assumed to be finite and nonempty.

\begin{definition}\label{def:transitive-game}
A game $\mathscr G = (X,Y,A,B,V)$ is called \emph{transitive} if $X = Y = A = B =: W$, and for each $(a, b, x, y) \in W^4$, if there exist $r,s\in W$ such that $V(a,b|r,s) = V(r,s|x,y) = 1$, then $V(a,b|x,y)=1$. Or equivalently (via contrapositive), whenever $V(a,b|x,y) = 0$ we have $V(a,b|r,s)V(r,s|x,y) = 0$ for all $r,s\in W$.
Such a transitive game $\mathscr G$ will be written $\mathscr G = (W, V)$.
\end{definition}

The following lemma is straightforward and another justification for the adjective ``transitive''.

\begin{lemma}\label{prop:game-from-relation}
Let $W$ be a set and let $\sim$ be a transitive relation over $W\times W$. Define a game $\mathscr G = (W,V)$ by $V(a,b|x,y) = 1$ if and only if $(a,b)\sim (x,y)$, for all $a,b,x,y\in W$. Then, $\mathscr G$ is transitive.

Conversely, let $\mathscr G = (W,V)$ be a transitive game. Then the relation $\sim$ over $W\times W$ defined by $(a,b)\sim (x,y)$ if and only if $V(a,b|x,y) = 1$, for all $a,b,x,y\in W$, is a transitive relation.
\end{lemma}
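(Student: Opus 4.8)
The plan is to prove both directions by directly unwinding the definition of a transitive game (\cref{def:transitive-game}) and the definition of a transitive relation, observing that these two notions are essentially the same statement phrased in different languages. Recall that a relation $\sim$ on $W \times W$ is \emph{transitive} if whenever $(a,b) \sim (r,s)$ and $(r,s) \sim (x,y)$, we have $(a,b) \sim (x,y)$.

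For the forward direction, suppose $\sim$ is a transitive relation over $W \times W$ and define $V(a,b|x,y) = 1 \iff (a,b) \sim (x,y)$. To verify that $\mathscr{G} = (W,V)$ is transitive in the sense of \cref{def:transitive-game}, I would take $(a,b,x,y) \in W^4$ and suppose there exist $r,s \in W$ with $V(a,b|r,s) = V(r,s|x,y) = 1$. By the definition of $V$ this means exactly $(a,b) \sim (r,s)$ and $(r,s) \sim (x,y)$, so transitivity of $\sim$ gives $(a,b) \sim (x,y)$, i.e. $V(a,b|x,y) = 1$, which is precisely the required implication.

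For the converse, suppose $\mathscr{G} = (W,V)$ is a transitive game and define $(a,b) \sim (x,y) \iff V(a,b|x,y) = 1$. To check $\sim$ is transitive, I would assume $(a,b) \sim (r,s)$ and $(r,s) \sim (x,y)$ for some $a,b,r,s,x,y \in W$; translating through the definition of $\sim$, this says $V(a,b|r,s) = 1$ and $V(r,s|x,y) = 1$. Applying the transitivity condition of \cref{def:transitive-game} to the tuple $(a,b,x,y)$ with this choice of $r,s$ yields $V(a,b|x,y) = 1$, that is $(a,b) \sim (x,y)$, as desired.

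I do not expect any genuine obstacle here, since the lemma is a translation exercise: the condition defining a transitive game is an exact restatement of relational transitivity once one identifies the predicate value $1$ with membership in the relation. The only point requiring slight care is bookkeeping of the role of the intermediate pair $(r,s)$, which plays the part of the ``middle'' element in the composition of the relation with itself; matching up the quantifiers (the game condition quantifies over $r,s$ existentially in its hypothesis, while relational transitivity quantifies universally over the chain) is what makes the equivalence clean, and one should state explicitly which $r,s$ is used in each direction to avoid confusion.
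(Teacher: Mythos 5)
Your proof is correct and is exactly the definition-unwinding argument the paper has in mind; the paper itself omits the proof, declaring the lemma straightforward, and your verification of both directions (matching the existential witness $(r,s)$ in the game condition with the middle pair in relational transitivity) fills it in precisely as intended.
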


We first show that the set of perfect strategies of a transitive game, if nonempty, has a semigroup structure. Recall that a \emph{semigroup} is a set $S$ together with a function $.: S\times S\to S$ which satisfies the associative property: for all $x,y,z \in S$, we have $x.(y.z) = (x.y).z$. 

The following lemma is easily established.

\begin{lemma}[{\cite[Proposition 4.5]{ortiz}}, {\cite[Lemma 6.5]{PSSTW}}, {\cite[Proposition 3.4]{paulsen}}]\label{lem:product-corr-type} Let $C_{ns}$ be the set of nonsignalling correlations with identical question and answer sets $W$. For $t\in \{loc,q,qc,ns\}$, if $p,p^{\prime}\in C_t$, then the product $pp^{\prime}$ defined by: for $a,b,x,y\in W$, \begin{equation}\label{eqdef:product-correlation}
(pp^{\prime})(a,b|x,y) = \sum_{r,s \in W} p(a,b|r,s)p^{\prime}(r,s|x,y),
\end{equation} also belongs to $C_t$. Moreover, if $p,p^{\prime}$ are synchronous (or bisynchronous), then so is $pp^{\prime}$. The sets $C_t, C_t^s, C_t^{bs}$ are semigroups with respect to the product defined in Equation~\eqref{eqdef:product-correlation}.
\end{lemma}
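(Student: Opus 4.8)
The plan is to verify the three claims in order: closure of $C_t$ under the product, preservation of (bi)synchronicity, and associativity. Closure splits into a case analysis over $t \in \{ns, loc, q, qc\}$. For $t = ns$ I would directly check that $pp'$ is a genuine nonsignalling correlation: nonnegativity is immediate since the defining sum in Equation~\eqref{eqdef:product-correlation} has nonnegative summands; normalization $\sum_{a,b}(pp')(a,b|x,y) = 1$ follows by interchanging the order of summation and using $\sum_{a,b} p(a,b|r,s) = 1$; and the nonsignalling marginals of $pp'$ reduce, after the same interchange, to sums built from the marginals of $p$ and $p'$, which are themselves nonsignalling. For $t = loc$ I would use that $C_{loc}$ is the convex hull of the deterministic correlations, those of the form $p(a,b|x,y) = \delta_{a,f_A(x)}\delta_{b,f_B(y)}$. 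A short computation shows that the product of two deterministic correlations, associated with pairs $(f_A, f_B)$ and $(g_A, g_B)$, is again deterministic and corresponds to $(f_A \circ g_A, f_B \circ g_B)$. Since the product is bilinear in its two arguments, the product of two convex combinations of deterministic correlations is again such a combination, hence local.

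For $t = q$ and $t = qc$ the idea is to build an explicit strategy for $pp'$ on a tensor-product space. Writing $p$ via $(\psi, (P_{x,a}), (Q_{y,b}))$ and $p'$ via $(\psi', (P'_{x,a}), (Q'_{y,b}))$, I would take the state $\psi \otimes \psi'$ and define the composite measurement operators $\tilde P_{x,a} = \sum_{r\in W} P_{r,a}\otimes P'_{x,r}$ and $\tilde Q_{y,b} = \sum_{s\in W} Q_{s,b}\otimes Q'_{y,s}$. The main verification, and the step I expect to require the most care, is that each family $(\tilde P_{x,a})_a$ and $(\tilde Q_{y,b})_b$ is again a PVM: self-adjointness is clear, idempotence uses the orthogonality $P'_{x,r}P'_{x,r'} = \delta_{r,r'}P'_{x,r}$ within a fixed PVM, and completeness $\sum_a \tilde P_{x,a} = I$ uses $\sum_a P_{r,a} = I$ together with $\sum_r P'_{x,r} = I$. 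In the quantum commuting case I would additionally check $\tilde P_{x,a}\tilde Q_{y,b} = \tilde Q_{y,b}\tilde P_{x,a}$, which holds termwise from the commutation of the original $P$'s with $Q$'s and of the $P'$'s with $Q'$'s. Expanding the resulting correlation and factoring the inner product over the two tensor legs then reproduces Equation~\eqref{eqdef:product-correlation} exactly; the construction is identical in the possibly infinite-dimensional quantum commuting setting.

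The remaining claims are short direct computations. For preservation of synchronicity, in $(pp')(a,b|x,x) = \sum_{r,s} p(a,b|r,s)p'(r,s|x,x)$ the synchronicity of $p'$ collapses the sum to the diagonal $r = s$, and synchronicity of $p$ then annihilates every surviving term when $a \neq b$. Bisynchronicity is the symmetric argument applied to $(pp')(a,a|x,y)$ with $x \neq y$: bisynchronicity of $p$ forces $r = s$ and bisynchronicity of $p'$ kills the diagonal terms, which together with the synchronicity just established gives that $pp'$ is bisynchronous. Finally, associativity is exactly the associativity of the contraction $\sum_{r,s,u,v} p(a,b|r,s)p'(r,s|u,v)p''(u,v|x,y)$: expanding both $(p(p'p''))(a,b|x,y)$ and $((pp')p'')(a,b|x,y)$ yields this quadruple sum, so the two agree. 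Combining closure, the (bi)synchronicity preservation, and associativity then gives the semigroup structure on each of $C_t$, $C_t^s$, and $C_t^{bs}$.
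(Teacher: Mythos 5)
Your proof is correct and complete. The paper itself offers no proof of this lemma --- it labels it ``easily established'' and defers to the cited references --- and your argument (direct sum-interchange checks for the nonsignalling case, convex combinations of deterministic strategies with $(f_A\circ g_A, f_B\circ g_B)$ for the local case, the tensor-product state $\psi\otimes\psi'$ with composite PVMs $\sum_r P_{r,a}\otimes P'_{x,r}$ for the quantum and quantum commuting cases, and the collapse-to-diagonal computations for (bi)synchronicity and associativity) is precisely the standard construction used in those references, so there is nothing to flag beyond the routine bookkeeping of reshuffling tensor factors of $(\mathcal H_A\otimes\mathcal H_B)\otimes(\mathcal H'_A\otimes\mathcal H'_B)$ into Alice and Bob legs, which you handle implicitly.
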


\begin{prop}\label{prop:product-is-perfect}
Let $\mathscr G = \left(W, V\right)$ be a transitive game. Let $p,p^{\prime}\in C_{ns}$ be perfect correlations for $\mathscr G$. Then, the products $pp^{\prime}, p^{\prime}p\in C_{ns}$ are also perfect correlations for $\mathscr G$. In particular, for $t\in \{loc,q,qc\}$, the (possibly empty) subset $\mathscr C_t(\mathscr G) \subseteq C_t$ of all perfect $t$-strategies for the game $\mathscr G$ is a semigroup. Moreover, if $\mathscr G$ is synchronous (resp., bisynchronous), $\mathscr C_t(\mathscr G)$ is a sub-semigroup of $C^s_t$ (resp., $C^{bs}_t$).
\end{prop}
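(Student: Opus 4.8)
The plan is to reduce everything to the two facts already in hand: Lemma~\ref{lem:product-corr-type}, which says the product of two correlations of type $t$ is again of type $t$ (and preserves synchronicity and bisynchronicity), and the transitivity hypothesis on $V$. The only genuinely new content is showing that the product of two \emph{perfect} correlations is again perfect; once that is established, the semigroup statements follow immediately by combining with the associativity and closure already recorded in Lemma~\ref{lem:product-corr-type}.

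First I would verify that $pp'$ is perfect. Recall $(pp')(a,b|x,y) = \sum_{r,s\in W} p(a,b|r,s)\,p'(r,s|x,y)$, and that a correlation is perfect precisely when $V(a,b|x,y)=0$ forces the corresponding probability to vanish. So suppose $V(a,b|x,y)=0$. Since each summand is a product of nonnegative terms, it suffices to show every summand vanishes. Fix $r,s\in W$. By the transitivity of $\mathscr G$ (\cref{def:transitive-game}, contrapositive form), $V(a,b|x,y)=0$ implies $V(a,b|r,s)\,V(r,s|x,y)=0$, so at least one of these two predicate values is zero. If $V(a,b|r,s)=0$, then $p(a,b|r,s)=0$ because $p$ is perfect; if instead $V(r,s|x,y)=0$, then $p'(r,s|x,y)=0$ because $p'$ is perfect. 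In either case the summand $p(a,b|r,s)\,p'(r,s|x,y)$ is zero, whence $(pp')(a,b|x,y)=0$. The argument for $p'p$ is identical with the roles of $p$ and $p'$ interchanged. This is the heart of the proof and the place where the transitivity of $V$ is used in an essential way—it is exactly the property that lets the ``intermediate'' questions $(r,s)$ be routed through the predicate.

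Next I would assemble the semigroup conclusion. For $t\in\{loc,q,qc\}$, take any $p,p'\in \mathscr C_t(\mathscr G)$. By Lemma~\ref{lem:product-corr-type} the product $pp'$ lies in $C_t$, and by the paragraph above it is perfect for $\mathscr G$; hence $pp'\in\mathscr C_t(\mathscr G)$, so $\mathscr C_t(\mathscr G)$ is closed under the product. Associativity of the product on $\mathscr C_t(\mathscr G)$ is inherited directly from the associativity of the product on all of $C_t$ (respectively $C_t^s$, $C_t^{bs}$), again supplied by Lemma~\ref{lem:product-corr-type}, so no separate check is needed. Therefore $\mathscr C_t(\mathscr G)$ is a semigroup.

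Finally I would treat the synchronous and bisynchronous refinements. If $\mathscr G$ is synchronous (resp.\ bisynchronous), then every perfect correlation for $\mathscr G$ is synchronous (resp.\ bisynchronous), as noted at the end of \cref{subsec:nonlocal-games}; thus $\mathscr C_t(\mathscr G)\subseteq C_t^s$ (resp.\ $\subseteq C_t^{bs}$). Lemma~\ref{lem:product-corr-type} guarantees that the product of two synchronous (resp.\ bisynchronous) correlations stays synchronous (resp.\ bisynchronous), so the product on $\mathscr C_t(\mathscr G)$ agrees with the product on $C_t^s$ (resp.\ $C_t^{bs}$) and keeps us inside that set. Hence $\mathscr C_t(\mathscr G)$ is a sub-semigroup of $C_t^s$ (resp.\ $C_t^{bs}$), completing the proof. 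I do not anticipate a real obstacle; the only subtlety is being careful that ``perfect'' is phrased as the vanishing implication $V=0\Rightarrow p=0$ and then invoking the contrapositive form of transitivity to kill each summand.
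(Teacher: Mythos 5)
Your proof is correct and follows essentially the same route as the paper: use the contrapositive form of transitivity together with perfectness of $p$ and $p'$ to show every summand in the product $(pp')(a,b|x,y)$ vanishes whenever $V(a,b|x,y)=0$, then invoke \cref{lem:product-corr-type} for closure, associativity, and the synchronous/bisynchronous refinements. The paper's proof is just a more compressed version of yours, leaving the semigroup bookkeeping to the cited lemma.
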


\begin{proof}
Let $V(a,b|x,y) = 0$ for some $a,b,x,y\in W$. By transitivity $V(a,b|r,s)V(r,s|x,y) = 0$ for all $r,s\in W$. That is, for all $r,s\in W$, we have either $V(a,b|r,s) = 0$ or $V(r,s|x,y) = 0$. Since $p$ and $p^{\prime}$ are perfect, we see that for all $r,s\in W$, we have either $p(a,b|r,s) = 0$ or $p^{\prime}(r,s|x,y) = 0$. In particular, the sum in Equation~\eqref{eqdef:product-correlation} is zero, so that $(pp^{\prime})(a,b|x,y) = 0$. Thus, $pp^{\prime}$ is a perfect correlation for $\mathscr G$. 

The remaining parts of the Proposition follow from \cref{lem:product-corr-type}. \end{proof}

Conversely, given a semigroup of correlations, the \emph{hardest} nonlocal game (as defined in \cite{LMPRSSTW}) that can be won using them is a transitive game. 

\begin{lemma}\label{l37}
    Let $\mathscr C\subseteq C_{ns}$ be a semigroup of correlations with identical questions and answers set $W$. Let $\mathscr G = (W,V)$ be the hardest game with $\mathscr C$ as its perfect strategies defined by $V(a,b|x,y) = 0$ if and only if $p(a,b|x,y) = 0$ for all $p\in \mathscr C$. Then $\mathscr G$ is transitive.
\end{lemma}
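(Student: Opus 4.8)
The goal is to show that the game $\mathscr G = (W,V)$ defined by $V(a,b|x,y) = 0 \iff p(a,b|x,y) = 0$ for all $p \in \mathscr C$ is transitive. By \cref{def:transitive-game}, it suffices to verify the contrapositive form: whenever $V(a,b|x,y) = 0$, we have $V(a,b|r,s)\,V(r,s|x,y) = 0$ for all $r,s \in W$. Equivalently, and perhaps more naturally given how $V$ is defined, I would prove the direct form: if there exist $r,s \in W$ with $V(a,b|r,s) = V(r,s|x,y) = 1$, then $V(a,b|x,y) = 1$.

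The plan is to unfold the definition of $V$ in terms of the correlations in $\mathscr C$. Suppose $V(a,b|r,s) = 1$ and $V(r,s|x,y) = 1$ for some $r,s$. By definition of $V$, the first equality means there exists some $p \in \mathscr C$ with $p(a,b|r,s) > 0$, and the second means there exists some $p' \in \mathscr C$ with $p'(r,s|x,y) > 0$. Since $\mathscr C$ is a semigroup under the product of \cref{eqdef:product-correlation}, the product correlation $pp'$ again lies in $\mathscr C$. The key computation is then
\begin{equation*}
(pp')(a,b|x,y) = \sum_{r',s' \in W} p(a,b|r',s')\,p'(r',s'|x,y) \geq p(a,b|r,s)\,p'(r,s|x,y) > 0,
\end{equation*}
where the inequality uses nonnegativity of all terms to drop every summand except the one indexed by $(r',s') = (r,s)$. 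Since $pp' \in \mathscr C$ witnesses $(pp')(a,b|x,y) > 0$, the definition of $V$ forces $V(a,b|x,y) = 1$, as desired.

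The argument is short and the only genuine ingredient is the semigroup hypothesis on $\mathscr C$, which is exactly what lets me multiply the two witnessing correlations and stay inside $\mathscr C$. I expect no real obstacle here: the main point to be careful about is the logical bookkeeping in matching the definition of $V$ (a universally quantified ``$p(a,b|x,y) = 0$ for all $p$'') against its negation (``there exists $p$ with $p(a,b|x,y) > 0$''), and making sure the two witnesses $p$ and $p'$ need not be the same element of $\mathscr C$ — the semigroup closure handles products of distinct elements, so this causes no difficulty. One should also note that $\mathscr C$ is assumed nonempty as part of being a semigroup, so the defining equivalence for $V$ is well posed. With these observations the transitivity of $\mathscr G$ follows directly.
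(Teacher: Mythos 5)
Your proof is correct and follows essentially the same argument as the paper's: both obtain witnesses $p, p' \in \mathscr C$ from $V(a,b|r,s) = V(r,s|x,y) = 1$, use the semigroup hypothesis to conclude $pp' \in \mathscr C$, and observe that $(pp')(a,b|x,y) > 0$ because the nonnegative sum in \cref{eqdef:product-correlation} contains the positive term $p(a,b|r,s)\,p'(r,s|x,y)$. The only cosmetic difference is that the paper phrases this as a proof by contradiction while you argue directly; your version also makes explicit the nonnegativity step that the paper leaves implicit.
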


\begin{proof}
Suppose for some $a,b,x,y\in W$ we have $V(a,b|x,y) = 0$ and  $V(a,b|r,s) = V(r,s|x,y) = 1$ for some $r,s\in W$. Then, from the definition of the game, it follows that there exist $p, p^{\prime} \in \mathscr C$ such that $p(a,b|r,s) \neq 0$ and $p^{\prime}(r,s|x,y) \neq 0$. Using Equation~\eqref{eqdef:product-correlation}, we have that $(pp^{\prime})(a,b|x,y) \neq 0$. Since $\mathscr{C}$ is a semigroup, $pp^{\prime} \in \mathscr{C}$, and hence $V(a,b|x,y) \neq 0$, a contradiction. Hence, $\mathscr G$ is transitive.
\end{proof}

We now show that if we restrict our attention to bisynchronous transitive games then one can get better information about the set of perfect strategies of the games. To illustrate, we prove that the set of all perfect deterministic strategies of a transitive bisynchronous game, if nonempty, forms a group in a natural way.

\begin{prop}\label{prop:eq-conditions-perfect-strat}
	Let $\mathscr G = (W, V)$ be a transitive bisynchronous game. The following are equivalent.
	\begin{enumerate}
		\item[(a)] $\mathscr G$ has a perfect local strategy.
		\item[(b)] The set of all perfect deterministic strategies of $\mathscr G$ is a subgroup of the symmetric group $S_{W}$.
		\item[(c)] The identity map $\mathrm{id}_W$ is a perfect deterministic strategy.
	\end{enumerate}
\end{prop}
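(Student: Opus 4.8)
The plan is to establish the cycle (a) $\Rightarrow$ (c) $\Rightarrow$ (b) $\Rightarrow$ (a), concentrating the real work in (a) $\Rightarrow$ (c). Before starting the cycle I would record a structural fact valid for any bisynchronous game: every perfect deterministic strategy is a single permutation of $W$. Synchronicity forces $f_A=f_B=:f$, since putting $x=y$ in $V(f_A(x),f_B(x)|x,x)=1$ and using $V(a,b|x,x)=0$ for $a\neq b$ gives $f_A(x)=f_B(x)$; bisynchronicity then forces $f$ injective, for if $f(x)=f(y)=a$ with $x\neq y$ then $V(a,a|x,y)=1$ contradicts $V(a,a|x,y)=0$. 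As $W$ is finite, $f$ is a bijection, so the set $\mathcal D$ of perfect deterministic strategies satisfies $\mathcal D\subseteq S_W$.

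For (a) $\Rightarrow$ (c): a perfect local correlation is a convex combination $\sum_i\lambda_i p_{f_i}$ of deterministic correlations with each $\lambda_i>0$; since this correlation is perfect and each summand is nonnegative, whenever $V(a,b|x,y)=0$ every term vanishes, so each $p_{f_i}$ is itself perfect and hence $\mathcal D\neq\emptyset$. Fix $f\in\mathcal D$, a permutation of some finite order $m$. The crux---and the only place transitivity is used---is that powers of a perfect strategy stay perfect. Identifying $f$ with its deterministic correlation $p_f$ and computing from \eqref{eqdef:product-correlation} that $p_f p_g=p_{f\circ g}$, \cref{prop:product-is-perfect} shows $p_{f^k}=(p_f)^k$ is perfect, hence $f^k\in\mathcal D$, for every $k\geq 1$; taking $k=m$ gives $\mathrm{id}_W=f^m\in\mathcal D$, which is (c). (Equivalently, one checks $V(f^{k+1}(x),f^{k+1}(y)|x,y)=1$ by hand, applying \cref{def:transitive-game} with $r=f(x)$ and $s=f(y)$.)

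For (c) $\Rightarrow$ (b): condition (c) gives $\mathrm{id}_W\in\mathcal D$, so $\mathcal D$ is nonempty, and the same product computation with \cref{prop:product-is-perfect} shows $\mathcal D$ is closed under composition (if $f,g\in\mathcal D$ then $p_f p_g=p_{f\circ g}$ is perfect, so $f\circ g\in\mathcal D$). A nonempty subset of the finite group $S_W$ that is closed under the group multiplication is a subgroup---closure plus finiteness supplies inverses, since $f^{-1}=f^{\,\mathrm{ord}(f)-1}$ is again a power of $f$---which is (b). Finally (b) $\Rightarrow$ (a) is immediate, as a subgroup is nonempty and any deterministic strategy is in particular a local strategy.

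I expect the main obstacle to be the two ingredients of (a) $\Rightarrow$ (c): making precise that a perfect local correlation forces each deterministic correlation in its decomposition to be perfect (so that $\mathcal D$ is genuinely nonempty), and correctly invoking transitivity to climb through the powers of $f$ up to the identity. The remaining implications are bookkeeping with $S_W$ and the synchronicity and bisynchronicity conditions.
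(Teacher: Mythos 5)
Your proof is correct, and it rests on the same two pillars as the paper's: (i) synchronicity plus bisynchronicity force every perfect deterministic strategy to be a single permutation of $W$, and (ii) transitivity makes perfect deterministic strategies compose, so that finiteness of $S_W$ upgrades closure under composition to a subgroup. The arrangement differs, however. The paper proves (a) $\Rightarrow$ (b) directly, verifying closure by a bespoke predicate-level computation --- from $V(f(x),f(y)|x,y)=1$ and $V(g(x),g(y)|x,y)=1$ it substitutes $g(x),g(y)$ for the inputs of the first identity to get $V(f(g(x)),f(g(y))|g(x),g(y))=1$ and then applies transitivity --- and afterwards obtains (c) for free, since a subgroup of $S_W$ must contain $\mathrm{id}_W$; you instead run the cycle (a) $\Rightarrow$ (c) $\Rightarrow$ (b) $\Rightarrow$ (a), reaching the identity by taking powers of a single perfect permutation and deriving composition-closure from \cref{prop:product-is-perfect} together with the observation $p_fp_g=p_{f\circ g}$. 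Your route buys two things: it reuses machinery the paper has already established rather than repeating the transitivity computation (though your parenthetical shows you see they are the same argument), and it makes explicit a step the paper glosses over with ``we may as well take it to be deterministic,'' namely that perfection of a convex combination $\sum_i\lambda_i p_{f_i}$ with all $\lambda_i>0$ forces each $p_{f_i}$ to be perfect. The paper's route is slightly shorter and self-contained, since the subgroup property delivers (c) without a separate power argument; both are sound.
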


\begin{proof}
\emph{(a) implies (b).} If $\mathscr G$ has a perfect local strategy, we may as well take it to be deterministic. Because $\mathscr G$ is synchronous, any perfect deterministic strategy of $\mathscr G$ is a function $f:W\to W$, and it satisfies $V(f(x),f(y)|x,y) = 1$ for all $x,y\in W$. Suppose $f(x) = f(y)$; then since $\mathscr G$ is bisynchronous, we must have $x = y$. Hence, $f$ is injective, therefore, a permutation. Because $S_W$ is finite, it suffices to show that the set of perfect deterministic strategies is closed under composition. Let $f$ and $g$ be perfect deterministic strategies. Then for all $x,y\in W$, \begin{equation}\label{eq:f-g-perfect}
    V(f(x),f(y)|x,y) = 1, \qquad \text{ and }\qquad
    V(g(x),g(y)|x,y) = 1.
\end{equation} Since $g$ is bijective, we also have that for all $x,y\in W$, \begin{equation}\label{eq:fg-perfect}
    V(f\circ g (x), f\circ g(y)|g(x),g(y)) = 1.
\end{equation} Since $\mathscr G$ is transitive, using Equations \eqref{eq:f-g-perfect} and \eqref{eq:fg-perfect} we get that $V(f\circ g (x), f\circ g(y)|x,y) = 1$ for all $x, y \in W$. That is, $f\circ g$ is a perfect deterministic strategy, as required.

Finally, the implications, \emph{(b) implies (c)}, and \emph{(c) implies (a)}, are straightforward.
\end{proof}

For example, if $K_n$ is the complete graph on $[n]$, the $(K_n,K_n)$-homomorphism game is a transitive bisynchronous game, and the set of all its perfect deterministic strategies is the whole symmetric group $S_n$. On the other hand, consider the transitive game $\mathscr G = (W, V)$ defined by: $V(a,b|x,y) = \delta_{a,x}\delta_{b,y}$ for all $a,b,x,y\in W$. Clearly, the only perfect deterministic strategy is the identity function, and so the set of perfect deterministic strategies is a strict subgroup of $S_W$. Using \cref{prop:eq-conditions-perfect-strat}, we can construct examples of transitive bisynchronous games which do not admit perfect local strategies. For example, let $\mathscr G = (W,V)$ with $W = \{0,1\}$ and $V$ as in \cref{table:trans-game-no-perf-local}. Then $\mathscr G$ is a transitive bisynchronous game with no perfect deterministic strategies since $V(0,0|0,0) \neq 1$ and hence statement (c) of \cref{prop:eq-conditions-perfect-strat} is not fulfilled.

\begin{table}[H]
\centering
\begin{tabular}{|c|c|c|c|c|}
\hline
\diagbox{$(x,y)$}{$(a,b)$} & $(0,0)$ & $(0,1)$ & $(1,0)$ & $(1,1)$ \\
\hline
$(0,0)$ & 0 & 0 & 0 & 1 \\
\hline
$(0,1)$ & 0 & 1 & 1 & 0 \\
\hline
$(1,0)$ & 0 & 1 & 1 & 0 \\
\hline
$(1,1)$ & 0 & 0 & 0 & 1 \\
\hline
\end{tabular}
\caption{The predicate $V$ for a transitive bisynchronous game with no perfect local strategy.}
\label{table:trans-game-no-perf-local}
\end{table}

In general, graph homomorphism games are not transitive. In fact, if $G$ and $H$ are two graphs on the same set $W$ of vertices, then the $(G,H)$-homomorphism game is transitive if and only if $E(H) \subseteq E(H)$.

\section{Transitive bisynchronous games and quantum permutation groups}\label{sec:trans-bisyn-games}
We now focus on the $*$-algebra of a synchronous transitive game. Let $\mathscr G = (W, V)$ be a synchronous game with $\mathcal{A}(\mathscr G) \neq 0$. If $\mathscr G$ is transitive, then $\mathcal{A}(\mathscr G)$ admits a comultiplication $\Delta:\cl A(\mathscr G) \to \cl A(\mathscr G) \otimes \mathcal{A}(\mathscr G)$ defined by $\Delta(u_{x,a}) = \sum_{r\in W}u_{x,r}\otimes u_{r,a}$, for all $x,a\in W$, and also satisfies $(\Delta \otimes \mathrm{id}) \Delta = (\mathrm{id}\otimes \Delta) \Delta$ (see the proof of \cref{main-res} below). However, in general, $\cl A(\mathscr G)$ falls short of being a Hopf $*$-algebra, for if a counit $\epsilon:\cl A(\mathscr G) \to \bb C$ exists, then $V(x,y|x,y) = 1$ for all $x,y \in W$, which means that the game always has a perfect deterministic strategy. To prove the aforementioned claim, observe that a counit, if it exists, must satisfy $\epsilon (u_{x,a}) = \delta_{x,a}$ for all $x,a\in W$. Suppose for some $x,y\in W$, we have $V(x,y|x,y) = 0$. Then, $u_{x,x}u_{y,y} = 0$. Since $\epsilon$ is a homomorphism, $0 = \epsilon(u_{x,x}u_{y,y}) = \epsilon(u_{x,x})\epsilon(u_{y,y}) = \delta_{x,x}\delta_{y,y} = 1$, a contradiction.

Moreover, there is a natural $*$-homomorphism $\Phi:C(W)\to C(W) \otimes_{\text{alg}} \cl A(\mathscr G)$ defined by \begin{align}
    \Phi(e_a) = \sum_{x\in W} e_x \otimes u_{x,a}, \qquad \text{ for all } a\in W.
\end{align}

The following result was proved for synchronous transitive games in \cite{soltan}:

\begin{theorem}[{\cite[Theorem III.1]{soltan}}]
Let $\mathscr{G} = (W,V)$ be a synchronous transitive game with a perfect $C^*$-strategy, and let $C^*(\mathscr G)$ be the universal $C^*$-algebra generated by $\{u_{x,a}: x,a\in W\}$ satisfying Relations~\ref{alstatcon}. Then, there exists a unique $*$-homomorphism $\Delta: C^*(\mathscr{G}) \to C^*(\mathscr{G}) \otimes C^*(\mathscr{G})$ satisfying \begin{equation}
(\Phi \otimes \mathrm{id}) \circ \Phi = (\mathrm{id} \otimes \Delta) \circ \Phi
\end{equation} where $\Phi: C(W) \to C(W) \otimes C^*(\mathscr{G})$ is defined by $e_a \mapsto \sum_{x \in W} e_x \otimes u_{x,a}$ for all $a\in W$. Moreover, $\Delta$ is coassociative, and $\mathbb G = (C^*(\mathscr G), \Delta)$ is a compact quantum semigroup, and $\Phi$ defines an action of $\mathbb G$ on $W$.
\end{theorem}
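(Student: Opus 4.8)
The plan is to build $\Delta$ directly on the generators and invoke the universal property of $C^*(\mathscr G)$, and then to read off every remaining assertion as an identity on generators.

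First I would set $\Delta(u_{x,a}) := \sum_{r\in W} u_{x,r}\otimes u_{r,a}$ and check that these candidate images satisfy Relations~\ref{alstatcon} inside $C^*(\mathscr G)\otimes C^*(\mathscr G)$. Self-adjointness is immediate from $u_{x,r}^*=u_{x,r}$. For idempotency, expanding $\Delta(u_{x,a})^2 = \sum_{r,s} u_{x,r}u_{x,s}\otimes u_{r,a}u_{s,a}$ and using synchronicity — which forces $u_{x,r}u_{x,s}=0$ for $r\neq s$ — collapses the sum to $\sum_r u_{x,r}\otimes u_{r,a}$. The row-sum relation $\sum_a \Delta(u_{x,a}) = 1\otimes 1$ follows from $\sum_a u_{r,a}=1$ together with $\sum_r u_{x,r}=1$, and in particular shows $\Delta$ is unital. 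The crucial relation is the third: assuming $V(a,b|x,y)=0$, transitivity gives $V(a,b|r,s)V(r,s|x,y)=0$ for every $r,s\in W$, so in each term of $\Delta(u_{x,a})\Delta(u_{y,b}) = \sum_{r,s} u_{x,r}u_{y,s}\otimes u_{r,a}u_{s,b}$ either the left leg vanishes (when $V(r,s|x,y)=0$) or the right leg vanishes (when $V(a,b|r,s)=0$); hence the whole product is zero. By the universal property of $C^*(\mathscr G)$, this yields a unital $*$-homomorphism $\Delta$ with the stated action on generators.

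Next I would verify the intertwining identity by evaluating both sides on $e_a$: both $(\Phi\otimes\mathrm{id})\Phi(e_a)$ and $(\mathrm{id}\otimes\Delta)\Phi(e_a)$ equal $\sum_{x,r} e_x\otimes u_{x,r}\otimes u_{r,a}$ after relabelling. The same computation delivers uniqueness: since $\{e_x\}_{x\in W}$ is a linear basis of $C(W)$, comparing the $e_x$-components of the equation $(\mathrm{id}\otimes\Delta)\Phi(e_a)=(\Phi\otimes\mathrm{id})\Phi(e_a)$ forces $\Delta(u_{x,a})=\sum_r u_{x,r}\otimes u_{r,a}$, and a $*$-homomorphism on $C^*(\mathscr G)$ is determined by its values on the generators. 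Coassociativity is then a one-line generator check, both $(\Delta\otimes\mathrm{id})\Delta(u_{x,a})$ and $(\mathrm{id}\otimes\Delta)\Delta(u_{x,a})$ equalling $\sum_{r,s} u_{x,r}\otimes u_{r,s}\otimes u_{s,a}$. Combined with $\Delta$ being a unital $*$-homomorphism, this shows $\mathbb G=(C^*(\mathscr G),\Delta)$ is a compact quantum semigroup.

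Finally, to conclude that $\Phi$ is an action I would confirm that $\Phi$ is itself a unital $*$-homomorphism: its images $\sum_x e_x\otimes u_{x,a}$ are projections (again using $e_xe_y=\delta_{x,y}e_x$ and $u_{x,a}^2=u_{x,a}$), are mutually orthogonal for $a\neq b$ (by synchronicity $u_{x,a}u_{x,b}=0$), and sum to $1\otimes 1$, so by universality of $C(W)$ it is well defined; the compatibility condition (a) in the definition of an action is precisely the intertwining identity already established. I expect the only genuinely delicate point to be the third relation for $\Delta$, where transitivity is indispensable and synchronicity alone would not suffice; every other step is a routine generator-level manipulation relying on the magic-unitary/synchronicity relations and the finite-dimensional structure of $C(W)$.
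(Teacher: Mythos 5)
Your proposal is correct and takes essentially the same approach the paper relies on: the paper itself defers this statement to the cited reference, but its proof of the bisynchronous special case (\cref{main-res}) — which the surrounding text explicitly points to — constructs $\Delta$ on generators exactly as you do, using synchronicity for the projection and row-sum relations, transitivity for the vanishing of $\Delta(u_{x,a})\Delta(u_{y,b})$ when $V(a,b|x,y)=0$, the universal property of $C^*(\mathscr G)$, and a generator-level check of coassociativity. Your remaining verifications — uniqueness of $\Delta$ by comparing $e_x$-components, and the check that $\Phi$ is a unital $*$-homomorphism satisfying condition (a) of the action definition — are the routine completions that the paper leaves to the reference, and they are carried out correctly.
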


The next theorem is a special case of {\cite[Theorem IV.1]{soltan}}. 

\begin{theorem}\label{main-res}
Let $\mathscr G = (W, V)$ be a bisynchronous transitive  game with a perfect $C^*$-strategy, and let $C^*(\mathscr G)$ be universal $C^*$-algebra generated by $\{u_{x,a}: x,a\in W\}$ satisfying Relations~\ref{alstatcon}. Then, $(C^*(\mathscr G),\Delta)$ is a compact quantum group with comultiplication defined by $\Delta(u_{x,a}) = \sum_{r\in W}u_{x,r}\otimes u_{r,a}$, for all $x,a\in W$. Moreover, it is a quantum permutation group acting on $W$.
\end{theorem}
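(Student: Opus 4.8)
The plan is to prove two things---that the fundamental matrix $U = [u_{x,a}]_{x,a\in W}$ is a magic unitary, and that the comultiplication $\Delta$ makes $C^*(\mathscr G)$ a compact quantum group---after which the quantum permutation group claim drops out of the universal property of $C(S_W^+)$. Throughout I work inside $C^*(\mathscr G)$, which is nontrivial by the perfect $C^*$-strategy hypothesis.

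First I would establish that $U$ is a magic unitary. Relations~\ref{alstatcon} already give that each $u_{x,a}$ is a projection and that each row sums to the identity, $\sum_a u_{x,a} = 1$. Synchronicity ($V(a,b|x,x)=0$ for $a\neq b$) forces $u_{x,a}u_{x,b}=0$, so each row is a PVM; bisynchronicity ($V(a,a|x,y)=0$ for $x\neq y$) forces $u_{x,a}u_{y,a}=0$, so the entries of each column are mutually orthogonal and $P_a := \sum_x u_{x,a}$ is a projection. The one genuinely nontrivial point is that each column also sums to the identity, and this is the step I expect to require the most care, since it is where bisynchronicity does its essential work. For it I would run a positivity/counting argument: summing over $a$ gives $\sum_{a} P_a = \sum_x \sum_a u_{x,a} = \sum_x 1 = |W|\cdot 1$, so the positive elements $1-P_a$ satisfy $\sum_a (1-P_a)=0$; since $0 \le 1-P_a \le \sum_b (1-P_b) = 0$, each $P_a = 1$. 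Thus every column is a PVM, $U^*U = UU^* = I$, and $U$ is a magic unitary.

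Next I would address the comultiplication $\Delta(u_{x,a}) = \sum_r u_{x,r}\otimes u_{r,a}$. The preceding theorem (Soltan's result for synchronous transitive games, which covers the bisynchronous case) already supplies a coassociative unital $*$-homomorphism with this formula; for a self-contained argument I would instead verify via the universal property that the elements $v_{x,a} := \sum_r u_{x,r}\otimes u_{r,a}$ satisfy Relations~\ref{alstatcon}. They are projections and each row sums to $1\otimes 1$ by the PVM property, while $v_{x,a}v_{y,b}=0$ whenever $V(a,b|x,y)=0$: by transitivity (\cref{def:transitive-game}) one has $V(a,b|r,s)V(r,s|x,y)=0$ for all $r,s$, so in the expansion $v_{x,a}v_{y,b} = \sum_{r,s} u_{x,r}u_{y,s}\otimes u_{r,a}u_{s,b}$ each term vanishes (either $u_{x,r}u_{y,s}=0$ or $u_{r,a}u_{s,b}=0$). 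Coassociativity is then a routine reindexing of $\sum_{r,s} u_{x,s}\otimes u_{s,r}\otimes u_{r,a}$.

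Finally I would deduce the compact quantum group and quantum permutation group claims together. Because $U$ is a magic unitary, the universal property of $C(S_W^+)$---the universal $C^*$-algebra for magic unitaries on $W$---produces a unital $*$-homomorphism $\pi: C(S_W^+)\to C^*(\mathscr G)$ with $\pi(u_{x,a})=u_{x,a}$, which is surjective since these elements generate and which intertwines the two comultiplications (both act by $u_{x,a}\mapsto \sum_r u_{x,r}\otimes u_{r,a}$). Now $S_W^+$ is a compact quantum group, being an instance of \cref{prop:gen-cqg-example} (a magic unitary and its entrywise adjoint are invertible), and the density conditions in the definition of a compact quantum group are preserved under the comultiplication-compatible surjection $\pi\otimes\pi$; hence $(C^*(\mathscr G),\Delta)$ is a compact quantum group. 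The existence of the surjection $\pi$ is precisely the definition of a quantum permutation group acting on $W$, which completes the proof.
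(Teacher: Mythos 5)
Your proposal is correct, and its mathematical core coincides with the paper's proof: both obtain $\Delta$ from the universal property of $C^*(\mathscr G)$ by checking that the elements $v_{x,a}=\sum_r u_{x,r}\otimes u_{r,a}$ satisfy Relations~\ref{alstatcon}, with transitivity doing exactly the work you describe (each term $u_{x,r}u_{y,s}\otimes u_{r,a}u_{s,b}$ vanishes when $V(a,b|x,y)=0$), and both reduce the rest to $U$ being a magic unitary. You differ only at the two points where the paper cites external results, and in both cases your substitutions are sound. For the column sums, the paper invokes \cite[Remark 2.1]{paulsen}; your positivity/counting argument (column entries are orthogonal by bisynchronicity, so each $P_a=\sum_x u_{x,a}$ is a projection, and $\sum_a(1-P_a)=0$ with $1-P_a\ge 0$ forces $P_a=1$) is precisely the proof of that remark, so this is a self-contained replacement. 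For the final step, the paper applies \cref{prop:gen-cqg-example} directly to $(C^*(\mathscr G),\Delta)$; as stated in the paper that proposition concerns a universal $C^*$-algebra, whereas $C^*(\mathscr G)$ carries additional relations, so a literal reading leaves a small gap (the result being cited, Neshveyev--Tuset Proposition 1.1.4, actually applies to any unital $C^*$-algebra generated by the entries of $U$ with $U$ and $\bar U$ invertible and admitting $\Delta$ as a $*$-homomorphism, which closes it). You instead route through the surjection $\pi\colon C(S_W^+)\to C^*(\mathscr G)$ and let the cancellation (density) property descend along $\pi\otimes\pi$; this is valid --- surjective $*$-homomorphisms are open maps, so dense subspaces push forward to dense subspaces, and $\pi\otimes\pi$ is surjective onto the minimal tensor product --- and it has the added benefit of delivering the quantum permutation group claim simultaneously, at the cost of taking as input that $S_W^+$ itself is a compact quantum group (which the paper's definition of $S_W^+$ already assumes).
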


\begin{proof} The universal $C^*$-algebra $C^*(\mathscr G)$ is generated by $\{u_{x,a}:x,a\in W\}$ satisfying Relations~\ref{alstatcon}. Let $v_{x,a} := \sum_{r\in W} u_{x,r} \otimes u_{r,a}$ for all $x,a\in W$. Using the fact that the game is synchronous it can be verified that $v_{x,a} = v_{x,a}^* = v_{x,a}^2$ for all $x,a\in W$, and that $\sum_{a} v_{x,a} = 1$.

Next, we note that \begin{equation*}
	v_{i,j}v_{k,l} = \Delta(u_{i,j})\Delta(u_{k,l})  = \left(\sum_{p}u_{i,p}\otimes u_{p,j}\right)\left(\sum_{q}u_{k,q} \otimes u_{q,l}\right)  = \sum_{p,q} u_{i,p}u_{k,q} \otimes u_{p,j}u_{q,l}
\end{equation*} We see that if $V(j, l | i, k) = 0$, then for all $k,l \in W$, we have that either $V(p,q|i, k) = 0$ or $V(j,l|p,q) = 0$. Hence, at least one of $u_{i,p}u_{k,q}$ and $u_{p,j}u_{q,l}$ is zero, which implies that $\Delta(u_{i,j})\Delta(u_{k,l}) = 0$. Thus, the elements $\{v_{x,a}:x,a\in W\}$ satisfy Relations~\ref{alstatcon}. By universal property, we get the existence of a unital $*$-homomorphism $\Delta:C^*(\mathscr G) \to C^*(\mathscr G)\otimes C^*(\mathscr G)$ satisfying $\Delta(u_{x,a}) = v_{x,a} = \sum_{r\in W}u_{x,r}\otimes u_{r,a}$, as required. Moreover, it is easy to check that $\Delta$ satisfies $(\Delta \otimes \mathrm{id}) \Delta = (\mathrm{id}\otimes \Delta) \Delta$. 

Next, let $U = [u_{x,a}]_{x,a\in W}$. Then, since all the entries of $U$ are self-adjoint projections, {\cite[Remark 2.1]{paulsen}} tells us that $\sum_{x}u_{x,a} = 1$ for all $x \in W$, so that $U$ is a magic unitary, that is, $UU^* = \mathrm{diag}(1,\dots,1) = U^*U$.

Hence, by \cref{prop:gen-cqg-example}, $(C^*(\mathscr G),\Delta)$ is indeed a compact quantum group, and since $U = [u_{x,a}]$ is a magic unitary, $C^*(\mathscr G)$ is a quantum permutation group acting on $W$. 
\end{proof}

Conversely, we establish that every quantum permutation group acting on a set gives rise to a transitive bisynchronous game. 

\begin{theorem}\label{thm:game-corr-to-qpg}
Let $\mathbb{G}$ be a quantum permutation group acting on a finite nonempty set $W$. Then, there exists a transitive bisynchronous game $\mathscr G = (W,V)$ with a perfect $C^*$-strategy such that the quantum orbitals of $(C^*(\mathscr G), \Delta)$ are the same as the quantum orbitals of $\mathbb{G}$.
\end{theorem}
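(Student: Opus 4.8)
The plan is to read the predicate $V$ directly off the quantum orbital relation of $\mathbb G$. Let $[v_{i,j}]_{i,j\in W}$ be the magic unitary generating $C(\mathbb G)$, and recall the orbital relation $\sim_2$ on $W\times W$, where $(i,i')\sim_2(j,j')$ iff $v_{i,j}v_{i',j'}\neq 0$. I would define $\mathscr G=(W,V)$ by
\[
V(a,b\mid x,y)=1 \quad\Longleftrightarrow\quad v_{x,a}v_{y,b}\neq 0,
\]
equivalently $V(a,b\mid x,y)=1$ iff $(x,y)\sim_2(a,b)$. With this choice the defining relation ``$u_{x,a}u_{y,b}=0$ whenever $V(a,b\mid x,y)=0$'' reads precisely ``$u_{x,a}u_{y,b}=0$ whenever $v_{x,a}v_{y,b}=0$.''

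Next I would verify that $\mathscr G$ is bisynchronous and transitive. Since $[v_{i,j}]$ is a magic unitary, the entries in each row and in each column are pairwise orthogonal projections, so $v_{x,a}v_{x,b}=0$ for $a\neq b$ and $v_{x,a}v_{y,a}=0$ for $x\neq y$. The former gives $V(a,b\mid x,x)=0$ for $a\neq b$ (synchronicity) and the latter gives $V(a,a\mid x,y)=0$ for $x\neq y$ (bisynchronicity). For transitivity, the crucial point is that $\sim_2$ is an \emph{equivalence relation}, hence transitive, as established in \cite{lupini}. Indeed, if $V(a,b\mid r,s)=1$ and $V(r,s\mid x,y)=1$, then $(r,s)\sim_2(a,b)$ and $(x,y)\sim_2(r,s)$, whence $(x,y)\sim_2(a,b)$, i.e. $V(a,b\mid x,y)=1$; this is exactly the condition of \cref{def:transitive-game}.

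It then remains to produce a perfect $C^*$-strategy and to match the orbitals. Because every generator is a projection, the universal $C^*$-algebra $C^*(\mathscr G)$ exists; moreover $\{v_{x,a}\}$ is a magic unitary satisfying $v_{x,a}v_{y,b}=0$ whenever $V(a,b\mid x,y)=0$ by construction, so it satisfies Relations~\ref{alstatcon}, and the universal property yields a unital $*$-homomorphism $\phi:C^*(\mathscr G)\to C(\mathbb G)$ with $\phi(u_{x,a})=v_{x,a}$. Since $\phi(1)=1\neq 0$, the algebra $C^*(\mathscr G)$ is nontrivial, so $\mathscr G$ has a perfect $C^*$-strategy, and \cref{main-res} identifies $(C^*(\mathscr G),\Delta)$ as a quantum permutation group acting on $W$ with magic unitary $[u_{x,a}]$. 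To compare orbitals I would prove both inclusions: if $u_{x,a}u_{y,b}=0$ then $v_{x,a}v_{y,b}=\phi(u_{x,a}u_{y,b})=0$; conversely, if $v_{x,a}v_{y,b}=0$ then $V(a,b\mid x,y)=0$, so $u_{x,a}u_{y,b}=0$ already holds in $C^*(\mathscr G)$ by Relations~\ref{alstatcon}. Thus $u_{x,a}u_{y,b}\neq 0\iff v_{x,a}v_{y,b}\neq 0$, so the $\sim_2$-partitions of $W\times W$ induced by $C^*(\mathscr G)$ and by $\mathbb G$ coincide, which is the desired equality of quantum orbitals.

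The only nonroutine ingredient is the transitivity of $\mathscr G$, and it rests entirely on the transitivity of the quantum orbital relation $\sim_2$; everything else follows formally from the magic-unitary relations and the universal property. I would therefore invoke the coherent-configuration/equivalence-relation structure of quantum orbitals from \cite[Theorem 3.10]{lupini} rather than attempt to reprove transitivity directly from the comultiplication, where the vanishing of $\Delta(v_{x,a})\Delta(v_{y,b})=\sum_{r,s}v_{x,r}v_{y,s}\otimes v_{r,a}v_{s,b}$ does not obviously force the individual tensor legs to vanish.
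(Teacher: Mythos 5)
Your proof is correct and constructs the same game as the paper: $V(a,b|x,y)=1$ iff $v_{x,a}v_{y,b}\neq 0$. The one point of divergence is transitivity. You invoke, as a black box, the fact recorded in the preliminaries that the orbital relation $\sim_2$ is an equivalence relation (\cite[Theorem 3.10]{lupini}), whereas the paper re-derives exactly this fact inside the proof: it writes $\Delta(u_{x,a}u_{y,b})=\sum_{k,l}u_{x,k}u_{y,l}\otimes u_{k,a}u_{l,b}$ and multiplies on the left by $u_{x,c}\otimes u_{c,a}$ and on the right by $u_{y,d}\otimes u_{d,b}$; the magic-unitary orthogonality collapses the double sum to the single term $u_{x,c}u_{y,d}\otimes u_{c,a}u_{d,b}$, which is nonzero in the minimal tensor product, so $\Delta(u_{x,a}u_{y,b})\neq 0$ and hence $u_{x,a}u_{y,b}\neq 0$. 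This is precisely the ``trick from \cite{lupini}'' and it also resolves the worry in your closing remark: one does not need the vanishing of the sum to force individual tensor legs to vanish; instead one isolates a single nonzero term by multiplying with suitable projections. So the two arguments have the same mathematical content --- yours cites the lemma, the paper reproduces its proof and is thereby self-contained. On the remaining steps you are in fact more explicit than the paper: the universal property giving $\phi:C^*(\mathscr G)\to C(\mathbb G)$ and hence nontriviality (the perfect $C^*$-strategy), and the two-sided comparison $u_{x,a}u_{y,b}=0\iff v_{x,a}v_{y,b}=0$ establishing equality of orbitals, are both compressed in the paper into the single sentence that the orbital equality is ``evident,'' so your write-up usefully fills in what that evidence is.
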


\begin{proof}
Let $U = [u_{x,a}]_{x,a\in W}$ be the magic unitary generating $C(\mathbb{G})$. We define a game $\mathscr G = (W,V)$ with predicate given by $V(a,b|x,y) = 0$ if and only if $u_{x,a}u_{y,b} = 0$. We verify that $\mathscr G = (W,V)$ is a transitive bisynchronous game. Bisynchronicity is easy to see; we establish transitivity using a trick from \cite{lupini}. Let us assume that $V(a,b|c,d) = 1$ and $V(c,d|x,y) = 1$. That is, $u_{c,a}u_{d,b} \neq 0$ and $u_{x,c}u_{y,d} \neq 0$. Then, \begin{equation*}
    \Delta(u_{x,a}u_{y,b}) = \Delta(u_{x,a})\Delta(u_{y,b})  = \sum_{k,l \in W} u_{x,k}u_{y,l}\otimes u_{k,a}u_{l,b}.
\end{equation*} Multiplying by $u_{x,c}\otimes u_{c,a}$ from the left and $u_{y,d}\otimes u_{d,b}$ from the right, \begin{align*}
	(u_{x,c}\otimes u_{c,a})\Delta(u_{x,a}u_{y,b})(u_{y,d}\otimes u_{d,b}) = \sum_{k,l \in W} u_{x,c}u_{x,k}u_{y,l}u_{y,d}\otimes u_{c,a}u_{k,a}u_{l,b}u_{d,b} = u_{x,c}u_{y,d} \otimes u_{c,a}u_{d,b} \neq 0,
\end{align*} where we used $u_{x,c}u_{x,k} = \delta_{k,c}u_{x,c}$ (from synchronicity) and similar identities in the last equality. Hence, $V(a,b|x,y) = 1$, as needed for transitivity.

Finally, it is evident that the quantum orbitals of $C^*(\mathscr G)$ are the same as that of $\mathbb{G}$.\end{proof}

Owing to \cref{thm:game-corr-to-qpg}, given a quantum permutation group $\bb G$ acting on a finite nonempty set $W$ and generated by $\{u_{x,a}:x,a\in W\}$, we define the \emph{transitive bisynchronous game $\mathscr G = (W, V)$ associated with $\bb G$} by $V(a,b|x,y) = 0$ if and only if $u_{x,a}u_{y,b} = 0$. We should note that this correspondence is not one-to-one. Indeed, the game corresponding to a quantum permutation group only depends on its quantum orbitals. Hence, quantum permutation groups inducing the same orbitals have the same game corresponding to them. For example, $S_n$ and $S_n^+$ induce the same quantum orbitals, and the same corresponding game.

\begin{remark}
We note that $(C^*(\mathscr G), \Delta)$ is the ``largest'' compact quantum group which induces the same orbitals as $\mathbb{G}$ since $C^*(\mathscr G)$ is the universal $C^*$-algebra generated by $\{u_{x,a}\}_{x,a\in W}$ such that $U = [u_{x,a}]_{x,a\in W}$ is a magic unitary and $u_{x,a}u_{y,b} = 0$ if they are in different quantum orbitals in the coherent configuration induced by $C^*(\mathscr G)$. Hence, all quantum permutation groups acting on $W$ that induce the same orbitals as $\mathbb{G}$ can be written as quotients of $C^*(\mathscr G)$.   
\end{remark}

\begin{prop}
    Let $\bb G$ be a quantum permutation group acting on a finite nonempty set $W$ and generated by the magic unitary $U = [u_{x,a}]_{x,a\in W}$. Let $\mathcal{P}_{U}$ be the set of all quantum correlations given by \begin{equation*}
        p(a,b|x,y) = \rho(\pi(u_{x,a})\pi(u_{y,b}))
    \end{equation*} where $\pi: C(\mathbb{G}) \to \mathcal{A}$ is a $\ast$-homomorphism, and $\rho$ is a faithful tracial state on $\mathcal{A}$. Then, $\mathcal{P}_U$ is a semigroup. 
\end{prop}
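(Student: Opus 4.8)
The plan is to show that $\mathcal{P}_U$ is closed under the correlation product of Equation~\eqref{eqdef:product-correlation}; associativity then comes for free, because every element of $\mathcal{P}_U$ is a synchronous (hence nonsignalling) correlation of the type described in \cref{t21}(c), and \cref{lem:product-corr-type} already makes the ambient set $C_{ns}$ a semigroup under this product. So I would fix $p,p'\in\mathcal{P}_U$ realised as $p(a,b|x,y)=\rho(\pi(u_{x,a})\pi(u_{y,b}))$ and $p'(a,b|x,y)=\rho'(\pi'(u_{x,a})\pi'(u_{y,b}))$, where $\pi:C(\mathbb{G})\to\mathcal{A}$ and $\pi':C(\mathbb{G})\to\mathcal{A}'$ are $*$-homomorphisms and $\rho,\rho'$ are faithful tracial states. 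The key idea is to produce a realisation of $pp'$ by feeding the comultiplication $\Delta$ of $\mathbb{G}$ into the tensor product of the two representations.

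Concretely, I would set $\tilde\pi := (\pi'\otimes\pi)\circ\Delta : C(\mathbb{G})\to\mathcal{A}'\otimes\mathcal{A}$ and $\tilde\rho := \rho'\otimes\rho$. Here $\tilde\pi$ is a $*$-homomorphism, being the composite of the comultiplication $\Delta$ with the $*$-homomorphism $\pi'\otimes\pi$ on the (minimal) tensor product, and $\tilde\rho$ is a tracial state on $\mathcal{A}'\otimes\mathcal{A}$. Using $\Delta(u_{x,a})=\sum_{r\in W}u_{x,r}\otimes u_{r,a}$, I would expand
\[
\tilde\pi(u_{x,a})\tilde\pi(u_{y,b}) = \sum_{r,s\in W} \pi'(u_{x,r})\pi'(u_{y,s})\otimes \pi(u_{r,a})\pi(u_{s,b}),
\]
and apply $\tilde\rho=\rho'\otimes\rho$ to obtain
\[
\tilde\rho\bigl(\tilde\pi(u_{x,a})\tilde\pi(u_{y,b})\bigr) = \sum_{r,s\in W}\rho(\pi(u_{r,a})\pi(u_{s,b}))\,\rho'(\pi'(u_{x,r})\pi'(u_{y,s})) = \sum_{r,s\in W}p(a,b|r,s)\,p'(r,s|x,y),
\]
which is exactly $(pp')(a,b|x,y)$ by Equation~\eqref{eqdef:product-correlation}. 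The choice of which tensor leg carries $\pi'$ versus $\pi$ is forced by matching the question index $x$ of $p'$ to the first leg of $\Delta$ and the answer index $a$ of $p$ to the second.

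The one genuine obstacle is that the definition of $\mathcal{P}_U$ demands a \emph{faithful} tracial state, whereas $\tilde\rho=\rho'\otimes\rho$ need not be faithful on the minimal tensor product. I would dispose of this by a trace-ideal quotient: because $\tilde\rho$ is tracial, its null space $N:=\{c\in\mathcal{A}'\otimes\mathcal{A}:\tilde\rho(c^*c)=0\}$ is a closed two-sided ideal — it is automatically a closed left ideal, and traciality together with the Cauchy--Schwarz inequality for $\tilde\rho$ upgrades it to a right ideal — so $\mathcal{C}:=(\mathcal{A}'\otimes\mathcal{A})/N$ is a $C^*$-algebra carrying the induced \emph{faithful} tracial state $\bar\rho$. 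Replacing $\tilde\pi$ by $q\circ\tilde\pi$, where $q$ is the quotient map, changes none of the numbers computed above, so $pp'$ is realised by the triple $(\mathcal{C},\, q\circ\tilde\pi,\, \bar\rho)$ and hence lies in $\mathcal{P}_U$. This establishes closure, and the semigroup claim follows. If instead one insists that $\mathcal{A},\mathcal{A}'$ be finite-dimensional, the quotient step is unnecessary, since a tensor product of faithful states on finite-dimensional algebras is already faithful.
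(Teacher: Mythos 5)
Your proof is correct, and its core construction is exactly the paper's: given two realisations $(\pi,\rho)$ and $(\pi',\rho')$, form $(\pi'\otimes\pi)\circ\Delta$ on the minimal tensor product together with the tensor product of the states, and check that this realises the product correlation (the paper writes $\pi=(\pi_1\otimes\pi_2)\circ\Delta$ and notes $p=p_2p_1$, calling the verification easy; you carry it out explicitly, which is a welcome addition). The one place you diverge is the faithfulness of the tensor-product state, and there your framing contains a factual error: you assert that $\rho'\otimes\rho$ ``need not be faithful'' on $\mathcal{A}'\otimes\mathcal{A}$, but for \emph{faithful tracial} states the tensor product state \emph{is} faithful on the minimal tensor product --- this is precisely what the paper invokes by citing the appendix of Avitzour's paper, and it can be seen from the fact that the GNS vector of a faithful trace is separating, so the tensor product of the two GNS vectors is separating for the tensor product. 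Fortunately your misstatement does no damage, because the workaround you substitute is valid: the left kernel $N$ of a tracial state is a closed two-sided ideal (your Cauchy--Schwarz argument is right), the quotient carries an induced faithful tracial state, and composing with the quotient map leaves all the computed values $\rho'\otimes\rho\bigl(\tilde\pi(u_{x,a})\tilde\pi(u_{y,b})\bigr)$ unchanged, so $pp'$ is still realised by a $*$-homomorphism into a $C^*$-algebra with a faithful trace. The trade-off: the paper's route is shorter but leans on a cited external result; yours is entirely self-contained and never needs to know whether the tensor-product trace is faithful --- though you should delete the claim that it can fail to be.
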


\begin{proof}
Let $\pi_1:C(\mathbb{G})\to \mathcal{A}_1$ and $\pi_2:C(\mathbb{G})\to \mathcal{A}_2$ be two $*$-homomorphisms to $C^*$-algebras $\mathcal{A}_1, \mathcal{A}_2$ equipped with faithful tracial states $\rho_1$ and $\rho_2$ respectively. These two $*$-homomorphisms give rise to a new $*$-homomorphism $\pi: C(\mathbb G) \to \mathcal{A}_1 \otimes \mathcal{A}_2$ defined by $\pi = (\pi_1\otimes \pi_2)\circ \Delta$. Similarly, we may define a faithful tracial state on $\mathcal{A}_1 \otimes \mathcal{A}_2$ by setting $\rho = \rho_1 \otimes \rho_2$ (see \cite[Appendix]{avitzour1982free} for a proof of faithfulness of $\rho$). It is then easy to verify that if $p_1, p_2, p$ are the quantum correlations obtained from representations $\pi_1,\pi_2, \pi$ together with states $\rho_1, \rho_2, \rho$ respectively, then $p = p_2p_1$. Hence, $\cl P_U$ is indeed a semigroup. \end{proof}

\begin{remark}
We should note that this correspondence is not one-to-one either. For example, it follows from the results of \cite{roberson-schmidt} that $S_4$ and $S_4^+$, paired with their fundamental representations, give rise to the same semigroup of correlations.    
\end{remark}

\subsection{Equivalence of existence of local, quantum commuting, and \texorpdfstring{$C^*$}{C*}-algebraic strategies}
We now proceed to show that if a transitive bisynchronous game has a perfect $C^*$-strategy, then it has a perfect quantum commuting strategy.

We recall the following results from {\cite[Section 2]{bedos}}. Suppose $\bb G = (\cl A, \Delta)$ is a compact quantum group with Haar state $h$. Let $\cl O(\bb G)$ be the Hopf $*$-algebra of $\bb G$ equipped with counit $\epsilon$ and antipode $S$. In general, $h$ is faithful on $\cl O(\bb G)$ but not on the $C^*$-algebra $\cl A$. Let $N_h$ be the left kernel of $h$ which is a two-sided ideal of $\cl A$, and set $\cl A_r = \cl A/N_h$ to be the quotient $C^*$-algebra. Let $\theta:\cl A\to \cl A_r$ be the quotient map. Then, $\bb G_r := (\cl A_r, \Delta_r)$ is a compact quantum group where $\Delta$ is defined by $\Delta_r(\theta(a)) := (\theta \otimes \theta) \Delta(a)$ for all $a\in \cl A$. The Haar state $h_r$ on $\cl A_r$ satisfies $h = h_r \circ \theta$, and is faithful. The compact quantum group $\bb G_r$ is called the \emph{reduced compact quantum group} of $\bb G$ and $\theta$ is called the \emph{canonical map} from $\cl A$ onto $\cl A_r$. The Hopf $*$-algebra $\cl O(\bb G_r)$ is given by $\cl O(\bb G_r) := \theta(\cl O(\bb G))$, and its counit $\epsilon_r$ and antipode $S_r$ are given by $\epsilon = \epsilon_r\circ \theta$ and $\theta\circ S = S_r\circ \theta$, respectively. We say $\bb G$ is \emph{co-amenable} if the counit $\epsilon_r$ of $\cl O(\bb G_r)$ is norm-bounded (and hence can be extended to a $*$-homomorphism on $\cl A_r$). 

Let $\mathscr G = (W, V)$ be a transitive bisynchronous game with a perfect $C^*$-algebraic strategy, and let $(C^*(\mathscr G), \Delta)$ be the associated quantum permutation group acting on $W$. Then, the Haar state $h$ is tracial and faithful on $\mathcal{A}(\mathscr G)$. However, in general, the Haar state $h$ on a compact quantum group is neither faithful nor tracial on $C^*(\mathscr G)$; see \cite{bedos,woronowicz2} for examples of such compact quantum groups. It is known that even for quantum permutation groups, the Haar state may not be faithful, see \cite{banica99} or \cite[Section 2.7]{neshveyev}.

\begin{theorem}[{\cite[Theorem 2.2]{bedos}}]
A compact quantum group $\bb G$ is co-amenable if, and only if, its Haar state is faithful and its counit is norm-bounded.
\end{theorem}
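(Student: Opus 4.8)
The statement splits into two implications, and the setup preceding the theorem makes the bookkeeping transparent. Since $\mathcal{A}_r = \mathcal{A}/N_h$ with $N_h$ the (two-sided) kernel of the Haar state $h$, faithfulness of $h$ on $\mathcal{A}$ is \emph{by definition} the condition $N_h = \{0\}$, i.e.\ injectivity of the canonical map $\theta\colon \mathcal{A}\to\mathcal{A}_r$; as $\theta$ is always surjective, this is the same as $\theta$ being a $*$-isomorphism. With this hinge in hand the implication ``$h$ faithful and $\epsilon$ bounded $\Rightarrow$ co-amenable'' is immediate: faithfulness makes $\theta$ a $*$-isomorphism, so $\epsilon_r = \epsilon\circ\theta^{-1}$ inherits boundedness from $\epsilon$, which is exactly co-amenability in the sense defined above. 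I would record this direction first as a warm-up.

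For the converse, assume $\mathbb{G}$ is co-amenable, so $\epsilon_r$ is norm-bounded on $\mathcal{A}_r$. Boundedness of $\epsilon$ on $\mathcal{A}$ is then free of charge: $\theta$ is a quotient $*$-homomorphism, hence contractive, so $\epsilon = \epsilon_r\circ\theta$ is bounded. The entire content therefore collapses to showing that $\theta$ is injective (equivalently, by the hinge, that $h$ is faithful). My plan is to prove that the norm $\mathcal{A}$ induces on the dense Hopf $*$-algebra $\mathcal{O}(\mathbb{G})$ agrees with the reduced norm $\|\theta(\cdot)\|$; because $\theta$ is a surjection between two $C^*$-completions of $\mathcal{O}(\mathbb{G})$, equality of these norms is precisely the assertion that $\theta$ is isometric, hence injective.

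To force the norms to coincide I would run the compact-quantum-group analogue of Fell's absorption principle. Boundedness of $\epsilon_r$ says exactly that the trivial corepresentation is weakly contained in the regular corepresentation, i.e.\ in the GNS representation of the Haar state carried by $\mathcal{A}_r$. The structural input is that, for \emph{any} finite-dimensional corepresentation $U$ of $\mathbb{G}$, the tensor product of $U$ with the regular corepresentation (formed via $\Delta$ and the multiplicative unitary) is unitarily equivalent to a multiple of the regular corepresentation. Pushing the weak containment ``trivial $\prec$ regular'' through this absorption upgrades it to ``$U\prec$ regular'' for every $U$, and hence to $\|\cdot\|_u \le \|\cdot\|_r$ on $\mathcal{O}(\mathbb{G})$, where $\|\cdot\|_u$ denotes the universal norm. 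Since $\|\cdot\|_r \le \|\cdot\|_{\mathcal{A}} \le \|\cdot\|_u$ always holds, all three norms coincide, $\theta$ is isometric, and $h$ is faithful, completing the argument.

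I expect the main obstacle to lie entirely in making the absorption step rigorous in the quantum setting: one must formulate weak containment of corepresentations correctly, verify the Fell-type decomposition using the multiplicative unitary and slice maps, and control the passage from finite sums of corepresentations to the supremum defining $\|\cdot\|_u$ through a density argument on $\mathcal{O}(\mathbb{G})$. A cleaner alternative would reinterpret co-amenability of $\mathbb{G}$ as amenability of the dual discrete quantum group $\hat{\mathbb{G}}$ and extract an invariant mean (a Reiter/F\o lner-type net) that implements the weak containment directly; the cost is importing more machinery about discrete quantum groups, which for this essentially self-contained statement I would rather avoid.
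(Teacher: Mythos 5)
The paper itself gives no proof of this statement --- it is quoted verbatim from B\'edos--Murphy--Tuset --- so your proposal must be judged against the standard proof of the cited result, whose skeleton you have in fact reproduced correctly: the easy direction (faithfulness of $h$ forces $N_h=\{0\}$, so $\theta$ is a $*$-isomorphism and $\epsilon_r=\epsilon\circ\theta^{-1}$ is bounded), the reduction of the hard direction to injectivity of $\theta$, and the sandwich $\|\cdot\|_r\le\|\cdot\|_{\cl A}\le\|\cdot\|_u$ on $\cl O(\bb G)$, which reduces everything to $\|\cdot\|_u\le\|\cdot\|_r$. The gap lies exactly in the step you deferred, and it is conceptual rather than technical: you formulate the Fell absorption for \emph{finite-dimensional corepresentations of $\bb G$}, but these are the wrong objects. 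The universal norm $\|\cdot\|_u$ is a supremum over $*$-representations of the $*$-algebra $\cl O(\bb G)$ (equivalently, unitary corepresentations of the \emph{dual discrete} quantum group $\hat{\bb G}$), and these are typically infinite-dimensional. Finite-dimensional corepresentations of $\bb G$ are unitaries in $M_n(\cl O(\bb G))$ whose matrix coefficients are \emph{elements} of $\cl O(\bb G)$; they are not representations of it, and no ``density argument on $\cl O(\bb G)$'' converts statements about them into control of $\|\cdot\|_u$, because the deficiency is in the class of representations, not in which algebra elements you test. The test case $\bb G=\hat\Gamma$ for a discrete group $\Gamma$ (so $\cl O(\bb G)=\bb C[\Gamma]$, $\|\cdot\|_u$ the norm of $C^*(\Gamma)$ and $\|\cdot\|_r$ that of $C^*_r(\Gamma)$) makes the mismatch stark: the finite-dimensional corepresentations of $\bb G$ are direct sums of group elements (since $\Delta(g)=g\otimes g$), and absorbing them into the regular corepresentation is just translation invariance of $\lambda$, which says nothing about $C^*(\Gamma)$ versus $C^*_r(\Gamma)$; the absorption actually needed is $\lambda\otimes\pi\simeq\lambda\otimes 1_{H_\pi}$ for \emph{arbitrary unitary representations $\pi$ of $\Gamma$}, i.e., $*$-representations of $\cl O(\bb G)$.

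The step can be repaired, and this is essentially the cited proof. For any unital $*$-representation $\pi:\cl O(\bb G)\to\cl B(K)$, write $\Delta(a)=\sum a_{(1)}\otimes a_{(2)}$ and let $\Lambda$ be the GNS map of $h$, with $\lambda$ the GNS representation (identified with $\theta$, so $\cl A_r=\lambda(\cl A)$). The map $\Lambda(a)\otimes\xi\mapsto\sum\Lambda(a_{(1)})\otimes\pi(a_{(2)})\xi$ is isometric, by invariance of the Haar state in the form $(h\otimes\mathrm{id})\Delta(b^*a)=h(b^*a)1$, and has dense range, because $\Delta(\cl O(\bb G))(1\otimes\cl O(\bb G))$ spans $\cl O(\bb G)\odot\cl O(\bb G)$ (this uses the antipode, i.e., the Hopf structure); the resulting unitary intertwines $\lambda\otimes 1_K$ with $(\lambda\otimes\pi)\Delta$, so $\|(\lambda\otimes\pi)\Delta(a)\|=\|\lambda(a)\|=\|a\|_r$. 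Now co-amenability enters: $\epsilon_r$ is a character on $\cl A_r$, hence a state, so the slice map $\epsilon_r\otimes\mathrm{id}:\cl A_r\otimes_{\min}\cl B(K)\to\cl B(K)$ is contractive, and the counit identity gives $(\epsilon_r\otimes\mathrm{id})\left((\lambda\otimes\pi)\Delta(a)\right)=\pi\left((\epsilon\otimes\mathrm{id})\Delta(a)\right)=\pi(a)$. Therefore $\|\pi(a)\|\le\|a\|_r$ for every $\pi$, i.e., $\|a\|_u\le\|a\|_r$, which closes your argument. Note that this needs no weak-containment formalism, no multiplicative-unitary calculus beyond the displayed unitary, and no finite-dimensionality; your fallback via invariant means on $\hat{\bb G}$ would import strictly more machinery only to prove this same inequality.
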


\begin{theorem}[{\cite[Proposition 1.7.9]{neshveyev}}] Let $\bb G$ be a compact quantum group and $h$ be its Haar state. Let $\cl{O}(\bb G)$ be the Hopf $*$-algebra of $\bb G$, and $\epsilon$ and $S$ be the counit and antipode of $\cl{O}(\bb G)$, respectively. Then, $h$ is tracial if, and only if, $S^2 = \mathrm{id}$. (Such a compact quantum group is said to be of \emph{Kac type}.)
\end{theorem}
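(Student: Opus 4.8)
The plan is to reduce the traciality of $h$ to the vanishing of the modular automorphism group of $h$, and then to identify both that condition and the condition $S^2 = \mathrm{id}$ with the triviality of Woronowicz's $F$-matrices. I would first recall the modular machinery attached to the Haar state. By Woronowicz's theory there is a one-parameter family of characters $(f_z)_{z\in\bb C}$ on $\cl O(\bb G)$, each a unital $*$-homomorphism, with $f_0 = \epsilon$ and $f_z * f_w = f_{z+w}$ (convolution), such that on the matrix coefficients $u^\alpha_{ij}$ of an irreducible representation $U^\alpha$ one has $f_z(u^\alpha_{ij}) = (F_\alpha^z)_{ij}$ for a positive invertible matrix $F_\alpha$ normalized by $\tr(F_\alpha) = \tr(F_\alpha^{-1})$. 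In this language the antipode square and the modular automorphism group are implemented by a sandwiched convolution,
\[
S^2(a) = f_1 * a * f_{-1}, \qquad \sigma_t(a) = f_{it} * a * f_{it}, \qquad g * a * g' := (g\otimes\mathrm{id}\otimes g')(\Delta\otimes\mathrm{id})\Delta(a),
\]
and the Haar state obeys the KMS-type identity $h(ab) = h(b\,\sigma_{-i}(a))$ with $\sigma_{-i}(a) = f_1 * a * f_1$.

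With this in place the whole statement becomes a chain of equivalences funneled through the single condition $F_\alpha = I$ for every irreducible $\alpha$. First, $h$ is tracial iff $h(ab) = h(ba)$ for all $a,b \in \cl O(\bb G)$; substituting the KMS identity and using that $h$ is faithful on $\cl O(\bb G)$, this holds iff $\sigma_{-i} = \mathrm{id}$, i.e.\ $f_1 * a * f_1 = a$ for all $a$. Restricting to a block, this reads $(F_\alpha\otimes 1)U^\alpha(F_\alpha\otimes 1) = U^\alpha$; applying the slice $(\mathrm{id}\otimes\epsilon)$ and using $(\mathrm{id}\otimes\epsilon)U^\alpha = I$ collapses it to $F_\alpha^2 = I$, whence $F_\alpha = I$ by positivity. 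The converse, $F_\alpha = I \Rightarrow f_1 = \epsilon \Rightarrow \sigma_{-i} = \mathrm{id}$, is immediate from the counit axiom.

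For the remaining link I would show $S^2 = \mathrm{id} \iff F_\alpha = I$ for all $\alpha$. If $F_\alpha = I$ then $f_1 = f_{-1} = \epsilon$, so $S^2 = \epsilon * \cdot * \epsilon = \mathrm{id}$ by the counit property. Conversely, $S^2 = \mathrm{id}$ forces $(F_\alpha\otimes 1)U^\alpha = U^\alpha(F_\alpha\otimes 1)$, so $F_\alpha \in \mathrm{Mor}(U^\alpha,U^\alpha) = \bb C\, I$ by irreducibility (Schur); writing $F_\alpha = c_\alpha I$ with $c_\alpha > 0$, the normalization $\tr(F_\alpha) = \tr(F_\alpha^{-1})$ gives $c_\alpha = c_\alpha^{-1}$, hence $c_\alpha = 1$ and $F_\alpha = I$. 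Combining the two chains yields $h$ tracial $\iff F_\alpha = I$ for all $\alpha \iff S^2 = \mathrm{id}$.

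The main obstacle is not the logic, which is a short diagram chase, but marshalling the structural input with the correct signs and normalization: the existence and positivity of the Woronowicz characters $f_z$, the two conjugation formulas for $S^2$ and $\sigma_t$, the KMS identity for $h$, and the faithfulness of $h$ on $\cl O(\bb G)$. All of these are standard consequences of the Peter--Weyl/Woronowicz theory recorded in \cite{neshveyev}. The care concentrates in the two genuinely nontrivial implications where these inputs are used, namely $\sigma_{-i} = \mathrm{id} \Rightarrow F_\alpha = I$ via the counit slice, and $S^2 = \mathrm{id} \Rightarrow F_\alpha = I$ via Schur's lemma together with the normalization $\tr(F_\alpha) = \tr(F_\alpha^{-1})$.
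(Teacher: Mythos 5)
Your proof is correct: the paper itself gives no proof of this statement---it is imported wholesale as \cite[Proposition 1.7.9]{neshveyev}---and your argument (the Woronowicz characters $f_z$, the KMS identity $h(ab)=h(b(f_1*a*f_1))$, the counit slice forcing $F_\alpha^2=I$, and Schur's lemma combined with the normalization $\tr(F_\alpha)=\tr(F_\alpha^{-1})$) is essentially the standard proof recorded in that reference. One minor slip that does not affect the argument: for real $z$ the characters $f_z$ are unital homomorphisms but not $*$-homomorphisms, since they satisfy $f_z(a^*)=\overline{f_{-\bar z}(a)}$, so only the $f_{it}$ with $t\in\bb R$ are hermitian.
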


\begin{prop}\label{prop:Haar-on-qpg}
The Haar state $h_r$ on a reduced quantum permutation group is both faithful and tracial.
\end{prop}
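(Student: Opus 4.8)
The plan is to treat faithfulness and traciality separately: the former is essentially built into the construction of the reduced quantum group, while the latter reduces to a short computation showing that a quantum permutation group is of Kac type.

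For faithfulness I would simply invoke the construction recalled just above from \cite{bedos}. The reduced compact quantum group $\bb G_r = (\cl A_r, \Delta_r)$ is by definition the quotient of $\cl A$ by the left kernel $N_h$ of the Haar state, and its Haar state $h_r$ satisfies $h = h_r\circ\theta$ and is faithful. Since a reduced quantum permutation group is merely a special case of a reduced compact quantum group, $h_r$ is faithful with no further work.

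For traciality the key point is to show that a quantum permutation group satisfies $S^2 = \mathrm{id}$, i.e.\ is of Kac type; the quoted theorem that $h$ is tracial if and only if $S^2 = \mathrm{id}$ then gives traciality of $h$ on $\cl O(\bb G)$. Let $U = [u_{i,j}]_{i,j\in W}$ be the magic unitary generating $C(\bb G)$, and recall from the discussion preceding \cref{def:Hopf-*-algebra} that the antipode $S$ of $\cl O(\bb G)$ satisfies $(\mathrm{id}\otimes S)U = U^{-1}$. Since $U$ is unitary, $U^{-1} = U^*$, and because each $u_{i,j}$ is a self-adjoint projection, the $(i,j)$ entry of $U^*$ equals $u_{j,i}^* = u_{j,i}$. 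Comparing entries yields $S(u_{i,j}) = u_{j,i}$ for all $i,j\in W$, whence $S^2(u_{i,j}) = S(u_{j,i}) = u_{i,j}$. As $S^2$ is a unital homomorphism agreeing with the identity on the generators, $S^2 = \mathrm{id}$ on $\cl O(\bb G)$, so $\bb G$ is of Kac type and $h$ is tracial.

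Finally I would transport traciality to the reduced group. The reduced antipode satisfies $\theta\circ S = S_r\circ\theta$, so $S_r^2\circ\theta = \theta\circ S^2 = \theta$; since $\theta$ is surjective onto $\cl O(\bb G_r) = \theta(\cl O(\bb G))$, this forces $S_r^2 = \mathrm{id}$, and applying the same Kac-type theorem to $\bb G_r$ shows that $h_r$ is tracial. (Equivalently, $h = h_r\circ\theta$ together with traciality of $h$ and surjectivity of $\theta$ makes $h_r$ tracial on the dense subalgebra $\cl O(\bb G_r)$, hence everywhere by continuity.) The only step requiring care is the identification $S(u_{i,j}) = u_{j,i}$: this must be carried out at the level of the Hopf $*$-algebra $\cl O(\bb G)$, where the antipode genuinely exists, rather than on the full $C^*$-algebra $\cl A$; once that is in place, the rest is formal.
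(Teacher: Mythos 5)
Your proposal is correct and takes essentially the same route as the paper: faithfulness of $h_r$ is read off directly from the construction of the reduced quantum group, and traciality is deduced from the Kac-type property via the quoted theorem. The only difference is one of self-containedness: the paper cites \cite[Example 1.7.10]{neshveyev} for the fact that quantum permutation groups are of Kac type and then transfers traciality to $h_r$ by the direct computation $h_r(\theta(a)\theta(b)) = h(ab) = h(ba) = h_r(\theta(b)\theta(a))$ (your parenthetical alternative), whereas you prove the Kac-type property explicitly from $(\mathrm{id}\otimes S)U = U^{-1}$ and self-adjointness of the entries of the magic unitary, and transport it to the reduced group via $\theta\circ S = S_r\circ\theta$.
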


\begin{proof}
Let $\bb G$ be a quantum permutation group. Then, it is of Kac type (see {\cite[Example 1.7.10]{neshveyev}}) which implies that its Haar state is tracial. Hence, the Haar state $h_r$ on the reduced quantum group satisfies: for all $a,b\in \bb G_r$, $h(\theta(ab)) = h(\theta(a)\theta(b)) = h(\theta(b)\theta(a)) = h(\theta(ba)).$ Hence, $h_r$ is tracial. Faithfulness of $h_r$ is by the construction described in the previous paragraph.
\end{proof}

Now we can prove our result about the existence of a perfect qc strategy.

\begin{theorem}
	Let $\mathscr G = (W, V)$ be a transitive bisynchronous game. If $\mathscr G$ has a perfect $C^*$-strategy, it has a perfect quantum commuting strategy.
\end{theorem}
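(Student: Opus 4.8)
The plan is to reduce everything to the characterization of perfect quantum commuting strategies in \cref{t21}(c): to win $\mathscr G$ in the quantum commuting model it suffices to exhibit a unital $*$-homomorphism $\pi:\cl A(\mathscr G)\to \cl B$ into some $C^*$-algebra $\cl B$ that carries a \emph{faithful tracial} state. The natural candidate for $\cl B$ is the $C^*$-algebra of the quantum permutation group attached to $\mathscr G$, equipped with its Haar state; the only thing standing in the way is that the Haar state need not be faithful on $C^*(\mathscr G)$, which forces us to pass to the reduced quantum group first.

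Concretely, I would proceed as follows. Since $\mathscr G$ has a perfect $C^*$-strategy, \cref{main-res} shows that $\bb G = (C^*(\mathscr G),\Delta)$ is a quantum permutation group acting on $W$; being a compact quantum group, it admits a unique Haar state $h$. As $\bb G$ is of Kac type, $h$ is tracial, but in general it is not faithful on the full $C^*$-algebra $C^*(\mathscr G)$. I would therefore quotient by the left kernel $N_h$ to obtain the reduced quantum permutation group $\bb G_r = (C^*(\mathscr G)_r,\Delta_r)$ together with the canonical map $\theta:C^*(\mathscr G)\to C^*(\mathscr G)_r$; by \cref{prop:Haar-on-qpg} its reduced Haar state $h_r$ is both faithful and tracial, and $h = h_r\circ\theta$.

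It then remains to assemble the homomorphism. Let $\iota:\cl A(\mathscr G)\to C^*(\mathscr G)$ be the canonical unital $*$-homomorphism sending each generator $u_{x,a}$ of the game $*$-algebra to the corresponding generator of the universal $C^*$-algebra (this exists by the universal property of $\cl A(\mathscr G)$, since $C^*(\mathscr G)$ satisfies Relations~\ref{alstatcon}), and set $\pi := \theta\circ\iota:\cl A(\mathscr G)\to C^*(\mathscr G)_r$. This is a unital $*$-homomorphism into a $C^*$-algebra equipped with the faithful tracial state $h_r$, so \cref{t21}(c) yields a perfect quantum commuting strategy, realized explicitly by the correlation $p(a,b|x,y)=h_r(\pi(u_{x,a}u_{y,b}))$.

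The substantive point — and the only genuine obstacle — is the faithfulness of the trace demanded by \cref{t21}(c): the Haar state of $C^*(\mathscr G)$ is available for free but may degenerate, and the whole argument hinges on the reduced construction repairing this without destroying traciality, which is exactly what \cref{prop:Haar-on-qpg} supplies. One should also record that $C^*(\mathscr G)_r\neq 0$, so that $\pi$ is a genuine unital map and the resulting correlation is nondegenerate; this is immediate from $h_r(1)=1$.
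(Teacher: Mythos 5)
Your proposal is correct and follows essentially the same route as the paper: pass from $C^*(\mathscr G)$ (a quantum permutation group by \cref{main-res}) to its reduced version via the canonical map $\theta$, invoke \cref{prop:Haar-on-qpg} for a faithful tracial Haar state $h_r$, and conclude via \cref{t21}(c). The only cosmetic difference is that the paper verifies Relations~\ref{alstatcon} directly on the projections $\theta(u_{x,a})$ rather than phrasing the map as the composite $\theta\circ\iota$ on $\cl A(\mathscr G)$, which amounts to the same homomorphism.
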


\begin{proof}
Let $\bb G = (C^*(\mathscr G),\Delta)$ denote the quantum permutation group associated with the game generated by $\{u_{x,a}:x,a\in W\}$ and satisfying Relations~\ref{alstatcon}. Let $\bb G_r = (C^*(\mathscr G)_r,\Delta_r)$ be the reduced compact quantum group of $\bb G$. By \cref{prop:Haar-on-qpg} the Haar state $h_r$ on $C^*(\mathscr G)_r$ is faithful and tracial. Consider the projections $\{\theta(u_{x,a}):x,a\in W\}\subseteq C^*(\mathscr G)_r$. It is clear that $U = [\theta(u_{x,a})]_{x,a\in W}$ is a magic unitary. Moreover, if $V(a,b|x,y)=0$, then $h_r(\theta(u_{x,a}u_{y,b})) = h(u_{x,a}u_{y,b}) = 0$, where the last equality follows from $u_{x,a}u_{y,b} = 0$; so that $\theta(u_{x,a})\theta(u_{y,b}) = 0$, as needed. Hence, $\mathscr G$ has a perfect quantum commuting strategy by \cref{t21}.
\end{proof}

Now, we shall show that existence of a perfect qc-strategy (or equivalently, existence of a perfect $C^*$-strategy) implies the existence of a perfect classical strategy for transitive bisynchronous games.

\begin{prop}
    Let $\mathscr{G} = (W, V)$ be a bisynchronous transitive game with a perfect quantum commuting strategy. Then, $V(x,y|x,y) = 1$ for all $x,y \in W$, which implies that $\mathscr{G}$ has a perfect classical strategy.
\end{prop}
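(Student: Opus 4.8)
The plan is to reduce the statement to the counit computation already sketched in the discussion preceding \cref{main-res}: namely, that if a counit $\epsilon$ with $\epsilon(u_{x,a}) = \delta_{x,a}$ is available, then no relation of the form $u_{x,x}u_{y,y} = 0$ can survive, which forces $V(x,y|x,y) = 1$. The only real work is to produce such a counit, and for this I would invoke the compact quantum group structure supplied by \cref{main-res}.

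First I would upgrade the quantum commuting strategy to a $C^*$-strategy. A perfect quantum commuting strategy furnishes, via \cref{t21}(c), a unital $*$-homomorphism $\pi: \mathcal{A}(\mathscr G) \to \mathcal{B}$ into a $C^*$-algebra carrying a faithful tracial state; composing with the GNS representation of that state yields a nonzero representation of $\mathcal{A}(\mathscr G)$. Since each generator $u_{x,a}$ is a projection and hence has norm at most $1$ in every representation, the universal seminorm on the free $*$-algebra is finite, so the universal $C^*$-algebra $C^*(\mathscr G)$ exists; the nonzero representation just produced shows it is nontrivial. Thus $\mathscr G$ has a perfect $C^*$-strategy, and \cref{main-res} applies: $(C^*(\mathscr G), \Delta)$ is a compact quantum group (indeed a quantum permutation group) with $\Delta(u_{x,a}) = \sum_{r \in W} u_{x,r}\otimes u_{r,a}$.

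Next I would extract the counit and run the contradiction. Being a compact quantum group, $(C^*(\mathscr G), \Delta)$ has a Hopf $*$-algebra $\mathcal{O}(\bb G)$ equipped with a counit $\epsilon$, a unital $*$-homomorphism into $\bb C$ satisfying $(\mathrm{id} \otimes \epsilon)U = 1$ for the fundamental representation $U = [u_{x,a}]_{x,a\in W}$; reading off entries gives $\epsilon(u_{x,a}) = \delta_{x,a}$. Since the $u_{x,a}$ are matrix coefficients of $U$ and $\mathcal{O}(\bb G)$ is a subalgebra, every product $u_{x,a}u_{y,b}$ again lies in $\mathcal{O}(\bb G)$, so $\epsilon$ may be applied to the defining relations. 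Now suppose, for contradiction, that $V(x,y|x,y) = 0$ for some $x,y \in W$. Relations~\ref{alstatcon} then give $u_{x,x}u_{y,y} = 0$ in $C^*(\mathscr G)$, an identity valid in $\mathcal{O}(\bb G)$. Applying $\epsilon$ yields $0 = \epsilon(u_{x,x})\epsilon(u_{y,y}) = \delta_{x,x}\delta_{y,y} = 1$, a contradiction. Hence $V(x,y|x,y) = 1$ for all $x,y \in W$, which says precisely that the identity function satisfies $V(\mathrm{id}_W(x), \mathrm{id}_W(y)|x,y) = 1$ for all $x,y$; thus $\mathrm{id}_W$ is a perfect deterministic strategy, and by \cref{prop:eq-conditions-perfect-strat} the game has a perfect classical strategy.

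I expect the only delicate points to be the bookkeeping in the first step (confirming that a quantum commuting strategy indeed renders $C^*(\mathscr G)$ nontrivial so that \cref{main-res} becomes available) together with the observation that, although $\epsilon$ is defined only on the dense Hopf $*$-algebra $\mathcal{O}(\bb G)$ rather than on all of $C^*(\mathscr G)$, it still applies to the relation $u_{x,x}u_{y,y} = 0$ because both factors, and hence their product, are matrix coefficients lying in $\mathcal{O}(\bb G)$.
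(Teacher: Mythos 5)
Your proof is correct, and while it starts the same way as the paper's, its key step takes a genuinely different route. Both arguments begin by noting that a perfect quantum commuting strategy makes $C^*(\mathscr G)$ nontrivial (your bookkeeping via \cref{t21}(c) plus GNS is the detail the paper leaves implicit), so that \cref{main-res} endows $(C^*(\mathscr G),\Delta)$ with a compact quantum group structure. From there the paper argues via the Haar state: it passes to the reduced quantum permutation group, uses its faithful tracial Haar state to produce a perfect quantum commuting correlation, and then invokes the orbital computation of \cite[Theorem 4.2]{lupini} to get $p(x,y|x,y) = h(u_{x,x}u_{y,y}) = 1/|\Omega_i| \neq 0$, which forces $V(x,y|x,y)=1$. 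You instead apply the counit of the Hopf $*$-algebra $\mathcal{O}(\bb G)$ --- exactly the computation the paper sketches before \cref{main-res} to explain why $\mathcal{A}(\mathscr G)$ generally fails to be a Hopf $*$-algebra --- but run in the positive direction: the compact quantum group structure guarantees a counit with $\epsilon(u_{x,a}) = \delta_{x,a}$, so the relation $u_{x,x}u_{y,y}=0$ would yield $0 = \epsilon(u_{x,x})\epsilon(u_{y,y}) = 1$. Your handling of the one delicate point (that $\epsilon$ is defined only on the dense Hopf $*$-algebra, but $u_{x,x}u_{y,y}$ lies there because $\mathcal{O}(\bb G)$ is a subalgebra containing the matrix coefficients of the fundamental representation) is precisely the right check, and the deduction of a perfect classical strategy via \cref{prop:eq-conditions-perfect-strat} matches the paper. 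The trade-off: your counit route is more self-contained, needing only general Hopf-algebra facts from the preliminaries and avoiding both the reduced quantum group machinery and the external orbital computation; the paper's route yields extra quantitative information, namely an explicit perfect quantum commuting correlation with $p(x,y|x,y) = 1/|\Omega_i|$, which connects this proposition to the preceding theorem on the existence of quantum commuting strategies.
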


\begin{proof}
    If $\mathscr{G}$ has a perfect quantum commuting strategy, $C^*(\mathscr{G})$ is non-trivial. Hence, if $[\theta(u_{x,a})]$ and $h_r$ are the magic unitary and the Haar state of the reduced quantum permutation group, respectively, then $p(a,b|x,y) = h_r(\theta(u_{x,a}u_{y,b}) = h(u_{x,a}u_{y,b})$ is a perfect quantum commuting strategy for $\mathscr{G}$. In particular, it follows from \cite[Theorem 4.2]{lupini} that for any $x,y \in W$, we have that $p(x,y|x,y) = h(u_{x,x}u_{y,y}) = \frac{1}{|\Omega_i|} \neq 0$, where $\Omega_i$ is the quantum orbital of $C^*(\mathscr{G})$ containing $(x,y)$. Since $p(x,y|x,y) \neq 0$ for a perfect strategy, we see that $V(x,y|x,y) = 1$. Hence, $\mathscr{G}$ has a perfect classical strategy.
\end{proof}

\subsection{Quantum permutation groups arising from transitive nonlocal games}
We end this section by showing that transitive nonlocal games satisfying $V(x,y|x,y) = 1$ for all $x,y \in W$ directly arise from graph automorphism games.

\begin{prop}\label{graph-game-cond}
    Let $V$ be the predicate of a transitive synchronous game. Then, $V$ is the predicate of the graph automorphism game of an complete edge labelled directed graph with loops if and only if satisfies the following conditions: \begin{enumerate}
        \item[(a)] $V(a,b|a,b) = 1$ for all $a,b \in W$
        \item[(b)] $V(a,b|x,y) = V(x,y|a,b)$ for all $a,b,x,y \in W$.
    \end{enumerate}
\end{prop}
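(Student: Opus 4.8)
The plan is to recast the statement in terms of the \emph{win-relation} $\sim$ on $W\times W$ from \cref{prop:game-from-relation}, namely $(a,b)\sim(x,y)$ if and only if $V(a,b|x,y)=1$. First I would write down the explicit predicate of the relevant automorphism game: if $G$ is a complete edge-labelled directed graph with loops on vertex set $W$, specified by a labelling $c:W\times W\to L$, then its automorphism game is the $(G,G)$-isomorphism game with $\mathrm{rel}$ replaced by $c$, so its predicate is $V(a,b|x,y)=1$ if and only if $c(a,b)=c(x,y)$. The key observation is that a predicate has this form precisely when $\sim$ is the pullback of equality along $c$, equivalently precisely when $\sim$ is an \emph{equivalence relation} (one may then take the labels to be the $\sim$-classes). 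Hence the proposition reduces to: for a transitive synchronous game, $\sim$ is an equivalence relation if and only if (a) and (b) hold.

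For the forward implication I would assume $V(a,b|x,y)=1\iff c(a,b)=c(x,y)$ for some labelling $c$. Condition (a) is immediate, since $c(a,b)=c(a,b)$ always holds; condition (b) follows from the symmetry of equality, as $c(a,b)=c(x,y)$ if and only if $c(x,y)=c(a,b)$. (As a consistency check, one verifies directly that any predicate of this form is automatically transitive, matching the hypothesis.)

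For the converse, suppose $V$ is transitive and synchronous and satisfies (a) and (b). I would verify that $\sim$ is an equivalence relation: reflexivity is exactly (a), symmetry is exactly (b), and transitivity of $\sim$ is precisely the content of \cref{prop:game-from-relation} applied to $\mathscr G$. With $\sim$ an equivalence relation, I would define the labelling $c:W\times W\to (W\times W)/{\sim}$ sending each arc to its $\sim$-class, producing a complete edge-labelled directed graph with loops whose automorphism game has predicate $V(a,b|x,y)=1\iff c(a,b)=c(x,y)$, i.e.\ exactly $V$. I would finish by noting that synchronicity gives $V(a,b|x,x)=0$ for $a\neq b$, so by (b) no off-diagonal arc lies in the same $\sim$-class as a loop; thus the loop-labels are disjoint from the off-diagonal labels, confirming that the constructed object is a well-formed labelled graph whose automorphism game is genuinely synchronous.

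The argument is essentially bookkeeping once the translation to $\sim$ is in place, so I do not expect a serious obstacle. The only point needing care is pinning down the correct notion of the automorphism game of an edge-labelled digraph with loops and confirming that its predicate is the equality-of-labels condition above, together with the diagonal/off-diagonal separation that makes the converse construction compatible with synchronicity.
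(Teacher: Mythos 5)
Your proposal is correct and follows essentially the same route as the paper: both arguments reduce the statement to showing that the relation $\sim$ of \cref{prop:game-from-relation} is an equivalence relation (reflexivity from (a), symmetry from (b), transitivity from transitivity of the game), and then label the arcs of the complete directed graph with loops by $\sim$-classes so that the automorphism-game predicate is exactly equality of labels. Your additional check that synchronicity keeps loop-labels disjoint from off-diagonal labels is a harmless refinement the paper omits.
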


\begin{proof}
Since the the predicate of the game is transitive, and it satisfies (a) and (b), it is easy to see that the relation $\sim$ defined in \cref{prop:game-from-relation} becomes an equivalence relation. Hence, we may divide $W \times W$ into equivalence classes $E_1, E_2, ..., E_m$. Let us colour the complete directed graph (including loops) with vertex set $W$ according to these partitions, i.e $(a,b)$ is coloured with the colour $i \in \{1,2,...,m\}$ if and only if it is in the equivalence class $E_i$. Let us call this graph $K_{(W,V)}$. It is easy to see that $V(a,b|x,y) = 1$ if and only if $(a,b) \sim (x,y)$, which happens if and only $(a,b)$ and $(x,y)$ have the same colour. Hence, $V$ is indeed the predicate of the automorphism game of the edge coloured complete directed graph $K_{(W,V)}$. The other direction is quite easy to see.
\end{proof}

\begin{prop}
    Let $\mathscr{G}$ be a bisynchronous transitive game with a perfect quantum strategy. Then, $(C^*(\mathscr{G}), \Delta)$ is the quantum automorphism group of a complete edge and vertex labelled directed graph,
\end{prop}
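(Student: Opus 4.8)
The plan is to reduce the statement to \cref{graph-game-cond}, which characterises exactly when the predicate of a transitive synchronous game is that of the automorphism game of a complete edge-labelled directed graph with loops, and then to identify the resulting quantum automorphism group with $(C^*(\mathscr G),\Delta)$. Since a perfect quantum strategy is in particular a perfect quantum commuting strategy (as $C_q\subseteq C_{qc}$) and also makes $C^*(\mathscr G)$ nontrivial, \cref{main-res} applies, so $(C^*(\mathscr G),\Delta)$ is a quantum permutation group acting on $W$, generated by the magic unitary $U=[u_{x,a}]_{x,a\in W}$ with $u_{x,a}u_{y,b}=0$ precisely when $V(a,b|x,y)=0$. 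It then remains to verify the two hypotheses (a) and (b) of \cref{graph-game-cond} and to match the universal relations.

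For (a), I would invoke the preceding proposition: a perfect quantum commuting strategy forces $V(x,y|x,y)=1$ for all $x,y\in W$, which after relabelling is exactly $V(a,b|a,b)=1$ for all $a,b\in W$. For (b) I would use the symmetry of the quantum orbital relation. By \cite{lupini} the quantum orbitals of $\bb G=(C^*(\mathscr G),\Delta)$ form a coherent configuration, so the relation $\sim_2$ on $W\times W$ is an equivalence relation and in particular symmetric. By construction $V(a,b|x,y)=1 \iff u_{x,a}u_{y,b}\neq 0 \iff (x,y)\sim_2(a,b)$, and likewise $V(x,y|a,b)=1 \iff (a,b)\sim_2(x,y)$. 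Symmetry of $\sim_2$ then yields $V(a,b|x,y)=V(x,y|a,b)$ for all $a,b,x,y\in W$, which is (b).

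With (a) and (b) in hand, \cref{graph-game-cond} produces a complete directed graph with loops $K_{(W,V)}$ whose edges and loops are coloured by the $\sim$-equivalence classes, so that $V(a,b|x,y)=1$ iff $(a,b)$ and $(x,y)$ receive the same colour; the colouring of the loops $(a,a)$ supplies the vertex labels and that of the off-diagonal pairs the edge labels. To finish I would identify $(C^*(\mathscr G),\Delta)$ with $\mathrm{Qut}(K_{(W,V)})$: writing $A_c$ for the adjacency matrix of colour class $c$, the condition that the magic unitary $U$ commute with every $A_c$ is equivalent to $u_{x,a}u_{y,b}=0$ whenever $(x,y)$ and $(a,b)$ lie in different colour classes, i.e.\ whenever $V(a,b|x,y)=0$. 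These are exactly Relations~\ref{alstatcon} defining $C^*(\mathscr G)$, so $C(\mathrm{Qut}(K_{(W,V)}))$ and $C^*(\mathscr G)$ are universal $C^*$-algebras on the same generators and relations and therefore coincide; as both comultiplications are given by $\Delta(u_{x,a})=\sum_{r\in W}u_{x,r}\otimes u_{r,a}$, the two compact quantum groups agree.

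I expect the main obstacle to be step (b): one must ensure that the symmetry $V(a,b|x,y)=V(x,y|a,b)$ genuinely holds, which hinges on $\sim_2$ being a true equivalence relation rather than merely reflexive and transitive. This symmetry is precisely the feature separating a colour partition from a general transitive relation, and it is exactly what upgrades the quantum permutation group to the quantum \emph{automorphism} group of a labelled graph. A secondary point needing care is the standard but nonobvious equivalence, for coloured graphs, between $U$ commuting with all colour-adjacency matrices and the vanishing relations $u_{x,a}u_{y,b}=0$ across distinct colour classes.
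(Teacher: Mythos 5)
There is a genuine gap, and it sits exactly where you flagged the ``main obstacle''. Your argument applies \cref{graph-game-cond} to the \emph{original} predicate $V$, and rests on the claimed equivalence $V(a,b|x,y)=1 \iff u_{x,a}u_{y,b}\neq 0$ in $C^*(\mathscr G)$. Only one direction holds: Relations~\ref{alstatcon} impose $u_{x,a}u_{y,b}=0$ whenever $V(a,b|x,y)=0$, but the universal $C^*$-algebra may annihilate many more products than the predicate demands. Consequently, symmetry of the orbital relation $\sim_2$ says nothing about symmetry of $V$, and your condition (b) can genuinely fail. Concretely, take $W=\{1,2\}$ and define $V(a,b|x,x)=\delta_{a,b}$, $V(a,b|1,2)=1$ iff $(a,b)\in\{(1,2),(2,1)\}$, and $V(a,b|2,1)=1$ iff $(a,b)=(2,1)$. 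This game is bisynchronous and transitive, and the identity permutation is a perfect deterministic (hence quantum) strategy; yet $V(2,1|1,2)=1$ while $V(1,2|2,1)=0$, so (b) is false for $V$. Here $C^*(\mathscr G)=\bb C$ with $u_{x,a}=\delta_{x,a}$, so also $u_{1,2}u_{2,1}=0$ despite $V(2,1|1,2)=1$, refuting the claimed equivalence. (The $G$-automorphism game for a graph $G$ with trivial quantum automorphism group gives another counterexample to that equivalence.) Your final step inherits the same problem: if $V$ is not symmetric, its ``colour classes'' are not even well defined.

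The paper avoids this by never applying \cref{graph-game-cond} to $\mathscr G$ itself. It passes to the auxiliary game $\widetilde{\mathscr G}$ associated, via \cref{thm:game-corr-to-qpg}, with the quantum permutation group $(C^*(\mathscr G),\Delta)$; its predicate is \emph{defined} by $V_{\widetilde{\mathscr G}}(a,b|x,y)=1 \iff u_{x,a}u_{y,b}\neq 0$, i.e.\ it is the orbital relation itself. For this predicate your reflexivity argument (via the preceding proposition on quantum commuting strategies) and your symmetry argument (via the coherent-configuration property of quantum orbitals) are both valid, and one checks $C^*(\widetilde{\mathscr G})=C^*(\mathscr G)$ because each algebra's defining relations hold in the other. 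Note the proposition only asserts that $(C^*(\mathscr G),\Delta)$ is a quantum automorphism group; it does not assert that $\mathscr G$ itself is a graph automorphism game, and in the example above it is not. With this one reduction step inserted, the rest of your outline (converting loop colours to vertex labels, and matching the universal relations of $C(\mathrm{Qut}(\widetilde\Gamma))$ with those of $C^*(\widetilde{\mathscr G})$) goes through essentially as in the paper.
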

\begin{proof}
    Let $\mathscr{G}$ be a bisynchronous transitive game with a perfect quantum strategy, and let $\widetilde{\mathscr{G}}$ be the nonlocal game associated with the quantum permutation group $(C^*(\mathscr{G}), \Delta)$. Then, it is clear that $C^*(\widetilde{\mathscr{G}}) = C^*(\mathscr{G})$. 

    Since the predicate $V_{\widetilde{\mathscr{G}}}$ of $\widetilde{\mathscr{G}}$ is given by $V(a,b|x,y) = 1$ if and only if $(x,a)$ and $(y,b)$ are in the same orbital of $C^*(\mathscr{G})$, we see that it satisfies the hypothesis of \cref{graph-game-cond}. Hence, $\widetilde{\mathscr{G}}$ is the predicate of a graph automorphism game of a complete edge labelled directed graph with loops, say $\Gamma$.

    Now, we shall turn $\Gamma$ into a simple complete edge and vertex labelled directed graph $\widetilde{\Gamma}$ by colouring removing the loops, and colouring the vertices in the same colour as the removed loops. We claim that $V_{\widetilde{\mathscr{G}}}(a,b|x,y) = 1$ if and only if \begin{enumerate}[label = \roman*.]
        \item the vertices $x,a$ have the same colour in $\widetilde{\Gamma}$,
        \item the vertices $y,b$ have the same colour $\widetilde{\Gamma}$,
        \item and the edges $(x,a), (y,b)$ have the same colour $\widetilde{\Gamma}$.
    \end{enumerate}
    
    Let $u$ be the fundamental representation of $C^*(\mathscr{G})$. Then, $V_{\widetilde{\mathscr{G}}}(a,b|x,y) = 1$, if and only if $u_{x,a}u_{y,b}\neq 0$. By the construction in \cref{thm:game-corr-to-qpg}, we see that the edges $(x,a), (y,b)$ have the same colour in $\Gamma$, and hence in $\widetilde{\Gamma}$. Similarly, since $u_{x,a}u_{y,b} \neq 0$, we have that $u_{x,a} \neq 0 \Rightarrow u_{x.a}u_{x,a} \neq 0$, and $u_{y,b} \neq 0 \Rightarrow u_{y,b}u_{y,b} \neq 0$. Hence, the loops $(x,x), (a,a)$ and $(y,y), (b,b)$  have the same colour in $\Gamma$, which implies that the vertices $x,a$ and the vertices $y,b$ have the same colour in $\widetilde{\Gamma}$. The other direction of the implication can be proved similarly. 

    Hence, $V_{\widetilde{\mathscr{G}}}$ is the predicate of the graph automorphism game of the graph $\widetilde{\Gamma}$. Hence, $(C^*(\mathscr{G}), \Delta) \cong (C^*(\widetilde{\mathscr{G})}, \Delta) \cong \text{Qut}(\widetilde{\Gamma})$.
\end{proof}

\begin{remark}
    The above proposition implies that not all quantum permutation groups can be realised as the C*-algebra of a bisynchronous transitive game. Indeed, following the same proof as \cite[Theorem 3.7]{banica_mccarthy_2022}, we see that $S_5^+$ has uncountably many subgroups. However, there are only countably many complete edge and vertex labelled graphs. Hence, there are quantum permutation groups that cannot be realised as the quantum automorphism groups of a complete edge and vertex labelled graph.
\end{remark}

\section{Coherent and transitive correlations}\label{sec:coherent-transitive}

All correlations considered in this section have the same questions and answers set, viz., $W$. A correlation $p = (p(a,b|x,y))$ is called \emph{transitive} if for all $a,b,x,y\in W$, if we have $p(a,b|x,y) = 0$, then $p(a,b|r,s)p(r,s|x,y) = 0$ for all $r,s\in W$. 

A correlation $p = (p(a,b|x,y))$ induces a (nonempty) relation $\cl R$ on $W\times W$ by: for $(a,b), (x,y) \in W\times W$ we say $(a,b) \sim (x,y)$ if $p(a,b|x,y) \neq 0$. It is again straightforward to verify that if $p$ is transitive, then its induced relation $\cl R$ is transitive. 

A correlation $p$ is said to be \emph{coherent} if $p$ is bisynchronous and its induced relation $\cl R$ on $W\times W$ is an equivalence relation. (In fact, we only need the relation to be symmetric and transitive: for each $(x,y)\in W\times W$, we have $\sum_{a,b}p(a,b|x,y) = 1$ which implies that there exists some $(a,b)\in W\times W$ such that $(a,b)\sim (x,y)$. By symmetry, $(x,y)\sim (a,b)$, and hence by transitivity, $(x,y)\sim (x,y)$, as needed for reflexivity.) This nomenclature is partly justified by the following lemma.

\begin{lemma}
Let $p\in C_{qc}^{bs}$ be a coherent correlation and let $\cl R$ be the associated relation on $W\times W$. Then the partition of $W\times W$ corresponding to the equivalence relation $\cl R$ is a coherent configuration.
\end{lemma}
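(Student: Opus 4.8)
The plan is to verify the three axioms of a coherent configuration (\cref{def:coherent-config}) directly for the partition $\cl P$ of $W\times W$ induced by the equivalence relation $\cl R$. Writing the qc-correlation as $p(a,b|x,y) = \inner{P_{x,a}Q_{y,b}\psi}{\psi}$ via a pair of commuting PVMs $(P_{x,a})$, $(Q_{y,b})$ and shared state $\psi$, the zero-pattern of $p$ is governed by the products $P_{x,a}Q_{y,b}$. Axioms (a) and (b) are the easy part. For (a), the diagonal classes come from pairs $(x,x)$; since $p$ is bisynchronous, $(x,x)\sim(y,y)$ can only hold when $x=y$ forces nothing, but the point is that any class containing a diagonal element consists entirely of diagonal elements, so the restriction of $\cl P$ to the diagonal is a subfamily of $\cl P$. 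For (b), I would show the relation is symmetric: for a qc-correlation one has $p(a,b|x,y)\neq 0 \iff p(x,y|a,b)\neq 0$, which makes each class closed under the converse operation $R^{\mathrm{op}}$, so $R_i^{\mathrm{op}}$ is again a class.

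The substance is axiom (c): the intersection numbers $p_{i,j}^k$ must be independent of the chosen representative $(x,z)\in R_k$. This is exactly where I expect the main obstacle, and where I would import the operator-algebraic structure rather than argue combinatorially. The strategy I would follow mirrors the argument establishing that quantum orbitals form a coherent configuration in \cite[Theorem 3.10]{lupini}. Concretely, I would pass to the synchronous structure: bisynchronicity gives $P_{x,a}P_{x,a'} = \delta_{a,a'}P_{x,a}$ and the trace formula from \cref{t21}(c), $p(a,b|x,y) = \tau(\pi(u_{x,a}u_{y,b}))$, lets me encode the relation $\cl R$ through vanishing of products $u_{x,a}u_{y,b}$ in a $*$-algebra carrying a faithful tracial state. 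The independence of $p_{i,j}^k$ then follows by expressing the count $|\{y : (x,y)\in R_i,\ (y,z)\in R_j\}|$ as a trace of a suitable product of the associated projections and showing this trace depends only on the class $R_k$ of $(x,z)$, not on the representative.

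The key steps, in order, are: (1) fix the commuting PVM realization and record the bisynchronous relations $P_{x,a}P_{x,a'}=\delta_{a,a'}P_{x,a}$ together with the trace representation of $p$; (2) dispatch axiom (a) by showing diagonal-containing classes are purely diagonal, using bisynchronicity; (3) dispatch axiom (b) by proving $p(a,b|x,y)\neq 0 \iff p(x,y|a,b)\neq 0$ for qc-correlations; (4) for axiom (c), build the indicator-type projections $E_i := \sum_{(x,y)\in R_i} P_{\bullet}Q_{\bullet}$ associated to each class and translate the combinatorial count $p_{i,j}^k$ into an equation of traces; (5) use transitivity of $\cl R$ to guarantee that the relevant products are supported inside the expected classes, and invoke the faithfulness and traciality of the state to conclude the count is representative-independent. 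The genuine difficulty is step (4)--(5): turning a purely combinatorial counting quantity into an algebraic trace identity that is manifestly representative-independent, which is the heart of why coherence holds and the reason the correlation being \emph{coherent} (symmetric plus transitive, hence inducing an equivalence relation) is the right hypothesis.
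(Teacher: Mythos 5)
Your overall plan (verify the three axioms of \cref{def:coherent-config}, pass to the tracial representation of a synchronous qc-correlation, and mimic the quantum-orbital argument of \cite{lupini}) is the same as the paper's, and your treatment of axiom (a) works. But there are two genuine gaps. First, your step for axiom (b) proves the wrong statement: the equivalence $p(a,b|x,y)\neq 0 \iff p(x,y|a,b)\neq 0$ is just the symmetry of the relation $\cl R$, which is already part of the hypothesis that $p$ is coherent, and it says nothing about the converse operation. Axiom (b) concerns the involution $(x,y)\mapsto (y,x)$ applied \emph{inside each coordinate pair}: you must show that $(a,b)\sim(x,y)$ implies $(b,a)\sim(y,x)$, i.e.\ that $p(b,a|y,x)\neq 0$ whenever $p(a,b|x,y)\neq 0$. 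This is not a formal consequence of symmetry of an equivalence relation (an equivalence relation on $W\times W$ need not be invariant under the coordinate flip), and the paper derives it from the tracial representation of synchronous qc-correlations \cite[Theorem 5.5]{PSSTW}: $p(b,a|y,x) = \tau(\pi(u_{y,b}u_{x,a})) = \tau(\pi(u_{x,a}u_{y,b})) = p(a,b|x,y)$.

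Second, for axiom (c) your steps (4)--(5) restate the goal rather than prove it: ``showing this trace depends only on the class $R_k$'' is precisely what must be established, and the indicator-projection idea is never turned into an identity. The paper's mechanism is concrete: write $p(a,b|x,y)=\tau(u_{x,a}u_{y,b})$ with $\tau$ faithful and tracial, fix two representatives $(x,z),(x',z')\in R_k$ with intermediate-point sets $S$ and $S'$, and sandwich
\begin{equation*}
|S|\,u_{x,x'}u_{z,z'} = \sum_{y\in S}\sum_{y'\in W} u_{x,x'}u_{y,y'}u_{z,z'} = \sum_{y\in S}\sum_{y'\in S'} u_{x,x'}u_{y,y'}u_{z,z'} = \sum_{y'\in S'}\sum_{y\in W} u_{x,x'}u_{y,y'}u_{z,z'} = |S'|\,u_{x,x'}u_{z,z'},
\end{equation*}
where the cross terms vanish because faithfulness of $\tau$ converts $p(x',y'|x,y)=0$ (when $(x',y')\notin R_i$) or $p(y',z'|y,z)=0$ (when $(y',z')\notin R_j$) into $u_{x,x'}u_{y,y'}=0$ or $u_{y,y'}u_{z,z'}=0$; since $(x,z)\sim(x',z')$ gives $u_{x,x'}u_{z,z'}\neq 0$, this forces $|S|=|S'|$. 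Two points your sketch misses entirely: the resolution $\sum_{y\in W}u_{y,y'}=1$ used in the last step is a \emph{column} sum of the magic unitary, which holds only because $p$ is bisynchronous \cite[Remark 2.1]{paulsen} --- this is where bisynchronicity is indispensable rather than merely convenient --- and faithfulness of $\tau$ is what turns vanishing probabilities into vanishing operator products, so it enters in the middle of the computation, not as a final appeal. Without these two ingredients spelled out, the count $p_{i,j}^k$ cannot be shown to be representative-independent.
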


\begin{proof}
We verify the three conditions (a)--(c) in \cref{def:coherent-config}. Let the equivalence classes arising from the equivalence relation $\cl R$ be given by $\{R_i:i\in \cl I\}$.

To verify property (a), let $x\in W$. Since $\cl R$ is a partition of $W\times W$, there exists some $R_i\in \cl R$ such that $(x,x)\in R_i$. It then suffices to show that $R_i$ does not contain ``off-diagonal'' elements. Let $(a,b)\sim (x,x)$. Then $p(a,b|x,x) \neq 0$, and since $p$ is synchronous, we must have $a = b$.

To show (b), fix a class $R_i$, and let $(a,b),(x,y)\in R_i$, that is, $p(a,b|x,y) \neq 0$. By \cite[Theorem 5.5]{PSSTW}, $p(b,a|y,x) = p(a,b|x,y) \neq 0$, so that $(b,a)\sim (y,x)$. Thus the converse of $R_i$ is again an equivalence class.

To show (c), we borrow an argument used in the proof of \cite[Theorem 4.6]{lupini}. By \cite[Theorem 5.5]{PSSTW}, there exists a $C^*$-algebra 7 $\cl A$ equipped with a faithful tracial state $\tau$ and having projections $\{u_{x,a}:x,a\in W\}$ where $\sum_{a\in W} u_{x,a} = 1$ for all $x\in W$ such that $p(a,b|x,y) = \tau(u_{x,a}u_{y,b})$ for all $a,b,x,y\in W$. Fix $i,j,k\in \cl I$. Let $(x,z), (x^{\prime},z^{\prime})\in R_k$, and define \begin{align*}
S &= \{y\in W: (x,y) \in R_i \text{ and } (y,z) \in R_j\}, \\ S^{\prime} &= \{y^{\prime}\in W: (x^{\prime},y^{\prime}) \in R_i \text{ and } (y^{\prime},z^{\prime}) \in R_j\}.
\end{align*} We want to show that $|S| = |S^{\prime}|$. Since $(x,z)\sim (x^{\prime},z^{\prime})$ we see that $u_{x,x^{\prime}}u_{z,z^{\prime}} \neq 0$. Then, $|S| = |S^{\prime}|$ follows from the following sequence of equalities: \begin{align*}
|S|u_{x,x'}u_{z,z'} & = u_{x,x'} \left( \sum_{y \in S} 1 \right) u_{z,z'} = u_{x,x'} \left(\sum_{y \in S} \sum_{y'\in W} u_{y,y'}\right) u_{z,z'} \\
&= \sum_{y \in S}\sum_{y'\in W} u_{x,x'}u_{y,y'}u_{z,z'} \\
&= \sum_{y \in S}\sum_{y' \in S'} u_{x,x'} u_{y,y'} u_{z,z'} = \sum_{y' \in S'}\sum_{y \in S} u_{x,x'} u_{y,y'} u_{z,z'} \\
&= \sum_{y' \in S'}\sum_{y\in W} u_{x,x'} u_{y,y'} u_{z,z'} = u_{x,x'} \left(\sum_{y' \in S'} \sum_{y \in W} u_{y,y'}\right) u_{z,z'} = |S'| u_{x,x'} u_{z,z'}.
\end{align*} Here, in the fourth equality, we have used the following fact: if $y\in S$ and $y^{\prime}\notin S^{\prime}$, then $u_{x,x^{\prime}}u_{y,y^{\prime}}u_{z,z^{\prime}} = 0$.  To see this, observe that $y^{\prime}\notin S^{\prime}$ if and only if either $(x^{\prime},y^{\prime})\notin R_i$ or $(y^{\prime},z^{\prime})\notin R_j$. If $(x^{\prime},y^{\prime})\notin R_i$, then since $y\in S$ we have $(x,y)\in R_i$. Thus $(x^{\prime},y^{\prime})\nsim (x,y)$, so that $\tau(u_{x,x^{\prime}}u_{y,y^{\prime}}) =  p(x^{\prime},y^{\prime}|x,y) = 0$, and so by faithfulness of the state $\tau$ we have $u_{x,x^{\prime}}u_{y,y^{\prime}} = 0$. Similarly, if $(y^{\prime},z^{\prime})\notin R_j$, then since $(y,z)\in R_j$, we have $(y^{\prime},z^{\prime})\nsim (y,z)$, and thus $u_{y,y^{\prime}}u_{z,z^{\prime}} = 0$. Either way, $u_{x,x^{\prime}}u_{y,y^{\prime}}u_{z,z^{\prime}} = 0$. A similar argument is invoked for the sixth equality. Finally, the last equality follows from bisynchronicity of the correlation $p$ \cite[Remark 2.1]{paulsen}.
\end{proof}

\subsection{Correlation generated by the fundamental representation and Haar state}
We begin with a simple family of coherent correlations, and give necessary and sufficient conditions for them to be nonlocal --- they are nonlocal if and only if the coherent configuration associated with them is non-Schurian.

Let $\mathbb{G}$ be a quantum permutation group acting on a set $W$ with fundamental representation $[u_{i,j}]_{i,j\in W}$, and Haar state $h$. In \cite[Theorem 4.2]{lupini}, it was shown that if $R_1,\dots,R_m\subseteq W\times W$ are the quantum orbitals of $\mathbb G$, then
\begin{align*}
h(u_{x,a}u_{y,b}) = 
    \begin{cases}
        \frac{1}{|R_i|} & \text{ \ if \ } (x,y),\ (a,b) \in R_i, \\
        0 & \text{ \ if \ } (x,y) \nsim_2 \ (a,b).
    \end{cases}
\end{align*} Since the Haar state is tracial and $[u_{i,j}]_{i,j\in W}$ is a magic unitary, we see that defining $p_{\mathbb{G}}(a,b|x,y) :=  h(u_{x,a}u_{y,b})$  (for all $a,b,x,y\in W$) yields a quantum commuting correlation. We note that the correlation $p_{\mathbb{G}}$ only depends on the quantum orbitals of $\mathbb{G}$. From \cite[Theorem 3.10]{lupini}, we know that the quantum orbitals of a quantum permutation group form a coherent configuration. Hence, the correlation $p_{\mathbb G}$ is a coherent correlation.

\begin{prop}
Let $\mathbb{G}$ be a quantum permutation group acting on a set $W$. Then, $p_{\mathbb{G}}$ is local if and only if the coherent configuration of quantum orbitals of $\mathbb{G}$ is Schurian.
\end{prop}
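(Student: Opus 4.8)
The plan is to prove both implications by relating the support and the values of $p_{\mathbb{G}}$ to the orbits of an honest permutation group on $W$. Throughout I use the formula of \cite[Theorem 4.2]{lupini} quoted above: $p_{\mathbb{G}}(a,b|x,y) = h(u_{x,a}u_{y,b})$ equals $1/|R_i|$ when $(x,y)$ and $(a,b)$ lie in a common orbital $R_i$, and vanishes otherwise. In particular, the relation induced by $p_{\mathbb{G}}$ on $W\times W$ is exactly ``lying in the same quantum orbital'', which is the equivalence relation whose classes are the $R_i$.

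For the easy direction, suppose the coherent configuration of orbitals is Schurian, realized by a permutation group $K\le S_W$ whose orbits on $W\times W$ are precisely $R_1,\dots,R_m$. I would produce a perfect local strategy by shared randomness: the players sample $g\in K$ uniformly and answer $g(x)$ and $g(y)$ to their respective questions $x,y$. This is synchronous, and the resulting local correlation is $p_{loc}(a,b|x,y) = |\{g\in K: g(x)=a,\ g(y)=b\}|/|K|$. By the orbit--stabilizer theorem this count equals $|K|/|R_i|$ when $(a,b)$ lies in the $K$-orbit $R_i$ of $(x,y)$ and is $0$ otherwise, whence $p_{loc}=p_{\mathbb{G}}$ and $p_{\mathbb{G}}$ is local.

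For the converse, suppose $p_{\mathbb{G}}$ is local, so $p_{\mathbb{G}}=\sum_\lambda \mu_\lambda\, p_\lambda$ is a convex combination of deterministic correlations. Since $p_{\mathbb{G}}$ is synchronous, every deterministic component with $\mu_\lambda>0$ must itself be synchronous, hence of the form $p_\lambda(a,b|x,y)=\delta_{a,f^\lambda(x)}\delta_{b,f^\lambda(y)}$ for a single function $f^\lambda\colon W\to W$; and since $p_{\mathbb{G}}$ is bisynchronous, each such $f^\lambda$ is injective, hence a permutation (if $f^\lambda(x)=f^\lambda(y)=a$ with $x\neq y$ and $\mu_\lambda>0$, then $p_{\mathbb{G}}(a,a|x,y)\ge\mu_\lambda>0$, contradicting bisynchronicity). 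Let $K\le S_W$ be the subgroup generated by $\{f^\lambda:\mu_\lambda>0\}$; I claim its orbits on $W\times W$ are exactly the orbitals $R_i$, which is precisely Schurianness.

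The claim follows from two inclusions. On one hand, for any pair $(x,y)$ the equality $\bigl(f^\lambda(x),f^\lambda(y)\bigr)=(a,b)$ forces $p_{\mathbb{G}}(a,b|x,y)\ge\mu_\lambda>0$, so each generator $f^\lambda$ carries every pair into its own orbital; being a bijection of $W\times W$, it must then preserve the orbital partition, and hence so does all of $K$, giving that each $K$-orbit lies inside a single orbital. On the other hand, if $(a,b)$ and $(x,y)$ share an orbital then $p_{\mathbb{G}}(a,b|x,y)\neq0$, so some generator satisfies $f^\lambda(x)=a,\ f^\lambda(y)=b$, placing $(a,b)$ in the $K$-orbit of $(x,y)$; thus each orbital lies inside a single $K$-orbit. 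Combining the inclusions shows orbits and orbitals coincide. The routine ingredients are the orbit--stabilizer count and the standard decomposition of synchronous (resp.\ bisynchronous) local correlations into functions (resp.\ permutations); the crux, and the point to handle with care, is this converse, and specifically the first inclusion: one must check that passing from the individual deterministic strategies to the group they generate does not merge orbitals into larger orbits, which works because each generator \emph{individually} stabilizes every orbital setwise.
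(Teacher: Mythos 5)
Your proof is correct. In the forward direction it coincides in substance with the paper's: the paper takes a group $G$ realizing the Schurian configuration and notes that the Haar state and fundamental representation of $C(G)$ reproduce $p_{\mathbb{G}}$; since the Haar state of a finite group is uniform averaging over the group and the entries of the fundamental representation are the indicator functions $g \mapsto \delta_{g(x),a}$, this is precisely your shared-randomness strategy, with your orbit--stabilizer count being the computation the paper declares easy. In the converse direction your route is genuinely more elementary: the paper invokes \cref{t21} to obtain a commutative $C^*$-algebra realization of $p_{\mathbb{G}}$, deduces that the universal commutative $C^*$-algebra subject to the magic-unitary, support, and commutation relations is nontrivial, identifies it as $C(G)$ for a permutation group $G \subseteq S_W$ (the full group of permutations preserving the orbital partition), and then asserts that the orbitals of $G$ agree with the quantum orbitals of $\mathbb{G}$. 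You instead decompose the local correlation into deterministic strategies, use synchronicity and bisynchronicity of $p_{\mathbb{G}}$ to see that each surviving deterministic strategy is a single permutation, and let $K$ be the group these generate; your two inclusions (each generator maps every pair into its own orbital and hence, being a bijection of the finite set $W \times W$, stabilizes every orbital setwise, so $K$-orbits refine orbitals; full support of $p_{\mathbb{G}}$ on each orbital puts every orbital inside a single $K$-orbit) supply in full the argument the paper compresses into ``it is easy to see''. What the paper's approach buys is uniformity with the universal-$C^*$-algebra formalism it uses for the analogous results on general coherent and transitive correlations later in the section; what yours buys is a self-contained and explicit verification of the key combinatorial step. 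Note also that your $K$ may be a proper subgroup of the paper's $G$, but either one witnesses Schurianness.
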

\begin{proof}

First, let us assume that the quantum orbitals of $\mathbb{G}$ form a Schurian coherent configuration. Then, there is a finite group $G \subseteq S_W$, whose orbitals are the same as the quantum orbitals of $\mathbb{G}$. It is then easy to see that $C(G)$, its Haar state and its fundamental representation induce the same correlation $p_{\mathbb G}$ as the Haar state and the fundamental representation of $\mathbb{G}$. Since $C(G)$ is a commutative $C^*$-algebra, it follows that $p_{\mathbb G}$ is indeed local.

Now, let us assume that $p_{\mathbb G}$ is local. Then, it follows from \cref{t21} that there is a commutative $C^*$-algebra $\cl A$, with projections $\{p_{i,j}:i,j\in W\}$, satisfying $\sum_{i\in W} p_{i,j} = \sum_{j\in W} p_{i,j} = 1$ and a faithful tracial state $s$, such that $p_{\mathbb G}(a,b|x,y) = s(p_{x,a}p_{y,b})$. In particular, the universal $C^*$-algebra generated by $\{u_{i,j}:i,j\in W\}$ satisfying: \begin{enumerate}
    \item[(a)] $u_{i,j}^2 = u_{i,j}^* = u_{i,j}$,
    \item[(b)] $\sum_{i\in W} u_{i,j} = \sum_{j\in W} u_{i,j} = 1$,
    \item[(c)] $p(a,b|x,y) = 0 \Rightarrow u_{x,a}u_{y,b} = 0$,
    \item[(d)] $u_{x,a}u_{y,b} = u_{y,b}u_{x,a}$,
\end{enumerate} for all $i,j,x,y,a,b$ is non-trivial. Since $[u_{i,j}]$ is a magic unitary, and since this universal $C^*$-algebra is commutative, it is of the form $C(G)$, where $G \subseteq S_W$ is a permutation group. It is easy to see that the orbitals of $G$ are the same as that of $\mathbb{G}$. Hence, the coherent configuration of the quantum orbitals of $\mathbb{G}$ is indeed Schurian by construction.
\end{proof}

To give an explicit example of a quantum permutation group whose quantum orbitals form a non-Schurian coherent configuration, we first need the following result.

\begin{theorem}[{\cite[Theorem 4.5]{lupini}}]\label{thm:qc-isomorphic-iff}
    Let $X$ and $Y$ be connected graphs. Then, $X \cong_{qc} Y$ if and only if there exists $x\in V(X)$ and $y \in V(Y)$ that are in the same orbit of $\mathrm{Qut}(X \sqcup Y)$.
\end{theorem}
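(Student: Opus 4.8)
The plan is to work entirely with the block structure of the fundamental magic unitary $v = [v_{i,j}]_{i,j\in V(X)\sqcup V(Y)}$ of $\mathrm{Qut}(X\sqcup Y)$, organised so that the $V(X)\times V(X)$, $V(X)\times V(Y)$, and $V(Y)\times V(X)$ blocks are visible. A quantum commuting isomorphism $X\cong_{qc}Y$ is, in the spirit of \cref{t21}(c), a magic unitary $w=[w_{x,y}]_{x\in V(X),\,y\in V(Y)}$ over a $C^*$-algebra carrying a faithful tracial state and satisfying the intertwiner $A_X w = w A_Y$; the orbit hypothesis says precisely that the off-diagonal block of $v$ is nonzero, i.e.\ $v_{x_0,y_0}\neq 0$ for some $x_0\in V(X)$, $y_0\in V(Y)$. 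So the theorem asserts that an off-diagonal entry of $v$ survives if and only if the off-diagonal block can be turned into a trace-carrying magic unitary intertwining $A_X$ and $A_Y$.

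The structural engine I would establish first is a \emph{component-projection lemma}. For each $i$ set $P_i := \sum_{j\in V(X)} v_{i,j}$, a projection because the entries in a row of a magic unitary are mutually orthogonal. Since $v$ commutes with $A_{X\sqcup Y}$, it obeys $v_{i,j}v_{i',j'}=0$ whenever $\mathrm{rel}(i,i')\neq \mathrm{rel}(j,j')$; as $X\sqcup Y$ has no edges between the two parts, for adjacent $i\sim i'$ every summand of $P_i(1-P_{i'})$ is a product $v_{i,j}v_{i',j'}$ with $j\in V(X)$, $j'\in V(Y)$, hence with $\mathrm{rel}(j,j')=\nsimeq$ differing from $\mathrm{rel}(i,i')=\sim$, so it vanishes. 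This forces $P_i=P_{i'}$ for adjacent $i,i'$, and connectivity of $X$ and of $Y$ then makes $P_i$ constant on each graph, with common values $P_X$ on $V(X)$ and $P_Y$ on $V(Y)$; these record whether a vertex ``stays in its own graph.''

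For the forward direction I would start from a quantum commuting isomorphism $w$ and assemble the block anti-diagonal matrix carrying $w$ in the $V(X)\times V(Y)$ block and $w^{t}$ in the $V(Y)\times V(X)$ block. A direct check shows this is a magic unitary commuting with $A_X\oplus A_Y$ (using $A_X w = w A_Y$ and its transpose), so the universal property of $\mathrm{Qut}(X\sqcup Y)$ yields a $*$-homomorphism sending $v_{x_0,y_0}\mapsto w_{x_0,y_0}$. Choosing $(x_0,y_0)$ with $w_{x_0,y_0}\neq 0$ (possible since each column of the magic unitary $w$ sums to $1$) shows $v_{x_0,y_0}\neq 0$, i.e.\ $x_0\sim_1 y_0$.

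The converse is where the work lies. From $v_{x_0,y_0}\neq 0$ one gets $v_{x_0,y_0}\le 1-P_X$, hence $P_X\neq 1$; comparing the row- and column-sums of $v$ over the $V(X)$-block yields the identity $|V(X)|\,P_X + |V(Y)|\,P_Y' = |V(X)|\,1$, where $P_Y':=\sum_{j\in V(X)}v_{y,j}$ is the constant projection recording that a $Y$-vertex lands in $X$. If one could pass to a quotient with $P_X=0$, this identity would force $|V(X)|=|V(Y)|$ and $P_Y'=1$, both diagonal blocks would vanish, and $[v_{x,y}]$ would become a magic unitary intertwining $A_X$ and $A_Y$ — exactly the desired isomorphism, with the faithful tracial state supplied by the Haar state via \cref{prop:Haar-on-qpg}. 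The main obstacle is realising this ``type-swapping part'' as a nontrivial trace-carrying object. The naive quotient by the ideal generated by $P_X$ is \emph{not} a quotient compact quantum group: one computes $\Delta(P_X)=P_X\otimes P_X + (1-P_X)\otimes P_Y'$, which is not annihilated modulo that ideal, so the Haar state does not descend. The clean fix would be to exhibit a $\mathbb{Z}/2$-grading of $\mathrm{Qut}(X\sqcup Y)$ through the element $\chi:=2P_X-1$ — one checks $\chi$ is group-like, $\Delta(\chi)=\chi\otimes\chi$, \emph{provided} $|V(X)|=|V(Y)|$ and $P_X$ is central — whose $-1$ eigenpart is the swap quotient carrying the restricted Haar trace. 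Thus the crux I expect to fight is upstream: showing that a single nonvanishing off-diagonal entry $v_{x_0,y_0}$ already forces $|V(X)|=|V(Y)|$ and the centrality of $P_X$. A doubly-stochastic count of $[h(v_{i,j})]$ alone does not deliver the equality of sizes, so I anticipate needing the orbital coherent-configuration structure and the explicit Haar values of \cite[Theorem 4.2]{lupini} to pin it down; once equal sizes and centrality are in hand, the grading, the swap quotient, and the magic-unitary/intertwining bookkeeping all fall into place routinely.
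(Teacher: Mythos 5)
The paper contains no proof of this statement---it is quoted verbatim from \cite[Theorem 4.5]{lupini}---so your proposal can only be measured against the standard argument. Your forward direction is correct, your component-projection lemma is exactly the right structural tool, and your counting identity $|V(X)|\,P_X+|V(Y)|\,P_Y'=|V(X)|\,1$ is correct. But the converse is left genuinely open, and the mechanism you bet on to close it is the one step that cannot work. First, though, note that the two facts you flag as the ``crux you expect to fight'' already follow from what you proved, with no orbital machinery needed: rearranging your identity gives $|V(X)|\,(1-P_X)=|V(Y)|\,P_Y'$, an equality of nonnegative scalar multiples of \emph{projections}; since $0\neq v_{x_0,y_0}\le 1-P_X$, taking norms forces $|V(X)|=|V(Y)|$ and then $1-P_X=P_Y'$. (Centrality of $P_X$ also follows, from the column version of your lemma---$V^{t}$ is again a magic unitary commuting with the symmetric matrix $A_{X\sqcup Y}$---together with self-adjointness of the entries; but the finished proof does not even need it.)

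The serious flaw is the endgame. A qc-isomorphism, per \cref{t21}(c), requires only a $C^*$-algebra with a \emph{faithful tracial state} carrying a magic unitary $W$ with $A_XW=WA_Y$; it does not require any compact quantum group structure on the ``swap part,'' and indeed none can exist: your own computation $\Delta(P_X)=P_X\otimes P_X+(1-P_X)\otimes P_Y'$, or simply $\epsilon(P_X)=1$, shows $P_X$ lies in no Hopf ideal---classically the type-swapping automorphisms form a coset of a subgroup, not a group, and the obstruction persists quantumly. So the $\mathbb{Z}/2$-grading/``swap quotient'' detour chases a structure that is neither available nor needed. The correct completion is a \emph{corner}, not a quotient. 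Pass to the reduced version of $C(\mathrm{Qut}(X\sqcup Y))$, on which the Haar state $h$ is faithful and tracial (\cref{prop:Haar-on-qpg}), and note the image of $v_{x_0,y_0}$ is still nonzero because $h(v_{x_0,y_0})>0$ by the orbital values of \cite[Theorem 4.2]{lupini}. Put $r=1-P_X$ and compress to the corner $r\mathcal{A}r$: row orthogonality gives $v_{x,y}P_X=0$, hence (taking adjoints) $v_{x,y}=rv_{x,y}r$ for all $x\in V(X)$, $y\in V(Y)$; the rows of $W=[v_{x,y}]$ sum to $r$ and its columns sum to $r$ by the column lemma plus the counting above; the relation $A_XW=WA_Y$ is just the off-diagonal block of $A_{X\sqcup Y}V=VA_{X\sqcup Y}$; and $h(\cdot)/h(r)$ restricted to the corner is a faithful tracial state (restriction of a faithful trace to a corner is again one). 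This yields $X\cong_{qc}Y$ and is essentially the argument of \cite{lupini}. In short: right skeleton and right lemmas, but as written the proposal is not a proof, and its proposed repair---making the Haar state descend to a quotient quantum group---is precisely the step that fails.
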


\begin{prop}
    Let $X$ and $Y$ be two connected non-isomorphic graphs such that $X \cong_{qc} Y$. Then, the quantum orbitals of $\mathrm{Qut}(X \sqcup Y)$ form a non-Schurian coherent configuration. Hence, $p_{{\mathrm{Qut}{(X \sqcup Y)}}}$ is nonlocal.
\end{prop}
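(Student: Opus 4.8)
The plan is to reduce everything to the immediately preceding proposition: once I establish that the coherent configuration of quantum orbitals of $\mathrm{Qut}(X \sqcup Y)$ is non-Schurian, the nonlocality of $p_{\mathrm{Qut}(X \sqcup Y)}$ is automatic. So the whole content lies in proving non-Schurianness, which I would argue by contradiction. First I would apply \cref{thm:qc-isomorphic-iff}: since $X,Y$ are connected and $X \cong_{qc} Y$, there exist vertices $x \in V(X)$ and $y \in V(Y)$ lying in a common orbit of $\mathrm{Qut}(X \sqcup Y)$, where this group acts on $W := V(X) \sqcup V(Y)$. This cross-component orbit relation is the lever for the whole argument.

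Next, suppose for contradiction that the configuration is Schurian, realised by a permutation group $G \subseteq S_W$ whose orbitals coincide with the quantum orbitals (and hence whose orbits coincide with the quantum orbits). The key structural step is that the edge relation $E(X \sqcup Y)$ is a union of quantum orbitals: the fundamental magic unitary $[u_{i,j}]$ of $\mathrm{Qut}(X \sqcup Y)$ satisfies $u_{i,j}u_{i',j'} = 0$ whenever the pairs $(i,i')$ and $(j,j')$ have different adjacency types, so any two pairs in the same quantum orbital share an adjacency type. Therefore $E(X \sqcup Y)$ is a union of $2$-orbits of $G$, which forces each $g \in G$ to map edges to edges and, being invertible, non-edges to non-edges; thus $G \subseteq \mathrm{Aut}(X \sqcup Y)$.

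Finally I would exploit the hypothesis that $X$ and $Y$ are connected and non-isomorphic. Every automorphism of $X \sqcup Y$ permutes the connected components and can interchange two of them only when they are isomorphic; since $X \not\cong Y$, each automorphism fixes $V(X)$ and $V(Y)$ setwise. Hence no $g \in G$ can send $x \in V(X)$ to $y \in V(Y)$, so $x$ and $y$ lie in distinct $G$-orbits. This contradicts the Schurian assumption, under which the $G$-orbits are exactly the quantum orbits while $x,y$ share a quantum orbit. Therefore the configuration is non-Schurian, and the preceding proposition yields that $p_{\mathrm{Qut}(X \sqcup Y)}$ is nonlocal. I expect the main obstacle to be the clean justification that the edge relation is a union of quantum orbitals, and hence that $G \subseteq \mathrm{Aut}(X \sqcup Y)$; by contrast, the component-preservation fact for disjoint unions of non-isomorphic connected graphs is routine.
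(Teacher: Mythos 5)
Your proposal is correct and follows essentially the same route as the paper: invoke \cref{thm:qc-isomorphic-iff} to get cross-component vertices $x,y$ in a common quantum orbit, assume Schurianness to obtain a group $G$ with matching orbitals, show $G \subseteq \mathrm{Aut}(X \sqcup Y)$ because the quantum orbitals of $\mathrm{Qut}(X \sqcup Y)$ refine adjacency type, and derive a contradiction since automorphisms of a disjoint union of non-isomorphic connected graphs preserve components. Your explicit justification that the edge set is a union of quantum orbitals is exactly the point the paper compresses into the remark that $\mathrm{rel}(i,k) \neq \mathrm{rel}(j,l)$ forces $v_{i,j}v_{k,l} = 0$, so the two arguments coincide.
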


\begin{proof}
Let $[u_{ij}]$ be the fundamental representation of $\mathrm{Qut}(X \sqcup Y)$ and consider the quantum orbitals of $\mathrm{Qut}(X \sqcup Y)$. From \cref{thm:qc-isomorphic-iff}, it follows that there exists $x \in V(X)$ and $y \in V(Y)$ such that $u_{x,y} \neq 0$. Now, let us assume that there exists some permutation group $G \subseteq S_{V(X) \sqcup V(Y)}$ which induces the same orbitals as $\mathrm{Qut}(X \sqcup Y)$. Let $[v_{ij}]$  be the fundamental representation of $C(G)$. We see that if $\mathrm{rel}(i,k) \neq \mathrm{rel}(j,l)$, then $v_{i,j}v_{k,l} = 0$. Moreover, the $v_{i,j}$'s commute. Hence, $G \subseteq \mathrm{Aut}(X \sqcup Y)$. Since $v_{x,y} \neq 0$, we see that there exists a graph automorphism $g$ of $X \sqcup Y$ mapping $x$ to $y$. This, however, is impossible since $X$ and $Y$ are not isomorphic. Hence, the coherent configuration induced by the quantum orbitals of $\mathrm{Qut}(X \sqcup Y)$ is non-Schurian. \end{proof}

\subsection{General coherent correlations}
Now that we have established a strong connection between the nonlocality of a coherent correlation and the coherent configuration associated with it, we try to generalise similar results for all coherent correlations. 

Let $\Omega = \{R_i\}_{i \in I}$ be a coherent configuration. Then, it is easy to verify that the set of all coherent correlations $p$, for which $\sim_p$ induces the same coherent configuration as $\Omega$ is a semigroup. We shall denote this semigroup by $P_{\Omega}$. By \cref{l37}, the hardest game $\mathscr{G}_{\Omega}$ that can be won using these strategies $P_{\Omega}$ is a bisynchronous transitive game. Let us denote the equivalence relation by $\sim_{\Omega}$. The predicate $V_{\Omega}$ of this game $\mathscr{G}_{\Omega}$ is given by \begin{align*}
    V_{\Omega}(a,b|x,y) =
    \begin{cases}
        0 & \text{if \ } (x,y) \nsim_{\Omega} (a,b), \\
        1 & \text{if \ } (x,y) \sim_{\Omega} (a,b).
    \end{cases}
\end{align*} Since $\mathscr{G}_{\Omega}$ is a transitive game, the set of all perfect qc-strategies for $\mathscr{G}_{\Omega}$ is also a semigroup. If $P_{\Omega}$ is non-empty, $\mathscr{G}_{\Omega}$ has a non-trivial $C^*$-algebra $C^*(\mathscr{G}_{\Omega})$, which is a quantum permutation group. In particular, since all the correlations in $P_{\Omega}$ are perfect qc-strategies for $\mathscr{G}_{\Omega}$, the following results hold.

\begin{prop}
    Let $p$ be a coherent qc correlation. Then, there exists
    \begin{enumerate}[label = \roman*.]
        \item a quantum permutation group $\mathbb{G}$ with fundamental representation $[u_{i,j}]$ acting on a set  $W$, whose orbitals are same as the coherent configuration induced by $\sim_p$,
        \item a $C^*$-algebra $\cl A$ equipped with a faithful tracial state $s$, projections $e_{i,j} \in \cl A$ for $i,j \in W$, such that $\sum_{j} e_{i,j} = \sum_{i} e_{i,j} = 1$ and $V(a,b|x,y) = 0 \Rightarrow e_{x,a}e_{y,b} = 0$,
        \item and a $*$-homomorphism from $C(\mathbb{G}) \to \cl A$ taking $u_{i,j} \to e_{i,j}$ 
    \end{enumerate}
    such that $p(a,b|x,y) = s(e_{x,a}e_{y,b})$
\end{prop}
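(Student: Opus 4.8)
The plan is to recognize the given $p$ as a perfect quantum commuting strategy for the hardest game $\mathscr{G}_\Omega$ attached to the coherent configuration it induces, and then to read off the three required objects from \cref{main-res} and \cref{t21}. First I would let $\Omega$ be the partition of $W\times W$ by the equivalence relation $\sim_p$; by the earlier lemma showing that the relation of a coherent $qc$ correlation is a coherent configuration, $\Omega$ is a coherent configuration. Since $\sim_p$ visibly induces $\Omega$, the correlation $p$ lies in the semigroup $P_\Omega$, so $P_\Omega$ is nonempty and the bisynchronous transitive game $\mathscr{G}_\Omega$ (set up in the preamble via \cref{l37}) is defined. Its predicate satisfies $V_\Omega(a,b|x,y)=0$ exactly when $(x,y)\nsim_\Omega(a,b)$, i.e. when \emph{every} correlation in $P_\Omega$ vanishes there; in particular $V_\Omega(a,b|x,y)=0$ forces $p(a,b|x,y)=0$, so $p$ is a \emph{perfect} $qc$-strategy for $\mathscr{G}_\Omega$.

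Next I would extract the quantum permutation group. Because $\mathscr{G}_\Omega$ admits the perfect $qc$-strategy $p$, its game $\ast$-algebra $\cl A(\mathscr{G}_\Omega)$ admits a nonzero $\ast$-homomorphism into a $C^*$-algebra carrying a faithful tracial state (by \cref{t21}(c)), whence $C^*(\mathscr{G}_\Omega)$ is nontrivial; that is, $\mathscr{G}_\Omega$ has a perfect $C^*$-strategy. Since $\mathscr{G}_\Omega$ is bisynchronous transitive, \cref{main-res} gives that $\mathbb{G}:=(C^*(\mathscr{G}_\Omega),\Delta)$ is a quantum permutation group acting on $W$, with fundamental representation $[u_{i,j}]_{i,j\in W}$. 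To establish (i) I must check that the orbitals of $\mathbb{G}$ are exactly $\Omega$: if $(x,y)\nsim_\Omega(a,b)$ then $V_\Omega(a,b|x,y)=0$, so $u_{x,a}u_{y,b}=0$ by the defining relations of $\mathscr{G}_\Omega$; conversely, if $(x,y)\sim_\Omega(a,b)$ then $p(a,b|x,y)\neq 0$, and since $p(a,b|x,y)=\tau(\pi(u_{x,a}u_{y,b}))$ for the faithful tracial state $\tau$ of \cref{t21}(c), we get $u_{x,a}u_{y,b}\neq 0$ in $C^*(\mathscr{G}_\Omega)$. Thus $u_{x,a}u_{y,b}\neq 0$ if and only if $(x,y)\sim_\Omega(a,b)$, so the orbitals of $\mathbb{G}$ coincide with the coherent configuration induced by $\sim_p$.

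For (ii) and (iii) I would unpack the witness of \cref{t21}(c): a $C^*$-algebra $\cl B$ with faithful tracial state $\tau$ and a unital $\ast$-homomorphism $\pi:\cl A(\mathscr{G}_\Omega)\to\cl B$ with $p(a,b|x,y)=\tau(\pi(u_{x,a}u_{y,b}))$. Each $\pi(u_{i,j})$ is a projection of norm at most one satisfying Relations~\ref{alstatcon}, so the universal property of $C^*(\mathscr{G}_\Omega)=C(\mathbb{G})$ extends $\pi$ to a $\ast$-homomorphism on $C(\mathbb{G})$. Taking $\cl A$ to be the $C^*$-subalgebra of $\cl B$ generated by the image, $s:=\tau|_{\cl A}$ (still faithful and tracial), and $e_{i,j}:=\pi(u_{i,j})$, I obtain the map $C(\mathbb{G})\to\cl A$, $u_{i,j}\mapsto e_{i,j}$ of (iii). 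The $e_{i,j}$ are projections forming a magic unitary, so $\sum_j e_{i,j}=\sum_i e_{i,j}=1$; they inherit $V(a,b|x,y)=0\Rightarrow e_{x,a}e_{y,b}=0$ from $u_{x,a}u_{y,b}=0$; and $p(a,b|x,y)=\tau(\pi(u_{x,a}u_{y,b}))=s(e_{x,a}e_{y,b})$, which is (ii) and the final identity.

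The only genuinely delicate point is the orbital-matching in (i): the forward (vanishing) inclusion is immediate from the relations of $\mathscr{G}_\Omega$, but the reverse non-vanishing implication, $(x,y)\sim_\Omega(a,b)\Rightarrow u_{x,a}u_{y,b}\neq 0$, relies on the faithfulness of $\tau$ to pull $p(a,b|x,y)\neq 0$ back to nonvanishing of the product in $C^*(\mathscr{G}_\Omega)$. Everything else is bookkeeping: transporting the witness of \cref{t21}(c) through the universal property of $C(\mathbb{G})$ and restricting the state, neither of which should present difficulty.
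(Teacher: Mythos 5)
Your proof is correct and takes essentially the same route the paper intends: the paper treats this proposition as an immediate consequence of the preceding setup (the coherence lemma, the hardest game $\mathscr{G}_{\Omega}$ from \cref{l37}, \cref{main-res}, and \cref{t21}(c)), which is exactly the chain you spell out, including the orbital-matching via the trace identity. One minor note: the non-vanishing direction in (i) does not actually require faithfulness of $\tau$ --- $\tau\left(\pi(u_{x,a}u_{y,b})\right)\neq 0$ already forces $\pi(u_{x,a}u_{y,b})\neq 0$, hence $u_{x,a}u_{y,b}\neq 0$ in $C^*(\mathscr{G}_{\Omega})$, for any linear functional --- so that step is even more robust than you suggest.
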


The following corollary is merely a restatement of the previous result.

\begin{cor}
    Let $p$ be a coherent qc correlation. Then, there exists a quantum permutation group $\mathbb{G} \subseteq S_W^+$, with fundamental representation $[u_{i,j}]$ acting on $W$, whose orbitals are the same as the partitions induced by $\sim_p$ and a tracial state $s$ on $C^*(G)$, such that $p(a,b|x,y) = s(u_{x,a}u_{y,b})$.
\end{cor}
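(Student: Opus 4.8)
The plan is to read this corollary off the preceding Proposition by pulling back its tracial state along the supplied $*$-homomorphism; since the two statements carry essentially the same content, the only genuine task is to transport the state from $\cl A$ onto the algebra $C(\bb G)$ of the quantum permutation group.

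First I would invoke the preceding Proposition to produce, for the given coherent qc correlation $p$, a quantum permutation group $\bb G \subseteq S_W^+$ with fundamental representation $[u_{i,j}]$ whose orbitals coincide with the partition induced by $\sim_p$, together with a $C^*$-algebra $\cl A$ carrying a faithful tracial state $s$, projections $e_{i,j}\in\cl A$, and a unital $*$-homomorphism $\phi: C(\bb G)\to \cl A$ with $\phi(u_{i,j}) = e_{i,j}$, such that $p(a,b|x,y) = s(e_{x,a}e_{y,b})$. Note that $\bb G \subseteq S_W^+$ is automatic from the phrase ``quantum permutation group'', which by definition supplies a surjection $C(S_W^+)\twoheadrightarrow C(\bb G)$.

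Next I would set $s' := s\circ\phi$ on $C(\bb G)$ and verify that $s'$ is a tracial state. It is a state because $\phi$ is unital, so $s'(1) = s(1) = 1$, and $s'(c^*c) = s(\phi(c)^*\phi(c)) \ge 0$; it is tracial because $s$ is tracial and $\phi$ is multiplicative, giving $s'(cd) = s(\phi(c)\phi(d)) = s(\phi(d)\phi(c)) = s'(dc)$. Finally, using that $\phi$ is a homomorphism,
$$p(a,b|x,y) = s(e_{x,a}e_{y,b}) = s(\phi(u_{x,a}u_{y,b})) = s'(u_{x,a}u_{y,b}),$$
which is exactly the asserted formula (with $s'$ playing the role of the state $s$ named in the corollary).

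There is no real obstacle here, as the statement is a faithful restatement of the Proposition. The one point worth flagging is that faithfulness of the state is \emph{not} inherited by $s'$, since $\phi$ need not be injective; but the corollary asks only for a \emph{tracial state} on $C(\bb G)$, so dropping faithfulness is harmless and indeed unavoidable.
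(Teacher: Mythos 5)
Your proposal is correct and follows exactly the paper's own argument: the paper likewise composes the faithful tracial state $s$ on $\cl A$ with the $*$-homomorphism supplied by the preceding Proposition to obtain the tracial state on the algebra of the quantum permutation group. Your explicit verification of the state and trace properties, and your remark that faithfulness need not survive the pullback (and is not claimed), simply spell out what the paper leaves as ``easy to verify.''
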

\begin{proof}
    Let $\pi$ be the $*$-homomorphism from $C^*(G)$ to $A$ defined by $u_{i,j} \mapsto e_{i,j}$. It is now easy to verify that $s \circ \pi$ is a tracial state on $C^*(\mathscr{G}_{\Omega_p})$ satisfying the required conditions.
\end{proof}

\begin{theorem}\label{thm:coh-corr-prop-dist}
    Let $p$ be a coherent local correlation. Then, there exists a group $G \subseteq S_n$ acting on $W$, whose orbitals are the same as the partitions induced by $\sim_p$, and a probability distribution $\pi$ on $G$ such that \begin{align*}
        p(a,b|x,y) = \sum_{\substack{\{g \in G\colon g(x) = a \\ \& \ g(y) = b\}}} \pi(g).
    \end{align*}
\end{theorem}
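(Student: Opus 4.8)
The plan is to first decompose a coherent local correlation into \emph{permutation} strategies, and then to exhibit the right group containing their supports. Since $p\in C_{loc}$, by definition it is a convex combination of deterministic strategies, say $p=\sum_{\lambda}t_\lambda D_\lambda$ with $t_\lambda>0$, where $D_\lambda$ is induced by a pair of functions $(f_A^\lambda,f_B^\lambda)$ with $f_A^\lambda,f_B^\lambda:W\to W$. Because $p$ is synchronous, evaluating at equal questions gives $\sum_\lambda t_\lambda[f_A^\lambda(x)=a][f_B^\lambda(x)=b]=0$ whenever $a\neq b$; as every summand is nonnegative, each must vanish, which forces $f_A^\lambda=f_B^\lambda=:g_\lambda$ for each $\lambda$ in the support. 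Bisynchronicity then yields, for $x\neq y$, that $\sum_\lambda t_\lambda[g_\lambda(x)=a][g_\lambda(y)=a]=0$, so each $g_\lambda$ is injective and hence a permutation of the finite set $W$. Collecting equal functions, I obtain a probability distribution $\pi$ on $S_W$ with $p(a,b|x,y)=\sum_{g:\,g(x)=a,\,g(y)=b}\pi(g)$; write $H=\{g\in S_W:\pi(g)>0\}$ for its support.

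The remaining task is to produce a group $G$ whose orbitals coincide with the partition of $W\times W$ induced by $\sim_p$, while keeping $\mathrm{supp}(\pi)\subseteq G$. By the formula above, $(a,b)\sim_p(x,y)$ holds precisely when some $g\in H$ satisfies $g(x)=a$ and $g(y)=b$, so the orbital relation of $H$ itself already equals $\sim_p$. The natural candidate is $G:=\langle H\rangle$. Since $H\subseteq G$, the orbital partition of $G$ is at least as coarse as $\sim_p$, and the content is the reverse inclusion, i.e.\ that passing to the generated group merges no orbitals. For this I would introduce $K:=\{g\in S_W:(g(x),g(y))\sim_p(x,y)\text{ for all }x,y\in W\}$ and show $K$ is a subgroup of $S_W$: reflexivity of $\sim_p$ gives $\mathrm{id}_W\in K$; transitivity gives closure under composition (if $g_1,g_2\in K$ then $(g_1g_2(x),g_1g_2(y))\sim_p(g_2(x),g_2(y))\sim_p(x,y)$); and symmetry gives closure under inverses (substitute $g^{-1}(x),g^{-1}(y)$ into the defining relation of $g$ and flip). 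This is exactly where coherence, namely that $\sim_p$ is an equivalence relation, is used in full force. Since $H\subseteq K$ by the definition of $\sim_p$, we get $\langle H\rangle\subseteq K$, so every $g\in G$ preserves $\sim_p$ and the orbital relation of $G$ is contained in $\sim_p$; combined with the coarseness bound, the orbitals of $G$ are precisely the $\sim_p$-classes.

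Finally, I would extend $\pi$ to all of $G$ by setting $\pi(g)=0$ for $g\in G\setminus H$; this keeps $\pi$ a probability distribution on $G$ and leaves the summation unchanged, yielding $p(a,b|x,y)=\sum_{g\in G:\,g(x)=a,\,g(y)=b}\pi(g)$, as required. I expect the second paragraph to be the main obstacle: one must guarantee that the group generated by the support does not coarsen the orbital partition, and the only reason this succeeds is the transitivity and symmetry packaged into the definition of a coherent correlation. Without the equivalence-relation hypothesis the generated group could easily acquire strictly larger orbitals than $\sim_p$, so the coherence assumption is doing essential work precisely at this step.
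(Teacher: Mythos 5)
Your proof is correct, but it takes a genuinely different route from the paper's. You decompose the local correlation directly into deterministic strategies, use synchronicity to force Alice's and Bob's functions to agree, bisynchronicity to force them to be permutations, and then an elementary group-theoretic argument: the set $K=\{g\in S_W:(g(x),g(y))\sim_p(x,y)\ \forall x,y\}$ is a subgroup (reflexivity, transitivity, symmetry of $\sim_p$ giving identity, products, inverses respectively), so the group $G=\langle H\rangle$ generated by the support sits inside $K$ and its orbitals coincide with the $\sim_p$-classes. The paper instead works operator-algebraically: it takes the game algebra $C^*(\mathscr{G}_{\Omega_p})$, quotients by the commutativity relations to get a commutative quantum permutation group, identifies that quotient with $C(G)$ for a classical permutation group $G\subseteq S_W$, and then uses the correspondence between states on $C(G)$ and probability distributions on the finite group $G$ to extract $\pi$. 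Your approach buys elementarity and transparency — it needs nothing beyond the definition of $C_{loc}$ and makes completely explicit where coherence is used (your subgroup argument for $K$ is precisely the point the paper glosses over when it asserts that the commutative quotient is of the form $C(G)$ for a \emph{group} $G$ and that its orbitals match $\sim_p$). The paper's approach buys uniformity: the same template (game algebra, quotient, states) handles the quantum commuting case, and it extends to transitive (non-coherent) correlations in the following theorem, where the supports generate only a semigroup of functions and your permutation/inverse arguments would fail. Note also that the two constructions produce different groups in general — the paper's $G$ is essentially your maximal group $K$, while yours is the possibly smaller $\langle H\rangle$; both satisfy the theorem since the statement only constrains the orbitals and allows $\pi$ to vanish on part of $G$.
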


\begin{proof}
    Since $p$ is a coherent correlation, we see that there exists a state $s$ on $C^*(\mathscr{G}_{\Omega_p})$ such that $p(a,b|x,y) = s(u_{x,a}u_{y,b})$. Let $\widetilde{C^*}(G_{\Omega_p})$ be the quotient of $C^*(G)$ by the relations $u_{x,a}u_{y,b} = u_{y,b}u_{x,a}$ for all $a,b,x,y\in W$. Since $p$ is a local correlation, $\widetilde{C^*}(\mathscr{G}_{\Omega_p})$ is nontrivial. Moreover, $\widetilde{C^*}(\mathscr{G}_{\Omega_p})$ is a commutative $C^*$-algebra, and also a quantum permutation group. Hence, there exists a group $G \subseteq S_W$, such that $\widetilde{C^*}(\mathscr{G}_{\Omega_p}) = C(G)$. It is easy to see that the orbitals of $G$ are the same as the partitions induced by $\sim_p$. 

    Since $p$ is a local correlation, there exists a commutative $C^*$-algebra $A$ equipped a faithful tracial state $s$, projections $e_{i,j} \in A$ for $i,j \in W$, such that $\sum_{j} e_{i,j} = \sum_{i} e_{i,j} = 1$, and a $*$-homomorphism $\phi$ from $C^*(G) \to A$ taking $ u_{i,j} \to e_{i,j}$, where $[u_{i,j}]_{i,j}$ is the fundamental representation of $\mathbb{G}$, such that $p(a,b|x,y) = s(e_{x,a}e_{y,b})$. Hence, $u_{i,j} \to e_{i,j}$, defines a $*$-homomorphism $\widetilde{\phi}$, which implies that $s \circ \widetilde{\phi}$ is a state on $C(G)$. We know that states on $C(G)$ correspond to regular complete probability measures on $G$. Since $G$ is finite, the state $s\circ \widetilde{\phi}$ corresponds to a probability distribution $\pi$ on $G$. 

    Moreover, we have that \begin{align*}
        p(a,b|x,y)  = s(u_{x,a}u_{y,b}) = \int_G u_{x,a}u_{y,b} d\pi  = \sum_{\substack{g \in G\colon g(x) = a \\ \& \ g(y) = b}} \pi(g),
    \end{align*} as needed.
\end{proof}

As a corollary we get:

\begin{cor}
    Let $p$ be a coherent quantum commuting correlation. If the coherent configuration of partitions induced by $\sim_p$ is non-Schurian, $p$ is nonlocal.
\end{cor}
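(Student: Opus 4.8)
The plan is to argue by contraposition: I will assume that $p$ is local and deduce that the coherent configuration induced by $\sim_p$ must be Schurian, contradicting the hypothesis. This reduces the corollary to a direct application of \cref{thm:coh-corr-prop-dist} together with the definition of a Schurian coherent configuration recalled just after \cref{def:coherent-config}.

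First I would invoke \cref{thm:coh-corr-prop-dist}. Since $p$ is coherent by hypothesis and is assumed to be local, that theorem furnishes a permutation group $G \subseteq S_W$ acting on $W$ whose orbitals coincide with the partition of $W \times W$ induced by the relation $\sim_p$. The essential point to extract is not merely that $p$ admits a commutative realization, but that \cref{thm:coh-corr-prop-dist} produces an honest finite permutation group whose combinatorial data---specifically its orbitals on $W \times W$---matches exactly the coherent configuration attached to $p$.

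Next I would appeal to the definition of Schurian: a coherent configuration on $W$ is Schurian precisely when it arises as the partition of $W \times W$ into the orbits of some permutation group $G \subseteq S_W$ acting diagonally via $g(x,y) = (gx, gy)$. The group $G$ produced in the previous step witnesses exactly this, so the coherent configuration induced by $\sim_p$ is Schurian. This contradicts the standing assumption that it is non-Schurian, and hence $p$ cannot be local; that is, $p$ is nonlocal, as claimed.

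I do not anticipate a genuine obstacle, since the corollary is essentially the contrapositive of \cref{thm:coh-corr-prop-dist} repackaged through the definition of ``Schurian.'' The only point meriting mild care is to confirm that the orbitals supplied by \cref{thm:coh-corr-prop-dist} are genuinely the orbits of the group acting on \emph{pairs} $W \times W$, and not merely its orbits on $W$, so that the identification with the full coherent configuration is legitimate. This is, however, exactly the conclusion that the statement of \cref{thm:coh-corr-prop-dist} already delivers, so the argument closes immediately.
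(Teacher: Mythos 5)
Your proof is correct and is essentially identical to the paper's own argument: both proceed by contraposition, invoking \cref{thm:coh-corr-prop-dist} to produce a permutation group $G \subseteq S_W$ whose orbitals realize the partition induced by $\sim_p$, which makes the coherent configuration Schurian by definition.
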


\begin{proof} 
    Let us assume that $p$ is a local coherent correlation such that the partitions induced by $\sim_p$ form a non-Schurian coherent configuration. From \cref{thm:coh-corr-prop-dist} it follows that there exists a group $G \subseteq S_W$, whose orbitals are the same as the partitions induced by $\sim_p$. This proves the contrapositive of the statement that was to be proved.
\end{proof}

We should note that when the coherent configuration of partitions induced by $\sim_p$ is Schurian, the correlation can either be local or nonlocal.

\begin{example}
    Let us consider a classical group $G \subseteq S_n$. Then, the correlation induced by the Haar state of $G$, and its fundamental representation $p_G$ is local. It is easy to see that $\sim_{p_G}$ is an equivalence relation, and partitions induced by it are precisely the orbitals of $G$, which is a coherent configuration.
\end{example}

\begin{example}
    Let us take any nonlocal correlation $q$ and consider $\lambda q + (1-\lambda) p_{S_n}$ for some $0 < \lambda < 1$, where $p_{S_n}$, is the correlation induced by the Haar state and the standard representation of $S_n$. Then, clearly $p(a,b|x,y) = 0$ if and only if $\delta_{x,y} \neq \delta_{a,b}$. Hence, $p$ is nonlocal and $\sim_p$ partitions $[n] \times [n]$ into the same partitions as the orbitals of $S_n$.  
\end{example}

\subsection{Transitive correlations}

Now, we want to establish a similar set of results for transitive correlations. Proceeding in the same fashion as the case of coherent correlations, we can prove the following proposition.

\begin{prop}
    Let $\sim$ be a transitive relation on $W \times W$ such that
    \begin{enumerate}[label = \roman*.]
        \item $(a,b)\not \sim (x, x)$ if $a\neq b$ 
        \item $(a,b) \sim (x,y)$ if and only if $(b,a) \sim (y,x)$
    \end{enumerate}
    Then, the set $P_\sim$ of all transitive correlations $p$ such that $\sim \ = \ \sim_p$, if it is non-empty, forms a semigroup.
\end{prop}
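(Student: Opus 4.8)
The plan is to follow the template of the coherent case. Since the product of correlations is associative and $C_{ns}$ is closed under it by \cref{lem:product-corr-type}, the only substantive point is that $P_\sim$ is closed under the product. So I would fix $p,p'\in P_\sim$ and show $pp'\in P_\sim$. By \cref{lem:product-corr-type} we already have $pp'\in C_{ns}$; and since a correlation is transitive precisely when its induced relation is transitive (unwinding the definition of a transitive correlation is exactly the transitivity of $\sim_p$), it suffices to prove the single identity $\sim_{pp'}=\sim$: the hypothesis that $\sim$ is transitive then makes $pp'$ a transitive correlation, and $\sim_{pp'}=\sim$ places it in $P_\sim$. Thus everything reduces to showing $\sim_{pp'}=\sim$.

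The forward inclusion $\sim_{pp'}\subseteq\sim$ is the easy half. If $(pp')(a,b|x,y)\neq 0$ then, all summands in \cref{lem:product-corr-type} being nonnegative, some $(r,s)$ has $p(a,b|r,s)p'(r,s|x,y)\neq 0$, whence $(a,b)\sim(r,s)$ and $(r,s)\sim(x,y)$; transitivity of $\sim$ gives $(a,b)\sim(x,y)$. This is the argument of \cref{prop:product-is-perfect} read at the level of relations, and uses only transitivity of $\sim$.

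The reverse inclusion $\sim\subseteq\sim_{pp'}$ is the crux. Given $(a,b)\sim(x,y)$, I must produce an intermediate pair $(r,s)$ with $(a,b)\sim(r,s)\sim(x,y)$, since then $(pp')(a,b|x,y)\ge p(a,b|r,s)p'(r,s|x,y)>0$. Writing $M$ for the $0/1$ support matrix of $\sim$, transitivity is exactly $M^2\le M$, and what I need is the reverse Boolean inequality $M\le M^2$, i.e. that $\sim$ is idempotent. The natural route is to show $\sim$ is reflexive, for then $(r,s)=(x,y)$ works. Here I would bring in the nonsignalling conditions: choosing the diagonal question $y'=x$ and using (i) (synchronicity), $p_A(a|x)=\sum_{b'}p(a,b'|x,x)=p(a,a|x,x)$, whereas $p_A(a|x)=\sum_{b'}p(a,b'|x,y)\ge p(a,b|x,y)>0$; hence $(a,a)\sim(x,x)$, and symmetrically $(b,b)\sim(y,y)$. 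Combining this with (ii), finiteness, and the normalization $\sum_{a,b}p(a,b|x,y)=1$ (so every pair has a predecessor under $\sim$, hence a recurrent backward ancestor), is the machinery I would try to push to reflexivity.

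The main obstacle is precisely this reflexivity/idempotency step. In the coherent case $\sim$ is an equivalence relation, so reflexivity is free and $M=M^2$ is automatic; here only transitivity, (i) and (ii) are assumed, and I do not expect these alone to force the diagonal relations $(x,y)\sim(x,y)$. Indeed, a synchronous classical correlation built from any composition-closed family of maps $f\colon W\to W$ automatically satisfies (i) and (ii) and has transitive $\sim_p$, yet $\sim_p$ is reflexive only when for every pair $(x,y)$ some map in the family fixes both $x$ and $y$; when this fails one gets $M\neq M^2$, so that squaring such a $p$ strictly shrinks the support and closure of $P_\sim$ can break. Consequently the real content of the proposition lives in establishing reflexivity of $\sim$, and I expect the argument to go through only once reflexivity is available — equivalently, that $\sim$ contains the diagonal, or that $P_\sim$ contains a correlation with $p(x,y|x,y)>0$ for all $x,y$ — whether as an additional hypothesis or as something extracted from a stronger use of (i)–(ii). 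That is the step I would expect to be hard and the place where the proof must be pinned down.
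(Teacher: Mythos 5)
Your diagnosis is correct, and the obstacle you isolate is not a defect of your argument but of the statement itself: with only transitivity and hypotheses (i)--(ii), the proposition is false as written. Note that the paper offers no actual proof here --- it only says to proceed ``in the same fashion as the case of coherent correlations'' --- and that template breaks at exactly the step you flag. In the coherent case the induced relation is genuinely symmetric ($(a,b)\sim(x,y)\Rightarrow(x,y)\sim(a,b)$), which together with normalization and transitivity yields reflexivity, so the intermediate pair $(r,s)=(x,y)$ always witnesses the reverse inclusion $\sim\,\subseteq\,\sim_{pp'}$. Hypothesis (ii) is only invariance under swapping both coordinates simultaneously; it gives no reflexivity, and the Boolean inequality $M\le M^2$ that your reverse inclusion needs can genuinely fail.

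Here is an explicit counterexample of the kind you describe. Let $W=\{1,2,3\}$ and $f(1)=2$, $f(2)=3$, $f(3)=3$, so $f^3=f^2\neq f$ and $F=\{f,f^2\}$ is closed under composition. Put $p=\tfrac12 p_f+\tfrac12 p_{f^2}$, where $p_g(a,b|x,y)=\delta_{a,g(x)}\delta_{b,g(y)}$. Then $p$ is a local synchronous correlation, $\sim_p$ is transitive (composition-closedness of $F$) and satisfies (i) and (ii), so $p\in P_\sim$ for $\sim\,:=\,\sim_p$. But for each $(x,y)$ the support of $p(\cdot,\cdot|x,y)$ is $\{(f(x),f(y)),(f^2(x),f^2(y))\}$, and applying $p$ once more sends both of these pairs to $(f^2(x),f^2(y))$ because $f^3=f^2$; a direct computation gives $pp=p_{f^2}$. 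Since $(f(1),f(2))=(2,3)\neq(3,3)=(f^2(1),f^2(2))$, the support of $pp$ is strictly smaller than $\sim$, so $pp\notin P_\sim$ and $P_\sim$ is not closed under the product of \cref{lem:product-corr-type}. The proposition becomes true (and your two inclusions then constitute a complete proof) once one adds the hypothesis that $\sim$ equals its relational square --- for instance, reflexivity of $\sim$ suffices.

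Two smaller points. First, a correction to your final paragraph: failure of reflexivity does not by itself force $M\neq M^2$. If $F$ consists of a single constant map, the induced relation is non-reflexive yet idempotent, and $P_\sim$ is closed; the exact condition for closure is idempotency $M=M^2$, which is strictly weaker than reflexivity. Second, the statement that does survive without added hypotheses --- and the one the paper actually relies on afterwards via the game $\mathscr G_{\sim}$ --- is \cref{prop:product-is-perfect}: the set of correlations whose support is merely \emph{contained} in $\sim$ is a semigroup. It is the requirement of exact equality of supports that needs idempotency, and your partial result $(a,b)\sim(x,y)\Rightarrow(a,a)\sim(x,x)$ from nonsignalling, while correct, cannot supply it.
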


Given such a transitive relation, it is easy to see that the hardest nonlocal game that can be won using these strategies is a synchronous transitive game, say $\mathscr{G}_{\sim}$. Since $\mathscr{G}_{\sim}$ is a synchronous transitive game, the $C^*$-algebra of the game $C^*(\mathscr{G}_{\sim})$ is a compact quantum semigroup. Hence, proceeding in a similar manner as we did in the case of coherent correlations, the following results follow.

\begin{prop}
    Let $p$ be a transitive qc correlation. Then, there exists a compact quantum semigroup $\mathbb{G}$ acting on $W$, generated by projections $u_{i,j}$ such that $\sum_{j}u_{i,j} = 1$ and a tracial state $s$ on $C(\mathbb{G})$ such that
    such that $p(a,b|x,y) = s(u_{x,a}u_{y,b})$.
\end{prop}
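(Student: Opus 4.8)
The plan is to reproduce the coherent-correlation argument, but with \cite[Theorem III.1]{soltan} replacing \cref{main-res}, so that the output is a compact quantum \emph{semigroup} rather than a quantum permutation group. Starting from the transitive qc correlation $p$, I would first pass to its induced relation $\sim_p$ on $W\times W$ (given by $(a,b)\sim_p(x,y)$ iff $p(a,b|x,y)\neq 0$), which is transitive, and form the hardest game $\mathscr G_{\sim_p}=(W,V)$ with $V(a,b|x,y)=0$ exactly when $p(a,b|x,y)=0$. By \cref{l37} this game is transitive, and since $p$ is synchronous ($V(a,b|x,x)=0$ for $a\neq b$) it is in fact a synchronous transitive game. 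By construction $p$ is a perfect strategy for $\mathscr G_{\sim_p}$, and being a qc correlation it is a perfect qc strategy; in particular $\mathcal A(\mathscr G_{\sim_p})$ is nontrivial, and since (by \cref{t21}(c) together with Gelfand--Naimark) it admits a nonzero representation into a $C^*$-algebra, $\mathscr G_{\sim_p}$ has a perfect $C^*$-strategy. Hence $C^*(\mathscr G_{\sim_p})$ exists, and by \cite[Theorem III.1]{soltan} the pair $\mathbb G:=(C^*(\mathscr G_{\sim_p}),\Delta)$ is a compact quantum semigroup acting on $W$, generated by the projections $u_{i,j}$ with $\sum_j u_{i,j}=1$ (only synchronicity, not bisynchronicity, is invoked, so we obtain a semigroup rather than a group).

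To produce the tracial state, I would apply \cref{t21}(c) to the perfect qc strategy $p$: it yields a $C^*$-algebra $\mathcal B$ with a faithful tracial state $\tau$ and a unital $*$-homomorphism $\pi:\mathcal A(\mathscr G_{\sim_p})\to\mathcal B$ with $p(a,b|x,y)=\tau(\pi(u_{x,a}u_{y,b}))$. The images $\pi(u_{x,a})$ are projections satisfying Relations~\ref{alstatcon}, so by the universal property of $C^*(\mathscr G_{\sim_p})$ the map $\pi$ extends to a $*$-homomorphism $\widetilde\pi:C(\mathbb G)\to\mathcal B$ with $\widetilde\pi(u_{i,j})=\pi(u_{i,j})$. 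Setting $s:=\tau\circ\widetilde\pi$, one checks immediately that $s$ is a state on $C(\mathbb G)$, that it is tracial since $s(ab)=\tau(\widetilde\pi(a)\widetilde\pi(b))=\tau(\widetilde\pi(b)\widetilde\pi(a))=s(ba)$ by traciality of $\tau$, and that
\[
s(u_{x,a}u_{y,b})=\tau(\pi(u_{x,a}u_{y,b}))=p(a,b|x,y),
\]
which is the desired conclusion.

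I expect the main obstacle to be the bookkeeping needed to guarantee that the hypotheses of \cite[Theorem III.1]{soltan} actually hold, namely that $\mathscr G_{\sim_p}$ is a genuinely \emph{synchronous} transitive game possessing a perfect $C^*$-strategy: one must use synchronicity of $p$ to force $V(a,b|x,x)=0$ for $a\neq b$, and argue that the qc strategy $p$ upgrades to a $C^*$-strategy (equivalently, that $C^*(\mathscr G_{\sim_p})$ is nontrivial). Once these are in place, the remaining steps --- factoring $\pi$ through the universal algebra and checking that the pullback $\tau\circ\widetilde\pi$ of a faithful tracial state is again a tracial state realizing $p$ --- are routine, exactly as in the coherent case treated above.
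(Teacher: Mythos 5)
Your proposal is correct and takes essentially the same route as the paper: the paper's own proof is the one-line observation that $C^*(\mathscr G_{\sim_p})$ is a compact quantum semigroup (via \cite[Theorem III.1]{soltan} applied to the hardest game $\mathscr G_{\sim_p}$), with the factorization through the universal $C^*$-algebra and the pullback tracial state $s=\tau\circ\widetilde\pi$ left implicit as ``proceeding as in the coherent case.'' Your write-up simply spells out those same steps --- \cref{l37}, \cref{t21}(c), nontriviality of $C^*(\mathscr G_{\sim_p})$, and the universal property --- and makes explicit the same synchronicity assumption on $p$ that the paper uses implicitly.
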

\begin{proof}
    Follows from the fact that $C^*(\mathscr{G}_{\sim_p})$ is a compact quantum semigroup.
\end{proof}

\begin{theorem}
    Let $p$ be a local transitive correlation. Then, there exists a semigroup $G$ of functions from $W$ to $W$, and a probability distribution $\pi$ on $G$ such that \begin{equation*}
        p(a,b|x,y) = \sum_{\substack{f \in G: f(x) = a \\ \& \ f(y) = b}} \pi(f).
    \end{equation*}
\end{theorem}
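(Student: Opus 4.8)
The plan is to follow the template of the coherent case, \cref{thm:coh-corr-prop-dist}, with the quantum permutation group $C^*(\mathscr G_{\Omega_p})$ replaced by the compact quantum semigroup $C^*(\mathscr G_{\sim_p})$ and, correspondingly, the permutation group replaced by a semigroup of self-maps of $W$. First I would invoke the preceding proposition: a local correlation is in particular a quantum commuting correlation, so there is a tracial state $s$ on $C^*(\mathscr G_{\sim_p})$ with $p(a,b|x,y) = s(u_{x,a}u_{y,b})$, where the $u_{x,a}$ are the canonical projections satisfying $\sum_{a}u_{x,a} = 1$.

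Next I would pass to the commutative quotient $\widetilde{C^*}(\mathscr G_{\sim_p})$ of $C^*(\mathscr G_{\sim_p})$ obtained by imposing the relations $u_{x,a}u_{y,b} = u_{y,b}u_{x,a}$. Exactly as in the coherent argument, locality of $p$ guarantees that this quotient is nontrivial: a local realization of $p$ yields a commutative $C^*$-algebra $A$ with a faithful tracial state, together with a $*$-homomorphism $C^*(\mathscr G_{\sim_p}) \to A$ sending $u_{x,a}$ to projections $e_{x,a}$ and realizing $p(a,b|x,y) = s(e_{x,a}e_{y,b})$; since $A$ is commutative, this map factors through $\widetilde{C^*}(\mathscr G_{\sim_p})$, which is therefore nonzero.

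The crucial step, and the only genuine departure from the coherent case, is to identify $\widetilde{C^*}(\mathscr G_{\sim_p})$ with $C(G)$ for a \emph{semigroup} $G$. Being commutative and finite-dimensional, the quotient is $C(G)$ for a finite set $G$ of characters. Each character $\chi$ sends every $u_{x,a}$ to $0$ or $1$, and the relation $\sum_{a}u_{x,a}=1$ forces exactly one $a$ per $x$ to be sent to $1$; hence $\chi$ determines a function $f_\chi:W\to W$, and the relations $u_{x,a}u_{y,b}=0$ whenever $V(a,b|x,y)=0$ say precisely that $f_\chi$ is a perfect deterministic strategy. Here, unlike the bisynchronous setting where the magic unitary forces bijections and hence a group, the merely synchronous relations only force $f_\chi$ to be a self-map, so $G$ is a set of functions rather than permutations. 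I would then check that the comultiplication $\Delta(u_{x,a})=\sum_{r}u_{x,r}\otimes u_{r,a}$ descends to the quotient (it does, since each commutator is sent by $\Delta$ into elements annihilated by $q\otimes q$ for the quotient map $q$) and that under the induced classical semigroup structure the convolution of characters corresponds to composition of the associated functions; equivalently, one may simply invoke \cref{prop:product-is-perfect} to conclude that the perfect deterministic strategies are closed under composition. Either way, $G$ is a semigroup of self-maps of $W$ under composition.

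Finally, composing $s$ with the $*$-homomorphism $\widetilde\phi:C(G)\to A$ induced from the local realization gives a state on $C(G)$; since $G$ is finite, this state is a probability distribution $\pi$ on $G$. Evaluating, $u_{x,a}$ viewed as a function on $G$ takes $f$ to $1$ exactly when $f(x)=a$, so
\[
p(a,b|x,y) = s(e_{x,a}e_{y,b}) = \int_G u_{x,a}u_{y,b}\, d\pi = \sum_{\substack{f\in G:\ f(x)=a \\ \&\ f(y)=b}} \pi(f),
\]
as claimed. I expect the main obstacle to be the third step: one must verify carefully both that the commutative quotient is nontrivial and that its spectrum is genuinely a semigroup of functions closed under composition, which is where transitivity of $p$ (through \cref{prop:product-is-perfect}) does the essential work.
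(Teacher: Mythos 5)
Your proposal is correct and follows essentially the same route as the paper's proof: form the universal commutative quotient $\widetilde{C^*}(\mathscr G_{\sim_p})$, use the local realization of $p$ (via a commutative $C^*$-algebra with a faithful trace) to see that this quotient is nontrivial, identify it with $C(G)$ for a finite semigroup $G$ of functions from $W$ to $W$, and pull the state back to a probability distribution $\pi$ on $G$. The only difference is expository: where the paper appeals tersely to the duality between commutative compact quantum semigroups and classical semigroups acting on $W$, you spell out the Gelfand spectrum argument (characters as perfect deterministic strategies, with composition coming from the descended comultiplication or from \cref{prop:product-is-perfect}), filling in details the paper leaves implicit.
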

\begin{proof}
    Let us take $\widetilde{C^*}(G_{\sim_p})$ to be the universal $C^*$-algebra generated by projections $u_{i,j}$ such that 
    \begin{enumerate}
        \item[(a)] $\sum_j u_{i,j} = 1$
        \item[(b)] $u_{x,a}u_{y,b} = 0$ if $p(a,b|x,y) = 0$ 
        \item[(c)] $u_{x,a}u_{y,b} = u_{y,b}u_{x,a}$
    \end{enumerate}
    Then, $\widetilde{C^*}(\mathscr{G}_{\sim_p})$ is a compact quantum semigroup acting on $W$. Since $p$ is a local transitive correlation, $\widetilde{C^*}(\mathscr{G}_{\sim_p})$ is non-trivial. Hence, there exists a semigroup $G$ such that $\widetilde{C^*}(\mathscr{G}_{\sim_p}) = C(G)$. It is easy to see that $G$ is a semigroup of functions from $W$ to $W$, since the action of $G$ on $W$ can be constructed from the action of $\widetilde{C^*}(\mathscr{G}_{\sim_p})$ on $\mathbb{C}^W$. Since $p$ is a perfect local strategy of $\mathscr{G}_{\sim_p}$, it corresponds to a state $s$ on $C(G)$, i.e., a probability distribution $\pi$ on $G$ such that \begin{align*}
        p(a,b|x,y) = s(u_{x,a}u_{yb}) = \int_G u_{x,a}u_{y,b} d\pi \quad = \sum_{\substack{f \in G: f(x) = a \\ \& \ f(y) = b}} \pi(f),
    \end{align*} as needed.
\end{proof}

\paragraph*{\textbf{\emph{Acknowledgements}}}
Authors PK and DR are supported by the Carlsberg Foundation Young Researcher Fellowship CF21-0682 -- ``Quantum Graph Theory". JP was partially supported by NSF Award No. 2210399 and AFOSR Award No. FA9550-20-1-0067.

\bibliography{bibfile.bib}
\end{document}